\documentclass[11pt,english]{article}
\usepackage[sc]{mathpazo}
\usepackage{geometry}
\usepackage{color}
\usepackage[dvipsnames]{xcolor}
\definecolor{wxycolor}{RGB}{230,120,0}
\usepackage{array}
\usepackage{bbding}
\usepackage{multirow}
\usepackage[fleqn]{amsmath}
\usepackage{autobreak}
\usepackage{amssymb}
\usepackage{amsthm}
\usepackage{graphicx}
\usepackage{natbib}
\usepackage{lscape}
\usepackage{comment}
\usepackage{bm}
\usepackage[title]{appendix}
\usepackage{longtable}
\usepackage{mathtools}
\usepackage{chngcntr}
\usepackage{authblk}
\usepackage{enumerate}
\usepackage{subcaption} 
\usepackage{float} 
\makeatletter
\allowdisplaybreaks

\usepackage[colorlinks,
            linkcolor=blue,
            anchorcolor=blue,
            citecolor=blue
            ]{hyperref}

\usepackage{babel}
\usepackage{enumitem}

\newtheorem{Lem}{Lemma}
\newtheorem{prop}{Proposition}

\newtheorem{assum}{Assumption}

\begin{document}

\nocite{*}
\title{
Sharing economy in the era of full automation: Evidence from autonomous vehicle on-demand mobility services 
}
\author{Authors' names blinded for peer review}
\author{Xiaoyan Wang\textsuperscript{a}\hspace{1em}  Kenan Zhang\textsuperscript{a,}\footnote{Corresponding author. E-mail address: \textcolor{blue}{kenan.zhang@epfl.ch}.}\hspace{1em}Yaochen Ma\textsuperscript{b}}

\date{}
\maketitle 

\begin{abstract}

The digital age has facilitated the sharing of underutilized assets. This paper focuses on privately owned autonomous vehicles (AVs), a unique class of robots that can move independently and provide transportation services. When not in personal use, private AV owners can lease their vehicles to a platform that operates an on-demand mobility service (MoD). We refer to this service as AV crowdsourcing, and develop a time-expanded network flow model that captures temporal and spatial heterogeneity in AV usage of both owners and passengers while preserving analytical tractability. 
We analyze the conditions under which AV crowdsourcing reduces MoD operating costs and identify their key factors, namely, the complementarity of the mobility pattern between AV owners and MoD passengers, the slack time reserved by vehicle owners, 
and the vehicle repositioning distance.  
A case study of Chicago further reveals substantial spatiotemporal heterogeneity in optimal prices and service quality. The results demonstrate how centralized dispatching can simultaneously fulfill the high demand in downtown areas while maintaining relatively high service quality in peripheral regions. 
Our findings provide insights into how supply heterogeneity and market conditions jointly shape the performance of AV crowdsourcing systems that leverage the underutilized private robotic assets. 

\par
\hfill\break%
\noindent\textit{Keywords}: Autonomous vehicles; On-demand mobility services; Crowdsourcing; Resource sharing; Network optimization.
\end{abstract}

\section{Introduction}

On-demand sharing economy has experienced rapid growth in recent years and has reshaped a variety of industries. Enabled by digital technologies, it improves the utilization of underused resources, generates new economic opportunities, and expands access to services. On-demand sharing services typically rely on dedicated platforms to coordinate and manage service operations. These operations are inherently spatiotemporal, as shared resources are allocated across locations and their usage must be scheduled to avoid conflicts. A prominent application domain is on-demand mobility (MoD) services, which have been expanded rapidly worldwide over the past decade. In 2024, Uber operated in more than 15,000 cities worldwide \citep{Uber2024AnnualReport}, while Uber, Lyft, and DiDi collectively completed more than 28 billion trips \citep{HALL2026104221}.

Advances in artificial intelligence have contributed to the development of autonomous vehicles (AVs), which are widely regarded as a promising future mobility solution~\citep{hu2025multimodal,LIN2026104762}, owing to their potential to enhance traffic efficiency, improve safety and reduce human labor~\citep{Othman2022}. 
As a particular application, shared autonomous vehicles, also known as robo-taxis, are being deployed in several countries. 
For example, Baidu Apollo Go has deployed robo-taxis in several Chinese cities, including Wuhan and Beijing~\citep{Yu2026Robotaxis}. Waymo reported over 170 million cumulative autonomous miles across its fleet operating in multiple U.S. cities, including the San Francisco Bay Area and Phoenix, as of December 2025~\citep{hawkins2026waymo170miles}.
The autonomous driving technology provides an additional opportunity for sharing private vehicles to improve both vehicle utilization (which is currently around 10\%~\citep{ZHANG2025103305}) and enhance the MoD service quality with additional vehicle supply. 
\citet{WANG2021103362} refers to the resulting business model, where the ride-hailing platforms leverage idle private AVs to provide MoD services, as AV crowdsourcing.
As illustrated in Figure \ref{fig_robotaxi}, private AVs switch between owner usage and MoD service, while their repositioning and trip assignments across time and space are coordinated by the MoD platform. 
\begin{figure}[H]
    \centering
    \includegraphics[width=0.8\linewidth]{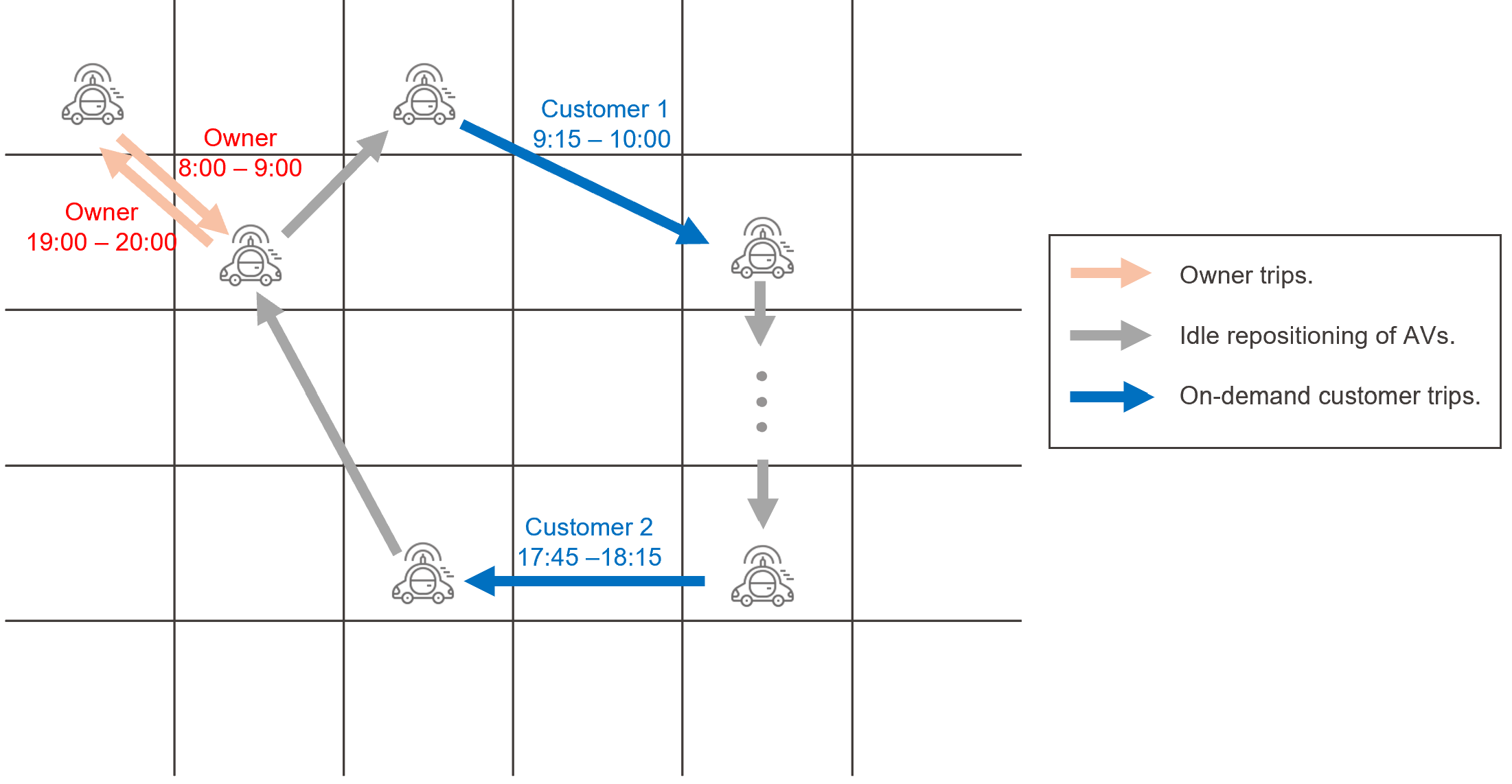}
    \caption{Shared use of private autonomous vehicles}
    \label{fig_robotaxi}
\end{figure}

As a foreseeable example of on-demand sharing of robotic assets, AV crowdsourcing services warrant careful investigation prior to practical deployment. Several key questions remain to be addressed: (1) Under what conditions can AV crowdsourcing services benefit the platform? (2) What key metrics should the platform consider when operating the service? (3) How does centralized dispatching affect service quality? 
Answering these questions involves modeling the 
the spatiotemporal heterogeneity of private AV availability, which distinguishes this work from previous studies. 
Overall, our main contributions are as follows:
\begin{itemize}
    \item We develop a time-expanded network flow model to jointly optimize not only AV crowdsourcing payment strategies but also the operations for both crowdsourced and platform-owned AVs. Under the assumption of a uniform distribution of AV owners' opportunity costs, the resulting optimization problem can be reformulated as a convex quadratic program, which guarantees the existence of a global optimum and may admit multiple optimal strategies.
    \item Building on this framework, we analyze the feasibility conditions under which the MoD platform benefits from AV crowdsourcing, and investigate strategies for selecting participating AV owners across heterogeneous regions.
    \item Through numerical experiments on a two-node network, we identify several key metrics that have significant impacts on service scale. The service scale increases as the travel time gap between AV owners and customers decreases, the reliability buffer for owners' private use is reduced, the customer loss penalties increase, and inter-regional travel times decrease. 
    \item We validate the analytical results through a case study constructed using the open-source data of Chicago and generate additional insights into the real practice of AV crowdsourcing.
    Results reveal that demand and owner availability exhibit significant spatial heterogeneity across regions. Although empty vehicles tend to concentrate in high-demand areas, some peripheral areas may experience good service quality due to lower traveler competition.  
\end{itemize}

The remainder of this paper is organized as follows. Section \ref{sec_liter} reviews the related studies on sharing economy and crowdsourcing, with a particular focus on MoD services. Section \ref{sec_model} presents the network flow model that serves as the foundation of the service capacity planning problem formulated in Section \ref{sec_property}, along with the analytical properties. Section \ref{sec_num} details the numerical experiments on a stylized network and examines the effects of various factors on platform strategy and service performance. The design and main findings of the Chicago case study are summarized in Section \ref{sec_chicago}. 
Finally, Section \ref{sec_concl} concludes the paper with a discussion of implications beyond AV crowdsourcing and provides several directions for future research.

\section{Related works}\label{sec_liter}

The sharing economy has been transforming the industry and market, and has raised unprecedented research questions.  
For instance, \citet{doi:10.1111/poms.13491} and \citet{doi:10.1111/poms.13883} studied how the sharing economy reshapes market competition by changing consumers' decisions between buying and renting, thereby affecting manufacturers’ profits. \citet{doi:10.1177/10591478251331407} compared two distinct sharing strategies, one is peer-to-peer (P2P) sharing and the other is business-to-consumer (B2C) sharing. While P2P sharing creates a scale effect that promotes market expansion, B2C sharing offers greater control over pricing, which facilitates the implementation of price discrimination. Other studies focus on the pricing and service design problems in the sharing economy. 
For example, \citet{doi:10.1177/10591478251326390} examines pricing and service assortment decisions for service providers offering both group and individual services under heterogeneous customer waiting behavior.

Previous studies have examined the financial feasibility of sharing economy in various industry fields, such as the rental channels in the fashion industry~\citep{doi:10.1177/10591478261444827}, and the subscription sharing of general digital services~\citep{doi:10.1177/10591478251404236}.
Some scholars have extended economic analysis of the sharing economy to account for characteristics that vary across temporal and spatial dimensions. 
For example, \citet{doi:10.1111/poms.12672} and \citet{doi:10.1177/10591478251319683} studied the resource sharing subject to usage time constraints. \citet{doi:10.1177/00222437241255057} studied the spatiotemporal diffusion of a car-sharing platform and finds that growth is driven by asymmetric spatial network effects shaped by both consumer mobility and supplier distribution. However, few studies incorporate both spatial and temporal operational constraints while maintaining analytical tractability. Such considerations are particularly important in transportation services, where demand and supply are inherently heterogeneous across both space and time~\citep{LUO2024103936, XU2025101533,doi:10.1177/10591478251349724}. 
For example, the operation of on-demand mobility (delivery) services face the challenge of dispatching drivers (couriers) to pick up passengers (orders) in real-time~\citep{9907878,chen2024real,JIANG2026104505}.  
Compared to these services, AV crowdsourcing has distinct operational constraints: private AV can only serve MoD trips when not used by AV owners. Different from drivers and couriers in other on-demand services, AV owners are much more heterogeneous in their activity patterns, which translates into diverse vehicle availability. 
\citet{mourad2019owning} conducted a simulation study of AV crowdsourcing integrating these constraints, whereas our study presents an analytical modeling framework that jointly examines the AV rental pricing strategies and the MoD operational strategies.

MoD services with AVs or a mixed fleet of AVs and human-driven vehicles (HVs) have been studied extensively in the literature~\citep[e.g.,][]{Siddiq2021,10609802,doi:10.1287/trsc.2022.1188}, though most studies assume the AVs are owned by the platform. 
Differently, AV crowdsourcing considers that AVs are owned by individuals but centrally operated by the platform during their rental periods. 
Several recent studies have explored this special business model. \citet{WANG2021103362} studied AV crowdsourcing in a multimodal transportation market and investigated pricing strategies under profit maximization and social welfare maximization objectives. \citet{Siddiq2021} studied whether a human-driven ride-hailing platform can improve profitability by introducing crowdsourced AVs or purchased AVs. \citet{WANG2024104732} studied the market structure of AV crowdsourcing platforms.
\citet{ZHANG2025103305} analyzed the potential of a win-win-win outcome of AV crowdsourcing for AV owners, MoD passengers, and the platform.  
However, these studies all utilized aggregate models for tractability, neglecting the spatiotemporal characteristics in AV crowdsourcing, particularly the vehicle availability. \citet{LI2025103871} established a bi-level programming framework to optimize fixed-fleet sizing and pricing for crowdsourced AVs, captures the interactions between platform strategy and owners' activity-travel scheduling patterns. Different from their approaches, we adopt a time-expanded network flow model that preserves a degree of analytical tractability. Similar fluid-based approaches have also been adopted to analyze service capacity and staff planning problem~\citep{doi:10.1287/mnsc.1100.1203,doi:10.1287/opre.2019.1916}.

\section{Model} \label{sec_model}

We simultaneously optimize the platform's rental pricing strategies and AV dispatching strategies within a day. The set of regions is denoted as $\mathcal{J}$ of size $J=|\mathcal{J}|$. AV owners are assumed to rent out their private AVs for multiple units of intervals $\Delta_s$ (e.g., $\Delta_s=1$ hr). Accordingly, we use $h=1,\dots, H$ as the index of AV rental periods. For each rental period, the platform sets region-specific rental prices $P\in\mathbb{R}_{+}^{H\times J}$, with each element $P_{hj}$ representing the payment made to an AV owner who rents out the private vehicle originating from region $j$ during period $h$. On the demand side, the service horizon is divided into finer time slots of length $\Delta_t$ (e.g., $\Delta_t=15$ min). Similarly, we introduce $t=1,\dots, T$ to denote each service period. To make the two time scales compatible, we assume $\Delta_s=s\Delta_t$ with a constant integer multiplier $s$ (e.g., $\Delta_s=4\Delta_t$ with $s=4$).
Besides, we assume the travel times between regions are constant and collectively denoted by matrix $\pi$, where $\pi_{ij}$ is the travel time from region $i$ to region $j$ as multiples of $\Delta_t$.

\subsection{Vehicle supply}

AV owners are categorized into a finite number of classes, denoted by $\Theta$, according to their personal activity patterns during the day, where each class $\theta\in\Theta$ contains a total number of $M^\theta$ AV owners. 
We use an incidence matrix $\Omega^\theta\in \{0,1\}^{H\times J}$ to represent the spatiotemporal vehicle availability of AV owners in class $\theta$, where $\Omega^\theta_{hj} = 1$ if the vehicle can be rented out from region $j$ for period $h$, and 0, otherwise.

We assume AV owners perceive heterogeneous opportunity costs $\varepsilon$ that follow a known distribution with a positive finite support. To focus on the analysis of heterogeneous travel patterns, we assume the distribution of opportunity costs is homogeneous among AV owner classes. In the context of AV crowdsourcing, the opportunity cost can be interpreted as the minimum acceptable compensation. Hence, AV owners would only rent out their AVs if the spatiotemporal rental price $P_{hj}$ exceeds their opportunity costs, subject to availability. Accordingly, the spatiotemporal private AV supply of class $\theta$ from region $j$ over time period $h$ is given by 
\begin{align}\label{eq:private-av-supply}
    N^\theta_{hj} = M^\theta \mathbb{P}(P_{hj}\geq \varepsilon)\Omega^\theta_{hj}.
\end{align}
The resulting crowdsourced supply matrix is denoted by $N^\theta \in \mathbb{R}^{H\times J}$.

To convert the private AV supply from the rental periods to service periods, we introduce another incidence matrix $\Gamma\in \{0,1\}^{T\times H}$. The private AV supply of class $\theta$ measured by the service periods, denoted by $\tilde{N}^\theta\in \mathbb{R}^{T\times J}$, is thus given by 
\begin{align}
    \tilde{N}^\theta = \Gamma N^\theta.
\end{align}
We further define $Z^\theta \in \mathbb{R}^{T\times J}$ as the difference between private AV supply between consecutive time periods, and assume all rentals start and finish within the same day. Hence, the net private AV supply of class $\theta$ is given by
\begin{align}
    Z^\theta_{1\cdot} &= \tilde{N}^\theta_{1\cdot},\label{eq:net-private-av-supply-1}\\
    Z^\theta_{t\cdot} &= \tilde{N}^\theta_{t\cdot} - \tilde{N}^\theta_{t-1\cdot}, & t > 1,\label{eq:net-private-av-supply-t}
\end{align}
where $Z^\theta_{t\cdot}$ denotes the $t$-th row in matrix $Z^\theta$, and the same notation applies to other variables hereafter. 
In this study, we assume that all private AVs are rented and returned in the same region. Thus, a positive (negative) value in $Z^\theta$ indicates the number of newly rented (returned) private AVs of each region and time period. 

Besides renting private AVs, the platform also owns a fleet of $N^\nu$ AVs that operate over the entire service region and time horizon, where $\nu$ distinguishes these platform AVs from privates AVs and collectively yields the set of vehicle classes $\mathcal{K} = \Theta \cup \{\nu\}$. In this study, we consider $N^\nu$ to be exogenous to focus on the operational pricing and dispatching problems. Also note that $N^\nu$ is a scalar rather than a matrix as per the private AV supply $N^\theta$.

\subsection{Vehicle dispatching strategies}

The vehicle dispatching is expressed as a series of matrices $Y^k \in \mathbb{R}_+^{T\times J\times J}$, each of which indicates the flows of AVs in class $k$. Specifically, each element $Y^k_{tij}$ denotes the vehicle flow from region $i$ to region $j$ during time period $t$. 
Since travel times between regions are not uniform, we construct the arrival matrix $V^k$ in line with the departure matrix $Y^k$ as follows:
\begin{align}\label{eq:arrival-flow}
    & V^k_{\tau(t)\cdot\cdot} = Y^k_{t\cdot\cdot}, & \forall k\in \mathcal{K},\; \forall t, 
\end{align}
where the mapping of time period $\tau: \mathbb{R}^{J\times J}\to \mathbb{R}^{J\times J}$ is defined as 
\begin{align}\label{eq:travel_time_mapping}
    \tau_{ij}(t) = 
    \begin{cases}
        t+\pi_{ij}, & \text{if}\ t+\pi_{ij}\le T, \\
        t+\pi_{ij}-T, & \text{otherwise}.
    \end{cases}
\end{align}
It thus allows us to consider periodic operations of AVs without specifying the starting and ending times. The inverse mapping $\tau^{-1}$ can be easily derived as per Eq.~\eqref{eq:travel_time_mapping}.

We are now ready to present the constraints on the vehicle dispatching flows $Y$. 
The first condition shared among all vehicle classes is that the total dispatching flow from each region must not exceeds the total available vehicles. Let $E^k\in\mathbb{R}_+^{T\times J}$ denote the empty vehicle flow of class $k$ at the beginning of each time period in each region. The first constraint is thus given by 
\begin{align}
    &  Y^k_{t\cdot\cdot}\mathbf{1}  \leq E^k_{t\cdot}, & \forall k\in \mathcal{K}, \; \forall t,\label{eq:feasible-dispatch-flow}
\end{align}
where $\mathbf{1}$ denotes an all-ones vector of dimension $J$.

The second constraint expresses the flow dynamics of private AVs in each region between two consecutive time periods:
\begin{align}
    & E^\theta_{t\cdot} =  E^\theta_{t-1\cdot} + (V^\theta_{t-1\cdot\cdot})^\top \mathbf{1}  - Y^\theta_{t-1\cdot\cdot} \mathbf{1}  + Z^\theta_{t\cdot}, & \forall \theta\in\Theta, \; \forall t> 1. \label{eq:empty-flow-dynamics-private-av}
\end{align}

Since all private AV rentals complete within the same day, we also need to introduce the following constraints on the empty vehicles at the first and last time periods:
\begin{align}
    & E^\theta_{1\cdot} = N^\theta_{1\cdot}, & \forall \theta\in\Theta,\label{eq:private-av-conservation-1}\\
    & E^\theta_{T\cdot} + (V^\theta_{T\cdot\cdot})^\top \mathbf{1}  - Y^\theta_{T\cdot\cdot} \mathbf{1} = N^\theta_{T\cdot}, & \forall \theta\in\Theta.\label{eq:private-av-conservation-T}
\end{align}
It also enforces the zero vehicle dispatching at the end of service:
\begin{align}
    & Y^\theta_{tij} = 0, & \forall t  > T - \pi_{ij},\; \forall i,j\in \mathcal{J}. \label{eq:private-av-settle}
\end{align}

Similar constraints are defined for platform AVs, while a single fleet conservation is sufficient given that the platform AV supply remains the same over time:
\begin{align}
     & E^\nu_{t\cdot} = E^\nu_{t-1\cdot} + (V^\nu_{t-1\cdot\cdot})^\top \mathbf{1}  - Y^\nu_{t-1\cdot\cdot} \mathbf{1}, & \forall t,\label{eq:empty-flow-dynamics-platform-av}\\
    & E^\nu_{T\cdot} \mathbf{1}  + \mathbf{1}^\top U^\nu_T \mathbf{1}= N^\nu,\label{eq:platform-av-fleet-size}
\end{align}  
where $U^\nu_T\in\mathbb{R}_+^{J\times J}$ denotes the en-route platform AVs during the last time period $T$ with each element computed as 
\begin{align}\label{eq:en-route-platform-AV}
    &U^\nu_{Tij} = \sum_{t'=\tau^{-1}(T)}^{T-1} Y^v_{t'ij}, & \forall i,j\in \mathcal{J}.
\end{align}
Also note that, with the setting of periodic operations of platform AVs, here we let $t-1=T$ for $t=1$ for notation simplicity.

\section{Service capacity planning}\label{sec_property}

Based on the aforementioned settings, we formulate the platform's service capacity planning as an optimization problem that aims to minimize the total supply cost, which comprises vehicle rental expenses and operational costs. The passenger demand and trip revenue are assumed exogenous in this study, whereas a penalty is introduced to the objective function to capture the loss of trip fare revenue when customer demand is not fully served. Let $D\in \mathbb{R}_+^{T\times J\times J}$ denote the demand across regions in time period $t$. Then, the demand loss $W\in \mathbb{R}_+^{T\times J\times J}$ is given by 
\begin{align}
    & W_{t\cdot\cdot} =\max\left\{0, D_{t\cdot\cdot}-\sum_{k\in\mathcal{K}} Y^k_{t\cdot\cdot}\right\}, & \forall t.\label{eq:demand-loss}
\end{align}
Accordingly, the platform's optimization problem is formulated as follows: 
\begin{subequations}\label{eq:planning-v0}
\begin{align}
    \min_{P, Y} \quad & \sum_{\theta\in \Theta} \langle P, N^\theta\rangle_F + c_0 \sum_{k\in\mathcal{K}} \langle Y^k, \Pi \rangle_F + \langle c_\rho, W \rangle_F,\\
    s.t. \quad & \text{Constraints \eqref{eq:private-av-supply}-\eqref{eq:en-route-platform-AV},} \\
    & P, Y\ge 0
\end{align}
\end{subequations}
where $Y=(Y^\theta, Y^\nu)_{\forall \theta\in \Theta}$ denotes the full set of dispatching flows, $\langle \cdot,\cdot\rangle_F$ denotes the Frobenius inner product (i.e., sum of all element values of the element-wise product of two matrices),  $c_0\in\mathbb{R}_+$ represents the unit AV operation cost assumed equal between platform and private AVs, $\Pi\in \mathbb{R}_+^{T\times J\times J}$ stacks $T$ times of travel time matrix $\pi$, and $c_\rho\in \mathbb{R}_+^{T\times J\times J}$ expresses the penalty of unit demand loss by time and OD pair (e.g., trip fare).

\subsection{Reformation as quadratic program}

Under a mild assumption on the opportunity cost stated below, the original capacity planning problem \eqref{eq:planning-v0} can be reformulated as a convex quadratic program. 
\begin{assum}\label{ass:linear-opportunity-cost}
    The opportunity cost of AV owners follows a uniform distribution with support $[\underline{\varepsilon}, \overline{\varepsilon}]$. 
\end{assum}

Due Assumption~\ref{ass:linear-opportunity-cost}, we can, without loss of generality, restrict the feasible range of the rental price to $[\underline{\varepsilon}, \overline{\varepsilon}]$. Any price $P_{hj} < \underline{\varepsilon}$ results in zero supply, which is equivalent to the strategy $P_{hj} = \underline{\varepsilon}$, while any $P_{hj} > \overline{\varepsilon}$ is strictly dominated by $P_{hj} = \overline{\varepsilon}$ as it incurs higher costs for the same maximum supply. 
The assumption also reduces the private AV supply Eq.~\eqref{eq:private-av-supply} to a linear function of $P$, which is also commonly applied in the literature~\citep{baldick2000linear,Siddiq2021}:
\begin{align}\label{eq:linear-private-av-supply}
    N^\theta_{hj} = M^\theta \Omega^\theta_{hj} \frac{P_{hj} - \underline{\varepsilon}}{\overline{\varepsilon}-\underline{\varepsilon}}.
\end{align}

The following proposition formally states the reformulation of \eqref{eq:planning-v0} into a quadratic program. 
Due to its particular sparse structure, the corresponding quadratic program is well-suited for large-scale instances and can be efficiently solved using commercial solvers such as Gurobi. 
\begin{prop} \label{prop1}
    Under Assumption~\ref{ass:linear-opportunity-cost}, Problem~\eqref{eq:planning-v0} is reformulated as a convex quadratic program:
    \begin{subequations}\label{eq:planning-qp}
    \begin{align}
        \min_x \quad & x^\top Q x+ c^\top x,\\
        s.t. \quad & A^\text{eq}x = b^\text{eq},\\
        & A^\text{neq}x \leq b^\text{neq},
    \end{align}
    \end{subequations}
    where $x=\text{col}(P,Y,N,Z,V,E,W)$ denotes the full vector of decision variable; the matrices $A^\text{eq},A^\text{neq}$ and $Q$, and the vectors $b^\text{eq},b^\text{neq} $ and $c$ are provided in Appendix~\ref{app:proof_prop1}.
\end{prop}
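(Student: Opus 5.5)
The plan is to turn each piece of Problem~\eqref{eq:planning-v0} into either a linear constraint or a quadratic-plus-linear objective term in the stacked variable $x=\text{col}(P,Y,N,Z,V,E,W)$. Under Assumption~\ref{ass:linear-opportunity-cost}, we first restrict $P$ to the box $[\underline{\varepsilon},\overline{\varepsilon}]$; the argument preceding the proposition shows this loses no generality. Each box bound is a row of $A^\text{neq}$. On this box, Eq.~\eqref{eq:linear-private-av-supply} holds exactly. It is affine in $P$, so we keep $N$ as a variable and impose
\begin{align*}
(\overline{\varepsilon}-\underline{\varepsilon})N^\theta_{hj}-M^\theta\Omega^\theta_{hj}P_{hj}=-M^\theta\Omega^\theta_{hj}\underline{\varepsilon}
\end{align*}
as rows of $A^\text{eq}$.

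Next, all remaining constraints are already linear once the index maps are fixed, and we collect them into $A^\text{eq}$ and $A^\text{neq}$:
\begin{itemize}[leftmargin=*]
\item $\tilde N^\theta=\Gamma N^\theta$ is substituted into \eqref{eq:net-private-av-supply-1}--\eqref{eq:net-private-av-supply-t}, giving equality rows for $Z$.
\item The arrival relation \eqref{eq:arrival-flow} is a permutation identity between entries of $V$ and $Y$, with $\tau$ fixed by the data $\pi$.
\item The dynamics \eqref{eq:empty-flow-dynamics-private-av}, \eqref{eq:private-av-conservation-1}, \eqref{eq:private-av-conservation-T} and \eqref{eq:empty-flow-dynamics-platform-av} are linear equalities.
\item The end-of-service condition \eqref{eq:private-av-settle} fixes certain entries of $Y^\theta$ to zero.
\item The fleet condition \eqref{eq:platform-av-fleet-size} becomes linear once $U^\nu_T$ is replaced by the sum \eqref{eq:en-route-platform-AV} of $Y^\nu$ entries.
\item The dispatch bound \eqref{eq:feasible-dispatch-flow} and nonnegativity of $Y$ and $W$ give rows of $A^\text{neq}$.
\end{itemize}

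The one nonlinear constraint is the demand loss \eqref{eq:demand-loss}. We replace it with the epigraph relaxation $W\ge 0$ and $W_{t\cdot\cdot}\ge D_{t\cdot\cdot}-\sum_k Y^k_{t\cdot\cdot}$. The equivalence argument is the main step: since $c_\rho\ge 0$ and $W$ appears nowhere else, any optimum can lower $W$ down to the max without raising the cost, so the optimal values coincide. If some entry of $c_\rho$ is zero, optimal solutions of the relaxation may not satisfy \eqref{eq:demand-loss} exactly. Lowering those entries of $W$ then yields an equally optimal solution of the original problem, and this needs a short remark.

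For the objective, $c_0\sum_k\langle Y^k,\Pi\rangle_F$ and $\langle c_\rho,W\rangle_F$ are linear and go into $c$. The rental term $\sum_\theta\langle P,N^\theta\rangle_F$ becomes a function of $P$ alone after substituting the linear supply:
\begin{align*}
\sum_{h,j}\frac{m_{hj}}{\overline{\varepsilon}-\underline{\varepsilon}}\big(P_{hj}^2-\underline{\varepsilon}P_{hj}\big),\qquad m_{hj}=\sum_\theta M^\theta\Omega^\theta_{hj}\ge 0 .
\end{align*}
So $Q$ is diagonal, with entries $m_{hj}/(\overline{\varepsilon}-\underline{\varepsilon})\ge 0$ on the $P$ block and zero elsewhere. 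The linear parts $-\underline{\varepsilon}\,m_{hj}/(\overline{\varepsilon}-\underline{\varepsilon})$ are added to $c$.

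Convexity follows because $Q\succeq 0$ (it is diagonal with nonnegative entries) and the feasible set is polyhedral. The main obstacle is bookkeeping rather than mathematics: the wrap-around map $\tau$ and the index $t-1=T$ for $t=1$ have to be encoded consistently, and the mismatch between the rental index $h$ and the service index $t$ through $\Gamma$ must be handled carefully when assembling $A^\text{eq}$. Note also that \eqref{eq:private-av-conservation-1} writes $N^\theta_{1\cdot}$ where $\tilde N^\theta_{1\cdot}$ is meant. We would state the blocks explicitly in Appendix~\ref{app:proof_prop1}.
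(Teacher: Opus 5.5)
Your argument is correct, and for the constraints it follows the paper's Appendix~\ref{app:proof_prop1} closely. There too, each relation in \eqref{eq:private-av-supply}--\eqref{eq:en-route-platform-AV} becomes a sparse linear block: $A^0$ expands hourly prices to service periods and $A^1A^0P-N=b^1$ encodes the supply. The paper also replaces \eqref{eq:demand-loss} by the same epigraph pair $-W-A^{10}Y\le -D$, $-W\le 0$.

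The real difference is the objective. The paper keeps the rental payment as the bilinear product of $P$ and $N$, taking $Q$ to be an off-diagonal selection block that couples price coordinates with supply coordinates. It never argues convexity. As a quadratic form on all of $\mathbb{R}^{d(x)}$ that $Q$ is indefinite, since its symmetric part has eigenvalues $\pm\tfrac12$. Convexity therefore only holds after restricting to the affine set on which $N$ is a nondecreasing affine function of $P$, a step the paper leaves implicit. It likewise leaves implicit how the $HJ$ price entries align with the $(K-1)TJ$ supply entries.

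You instead substitute the supply into the payment term, obtaining a diagonal $Q\succeq 0$ on the $P$ block plus a linear correction in $c$. This makes convexity immediate and directly checkable, for example by a QP solver. The paper's version gains only a literal transcription of $\langle P,N\rangle$.

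Your explicit box $\underline{\varepsilon}\le P\le\overline{\varepsilon}$ and your remark on zero entries of $c_\rho$ are further genuine refinements. The paper's $A^\text{neq}$ imposes only $P\ge 0$ and $N\ge 0$, so nothing stops $P>\overline{\varepsilon}$ with the linear supply extrapolated beyond $M^\theta$. The paper also calls the epigraph constraints equivalent to \eqref{eq:demand-loss}, although, as you note, the equivalence holds only at optimality.
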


\subsection{Service feasibility}
Using the above framework, we first derive the conditions under which the platform benefits from AV crowdsourcing. In particular, we examine whether deploying crowdsourced AVs in some time periods can reduce the platform’s overall cost. As a baseline, we first characterize the platform’s optimal strategies and the total service cost in the absence of crowdsourced AVs. Let $\tilde{x} = \text{col}(\tilde{P}, \tilde{Y}, 0, 0, \tilde{V}, \tilde{E}, \tilde{W})$ be the optimal solution to Problem \eqref{eq:planning-qp} without crowdsourced AVs (i.e., by setting the rental price to its minimum value $\underline{\varepsilon}$).
Accordingly, we have $N^\theta = 0$ for all $\theta \in \Theta$ and the service is solely served by platform AVs. 

We define the set of trip indices that are not fully served in this case as:
\begin{align}
    \mathcal{R}_0 = \left\{ (t,i,j) \mid D_{tij} -  \tilde{Y}^\nu_{tij} >0 \right\}.
\end{align}

For a given request $r = (t,i,j) \in \mathcal{R}_0$, we define $\Theta_r$ as the set of eligible AV owner classes, i.e., the vehicle is available for completing the pickup, service, and return trips. 
To simplify the analysis, we only consider renting vehicles in the same region as the trip origin or destination and denote the corresponding eligible classes as 
\begin{align}
    \Theta_r^{(i)} &= \left\{ \theta \in \Theta \,\middle|\,
        \Omega^\theta_{hi} = 1,
        \;\forall\,  \lceil t/s \rceil \le h \le  \lceil (t + \pi_{ij} +\pi_{ji}- 1) /s \rceil\right\}, \\
    \Theta_r^{(j)} &= \left\{ \theta \in \Theta \,\middle|\,
        \Omega^\theta_{hj} = 1,
        \; \forall\, \lceil (t - \pi_{ji})/s \rceil  \le h \le \lceil (t + \pi_{ij} - 1)/s \rceil
        \right\}\\
        & \text{if } t> \pi_{ji}, \text{ otherwise } \emptyset\nonumber.
\end{align}
Thus, $\Theta_r = \Theta_r^{(i)} \cup \Theta_r^{(j)}$.  Note that this setting essentially yields a more conservative service feasibility condition.

Let $\Theta_0 = \bigcup_{r \in \mathcal{R}_0} \Theta_r$ be the union of
all eligible AV owner classes for the unserved demand. Proposition
\ref{prop_feasible} gives the sufficient conditions under which the
platform reduces its operational cost through adopting crowdsourced AVs.

\begin{prop}\label{prop_feasible}
    Suppose $\mathcal{R}_0 \neq \emptyset$ and $\Theta_0 \neq \emptyset$.  If there exists  
    requests $r = (t,i,j) \in \mathcal{R}_0$ and an owner class $\theta \in \Theta_r$ such that the 
    penalty of demand loss associated with trip $r$, denoted as $c_{\rho, r}$, satisfies:
    \begin{align}\label{eq:prop_feasible}
        c_{\rho, r} > \left( 2 + \frac{\pi_{ij} + \pi_{ji}}{s} \right)
        \underline{\varepsilon} + c_0 (\pi_{ij} + \pi_{ji}),
    \end{align}
    then the optimal solution to Problem \eqref{eq:planning-qp} satisfies $\sum_{\theta\in\Theta} N^{\theta*}>\mathbf{0}$.
\end{prop}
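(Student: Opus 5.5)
The argument is by contradiction through a local perturbation. Suppose the optimal solution of Problem~\eqref{eq:planning-qp} has $N^{\theta*}=0$ for every $\theta\in\Theta$. The private-AV blocks of the problem then vanish. The remaining problem in the platform-AV variables is exactly the baseline problem, so $(Y^{\nu*},E^{\nu*},W^*)$ is optimal for it, and without loss of generality it coincides with the baseline solution $\tilde x$. In particular every $r=(t,i,j)\in\mathcal R_0$ still has $W_{tij}=D_{tij}-\tilde Y^\nu_{tij}>0$. Take $r$ and $\theta\in\Theta_r$ as in the hypothesis. I will build a feasible perturbation in which a small amount $\delta>0$ of class-$\theta$ supply is rented and used to serve part of trip $r$, and show that the objective strictly decreases for small $\delta$. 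This contradicts optimality, so $\sum_\theta N^{\theta*}>\mathbf 0$.

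\textbf{Construction of the perturbation.} Consider the case $\theta\in\Theta_r^{(i)}$. Fix a target supply $\delta$ in region $i$ during the rental periods $h$ with $\lceil t/s\rceil\le h\le \lceil (t+\pi_{ij}+\pi_{ji}-1)/s\rceil$. Since $\Omega^\theta_{hi}=1$ on these periods, \eqref{eq:linear-private-av-supply} lets us choose $P_{hi}=\underline\varepsilon+\delta(\overline\varepsilon-\underline\varepsilon)/M^\theta$ there, which gives $N^\theta_{hi}=\delta$. All other prices stay at $\underline\varepsilon$. The vehicles enter through $Z^\theta$ at the start of the first rental period. We set $Y^\theta_{tij}=\delta$ and, at time $\tau_{ij}(t)$, $Y^\theta_{\tau_{ij}(t)ji}=\delta$; the vehicles sit idle in $E^\theta$ in all other periods and are returned through $Z^\theta$ at the end of the last rental period. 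We set $W_{tij}\leftarrow W_{tij}-\delta$, which is admissible for $\delta\le D_{tij}-\tilde Y^\nu_{tij}$.

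Next, verify the constraints. The dynamics \eqref{eq:empty-flow-dynamics-private-av} and \eqref{eq:feasible-dispatch-flow} hold because the vehicle is in region $i$ at time $t$ and back in $i$ by $t+\pi_{ij}+\pi_{ji}$, inside the rental window. For \eqref{eq:private-av-settle} I would note that the window lies within the day, since the eligibility definition implicitly requires it; if it does not, I would restrict to trips whose round trip ends by $T$. For the other case, $\theta\in\Theta_r^{(j)}$, the vehicle starts in $j$, repositions to $i$ at time $t-\pi_{ji}$ (which requires $t>\pi_{ji}$), serves $r$, and ends in $j$; the bookkeeping is symmetric.

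\textbf{Cost comparison.} The operating cost rises by $c_0\delta(\pi_{ij}+\pi_{ji})$ and the penalty falls by $c_{\rho,r}\delta$. The rental cost is $\sum_h P_{hi}N^\theta_{hi}=\delta\,n_h\,\underline\varepsilon+O(\delta^2)$, where $n_h$ is the number of rental periods used. The main obstacle is bounding $n_h$. The window of service periods has length $\pi_{ij}+\pi_{ji}$, and such a window intersects at most $\lceil(\pi_{ij}+\pi_{ji})/s\rceil+1\le 2+(\pi_{ij}+\pi_{ji})/s$ rental periods; the same bound holds in the $\Theta_r^{(j)}$ case. So the first-order change in the objective is at most
\begin{align*}
\delta\Big[\Big(2+\tfrac{\pi_{ij}+\pi_{ji}}{s}\Big)\underline\varepsilon+c_0(\pi_{ij}+\pi_{ji})-c_{\rho,r}\Big],
\end{align*}
which is strictly negative by \eqref{eq:prop_feasible}. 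The quadratic term is $O(\delta^2)$, so the total change is negative for small enough $\delta$, and the contradiction follows.
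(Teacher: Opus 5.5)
Your argument follows the same route as the paper's proof. You perturb the no-crowdsourcing solution by renting a mass $\delta$ of class-$\theta$ vehicles at price $\underline{\varepsilon}+\delta(\overline{\varepsilon}-\underline{\varepsilon})/M^\theta$ over the $n_r$ rental periods of the window, use them to serve $\delta$ of trip $r$, and bound $n_r\le 2+(\pi_{ij}+\pi_{ji})/s$. Your counting bound $\lceil(\pi_{ij}+\pi_{ji})/s\rceil+1$ and the contradiction framing are fine.

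The step that fails, in your write-up and equally in the paper's, is the accounting ``the rental cost is $\sum_h P_{hi}N^\theta_{hi}=\delta n_h\underline{\varepsilon}+O(\delta^2)$.'' Prices are not class-specific: $P\in\mathbb{R}_+^{H\times J}$ is paid to every owner renting out in region $i$ during period $h$. So by \eqref{eq:linear-private-av-supply}, raising $P_{hi}$ above $\underline{\varepsilon}$ also forces $N^{\theta'}_{hi}=\delta M^{\theta'}/M^\theta>0$ for every class $\theta'$ with $\Omega^{\theta'}_{hi}=1$, and all of this supply is billed in $\sum_{\theta'}\langle P,N^{\theta'}\rangle_F$. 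These vehicles can be left idle, so feasibility survives, but a class available in only part of the window cannot serve $r$. The first-order cost of your perturbation is therefore $\delta\,\underline{\varepsilon}\sum_h \overline{M}_{hi}/M^\theta$, with $\overline{M}_{hi}=\sum_{\theta'}M^{\theta'}\Omega^{\theta'}_{hi}$ and the sum taken over the window. This can be arbitrarily larger than $\delta n_r\underline{\varepsilon}$.

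This is not cosmetic: without an extra hypothesis the statement itself fails. Take regions $\{i,j\}$, $s=4$, $T=8$, $\pi_{ij}=\pi_{ji}=1$, $N^\nu=0$, and a single request $r=(4,i,j)$. Let class $\theta$ have $M^\theta=1$ and $\Omega^\theta_{1i}=\Omega^\theta_{2i}=1$. Let class $\theta'$ have $M^{\theta'}=m>1$, with $\Omega^{\theta'}_{1i}=1$ as its only availability, and let no class be available in $j$.

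Then $\Theta_r=\{\theta\}$, $n_r=2$, and $c_{\rho,r}=3\underline{\varepsilon}+2c_0$ satisfies \eqref{eq:prop_feasible}. A $\theta'$-vehicle leaving $i$ in slot $4$ cannot be back before its rental ends. Hence serving $x>0$ units of $r$ requires $N^\theta_{1i},N^\theta_{2i}\ge x$, which forces $N^{\theta'}_{1i}\ge mx$ and a rental bill of at least $(m+2)\underline{\varepsilon}x$. Together with the operating cost $2c_0x$, this exceeds the saved penalty $c_{\rho,r}x$ by $(m-1)\underline{\varepsilon}x>0$, so every optimal solution has $N^*=0$.

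To close the argument you need an additional assumption. One option is that every class available in region $i$ (resp.\ $j$) in some rental period of the window is available in all of them. Then every induced vehicle can be dispatched on $r$, and the marginal rental cost is exactly $n_r\underline{\varepsilon}$ per unit served. The other option is a condition in which $n_r\underline{\varepsilon}$ is replaced by $\underline{\varepsilon}\sum_h\overline{M}_{hi}/M^\theta$.
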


\begin{proof}
    Since $x \ge \lfloor x \rfloor$ for any $x>0$, inequality
    \eqref{eq:prop_feasible} implies
    \begin{align}\label{eq:proof_prop_feasible_penalty_bound}
        c_{\rho,r} > \left(2 + \Bigl\lfloor \frac{\pi_{ij} + \pi_{ji}}{s} \Bigr\rfloor \right) \underline{\varepsilon} + c_0 (\pi_{ij} + \pi_{ji}).
    \end{align}
    A crowdsourced AV of class $\theta \in \Theta_r^{(i)}$ and $\Theta_r^{(j)}$ corresponds to different rental periods. In the former case, the number of rental periods is computed as 
    \begin{align}\label{eq:proof_prop_feasible_n_i}
        n_r^{(i)} = \Big\lceil \frac{t+\pi_{ij}+\pi_{ji}-1}{s}\Big\rceil - \Big\lceil\frac{t}{s}\Big\rceil + 1;
    \end{align}
    and in the latter case, it is given by 
    \begin{align}\label{eq:proof_prop_feasible_n_j}
        n_r^{(j)} = \Big\lceil \frac{t+\pi_{ij}-1}{s}\Big\rceil - \Big\lceil\frac{t-\pi_{ji}}{s}\Big\rceil + 1.
    \end{align}
    We first unify Eqs.~\eqref{eq:proof_prop_feasible_n_i} and \eqref{eq:proof_prop_feasible_n_j} by defining the rental starting time 
    \begin{align}
        t_\theta=\begin{cases}
            t, & \theta\in \Theta_r^{(i)},\\
            t-\pi_{ji}, & \theta \in \Theta_r^{(j)}, 
        \end{cases}
    \end{align}
    thus the general number of rental periods is 
    \begin{align}\label{eq:proof_prop_feasible_n}
        n_r = \Big\lceil \frac{t_\theta+\pi_{ij} + \pi_{ji}-1}{s}\Big\rceil - \Big\lceil\frac{t_\theta}{s}\Big\rceil + 1. 
    \end{align}
    Suppose $t_\theta - 1 = \beta_0 + \beta_1 s$ with $\beta_0,\beta_1\in\mathbb{Z}_+, \beta_0\leq s-1$. Then, we have
    \begin{align}
        \Big\lceil \frac{t_\theta+\pi_{ij} + \pi_{ji}-1}{s}\Big\rceil = \beta_1 + \Big\lceil \frac{\beta_0+\pi_{ij} + \pi_{ji}}{s}\Big\rceil,\quad 
        \Big\lceil\frac{t_\theta}{s}\Big\rceil = \beta_1 + 1.
    \end{align}
    Plugging the above into Eq.~\eqref{eq:proof_prop_feasible_n} yields
    \begin{align}\label{eq:proof_prop_feasible_n_bound}
        n_r = \Big\lceil \frac{\beta_0 + \pi_{ij} + \pi_{ji}}{s}\Big\rceil \leq \Big\lceil \frac{s-1 + \pi_{ij} + \pi_{ji}}{s}\Big\rceil &= 1+ \Big\lceil \frac{\pi_{ij} + \pi_{ji} -1}{s}\Big\rceil \\
        & \leq 2 + \Big\lfloor \frac{\pi_{ij} + \pi_{ji} -1}{s}\Big\rfloor \leq 2 + \Big\lfloor \frac{\pi_{ij} + \pi_{ji}}{s}\Big\rfloor. \nonumber
    \end{align}
    Combining Eqs.~\eqref{eq:proof_prop_feasible_penalty_bound} and \eqref{eq:proof_prop_feasible_n_bound}, we have 
    \begin{align}
        c_{\rho,r} > n_r\underline{\varepsilon} + c_0(\pi_{ij} + \pi_{ji}). 
    \end{align}
    Let $\delta_1>0$ denote the demand associated with request $r\in\mathcal{R}_0$, and let $\delta_2$ be defined by
    \begin{align*}
        & c_{\rho,r} = n_r\left[\underline{\varepsilon} + \frac{\delta_2}{M^\theta}(\overline{\varepsilon}-\underline{\varepsilon})\right] + c_0(\pi_{ij} + \pi_{ji}).
    \end{align*}
    Choose $0<\delta <\min\left\{\delta_1,\delta_2\right\} $, then 
    \begin{align}\label{eq:proof_prop_feasible_final}
        c_{\rho,r} > n_r\left[\underline{\varepsilon} + \frac{\delta}{M^\theta}(\overline{\varepsilon}-\underline{\varepsilon})\right] + c_0(\pi_{ij} + \pi_{ji}),
    \end{align}
    which essentially implies that the platform can rent AVs of class $\theta$ for $n_r$ rental periods at a rental rate $\underline{\varepsilon} + \frac{\delta}{M^\theta}(\overline{\varepsilon}-\underline{\varepsilon})$ to cover 
    a small amount $\delta$
    unserved trip $r$ at a lower total cost than $c_{\rho,r}$. Therefore, the platform's optimal capacity planning must contain crowdsourced AVs (i.e., $\sum_{\theta\in\Theta} N^{\theta*}>\mathbf{0}$), and this completes the proof. 
\end{proof}
  
To explore the generality of Proposition~\ref{prop_feasible}, we define a rental payment threshold as 
\begin{align}
    \epsilon = \frac{[c_{\rho,r} - c_0(\pi_{ij} + \pi_{ji})]s}{2s + \pi_{ij} + \pi_{ji}}
\end{align}
thus Condition \eqref{eq:prop_feasible} holds if and only if $\epsilon > \underline{\varepsilon}$. 

Figure~\ref{fig_wp} shows the distribution of $\epsilon$ computed on Chicago MoD trip data during the period from 18 to 24 November 2024~\citep{chicago_tnp_trips_2023}. Specifically, we set $c_{\rho,r}$ as the total trip fare, time-scale raio $s=4$, and test on three levels of operational cost (including energy cost and maintenance cost, etc) $c_0 \in \{0.1, 0.6, 1.1\}$ U.S. dollars per time slot $\Delta_t = 15$ minutes according to \cite{ayetor2023comparing}. 
We assume that the AV purchase cost can be covered by the owner with the lowest opportunity cost, which is approximated as $\underline{\varepsilon}=1$ (\$/hr), indicated by the dashed line in Figure~\ref{fig_wp}. It can be seen that the majority of rental threshold payments are significantly higher than the lowest opportunity cost, suggesting that condition (\ref{eq:prop_feasible}) is likely to hold in real practice.
\begin{figure}
    \centering
    \includegraphics[width=0.95\linewidth]{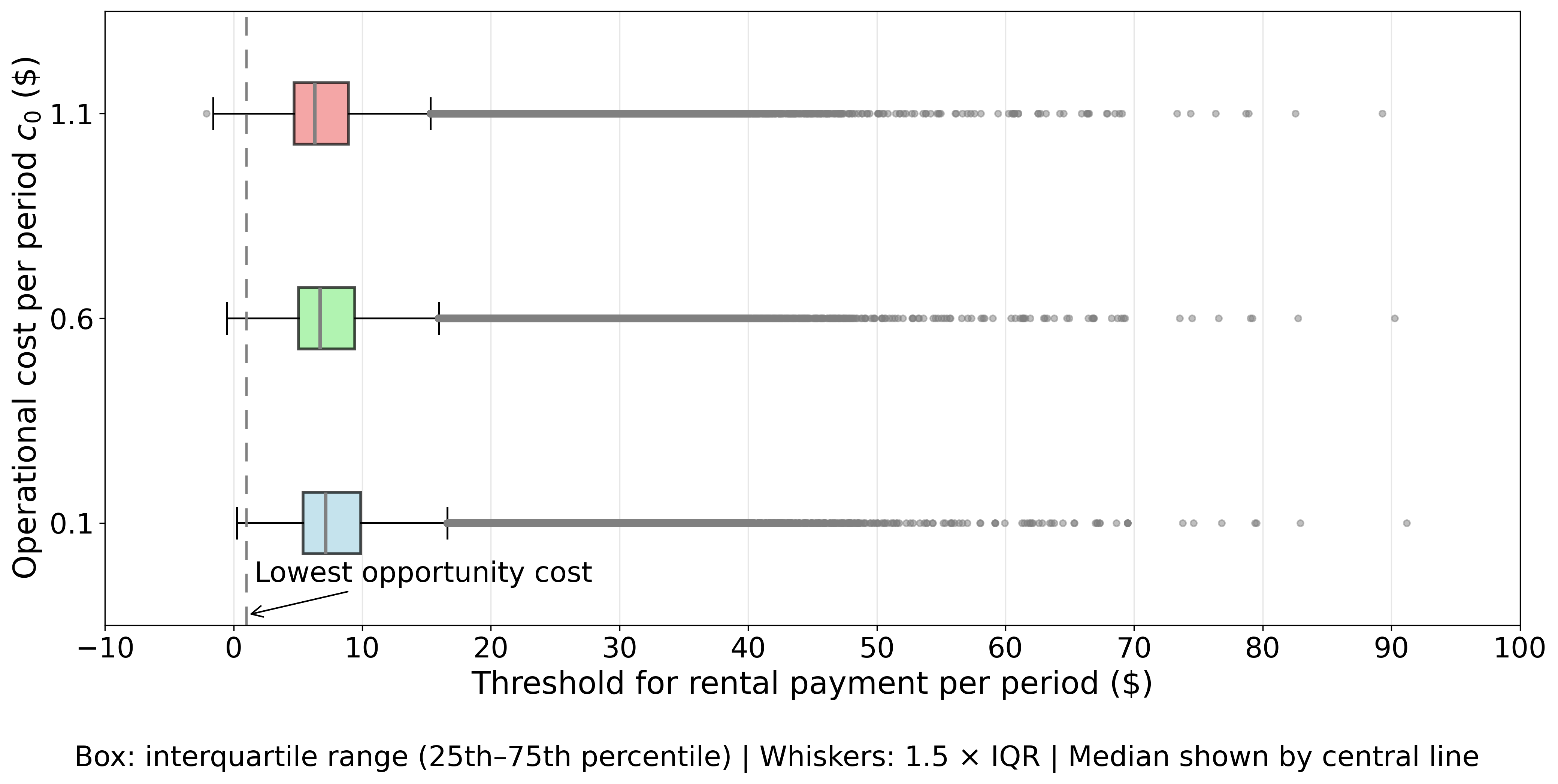}
    \caption{Illustration of rental payment threshold.} 
    \label{fig_wp}
\end{figure}

\subsection{Benefit of spatial heterogeneity in supply}

This section proceeds to investigate the benefit of spatially diverse crowdsourced AV supply, a factor that cannot be evaluated in aggregate models and thus remains underexplored in the literature. 
Since the demand for MoD services is often spatially imbalanced, e.g., from residential areas to the central business district (CBD), the platform must make strategic decisions on where to crowdsource private AVs. For a single directional travel demand, sourcing AVs from the origin and destination areas should be symmetric, as both cases induce comparable deadheading vehicle time---each vehicle either serves a MoD trip then returns vacantly or first repositions to the MoD trip origin. 
However, our analysis shows that this seemingly intuitive result does not hold in general, mainly due to the different discretization of rental periods and operational time steps.

\begin{figure}[h]
    \centering
    \includegraphics[width=0.9\linewidth]{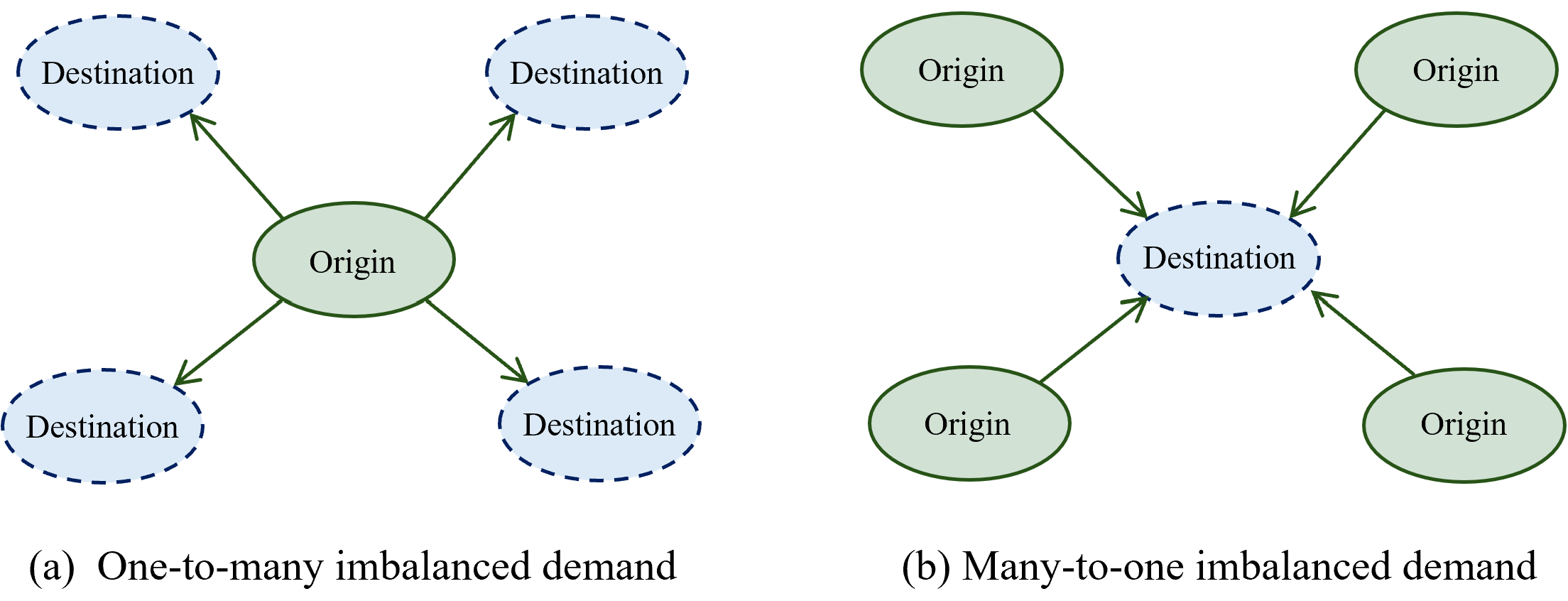}
        \caption{Directional travel demand patterns.}
    \label{fig:imbalanced_demand}
\end{figure}

Consider the directional travel demand (e.g., commuting trips) shown in Figure~\ref{fig:imbalanced_demand}. 
The set of regions $\mathcal{J}$ is partitioned into two subsets, namely the origin set $\mathcal{J}_o$ and the destination set $\mathcal{J}_d$, according to whether each region serves as the MoD trip origin or destination. 
We then define two types of demand patterns to be examined: 
\begin{itemize}
    \item One-to-many demand pattern: All trips originate from a single origin region and distribute across one or more destination regions, i.e., $|\mathcal{J}_o|=1$.
    \item Many-to-one demand pattern: All trips from one or more regions end in a single destination region, i.e., $|\mathcal{J}_d|=1$. 
\end{itemize}

Without loss of generality, we assume the platform only operates crowdsourced AVs and the inter-region travel times are symmetric, i.e., $\pi_{ij}=\pi_{ji}$ for any $i,j\in\mathcal{J}$. Besides, the travel times satisfy the triangle inequality, i.e.,
\begin{align}\label{eq:tri_ineq}
    & \pi_{ij}+\pi_{jl} \ge \pi_{il}, & \forall i,j,l\in\mathcal{R}.
\end{align}
For notation simplicity, we define the longest and shortest travel times between regions as $\overline{\pi}=\max_{i,j\in\mathcal{J}} \pi_{ij}$, and $\underline{\pi}=\min_{i,j\in\mathcal{J}} \pi_{ij}$, respectively, and denote the set of rental starting periods as $\mathcal{B} =\{s, 2s, \dots, Hs\}$ and thus the time interval between two consecutive rental starting times is exactly $\Delta_s = s\Delta_t$.

The following proposition proves that, under certain conditions, the \emph{single-sided crowdsourcing strategy}, i.e., only sourcing AVs from either origin or destination zones, is dominated by the \emph{double-sided strategy}, i.e., sourcing AVs from both types of zones. The underlying rationale is illustrated in Figure~\ref{fig:lemma_cross} and rooted in the different time discretization of rental and service periods. To serve the MoD trip 2, the platform under the single-sided strategy needs to source an AV for two rental periods $h-1$ and $h$, whereas only one rental period $h$ is needed under the double-sided strategy. Yet, such a situation occurs when certain conditions hold, which are summarized below. 

\begin{Lem}\label{lemma_cross}
    Suppose there exists a positive potential private AV supply in every region, and the maximum and minimum inter-region travel times satisfy $2\overline{\pi}\Delta_t < \Delta_s$ and $\underline{\pi}>0$. Further, the optimal single-sided rental strategy leads to either of the following outcomes:
    \begin{enumerate}[label=\textbf{Case \Alph*.}, leftmargin=*, align=left ] 
        \item The platform only crowdsources vehicles from the origin regions, and there exist some crowdsourced AVs serving trips from region $i$ to $j$ that depart in the interval $(b-\pi_{ij}-\delta, b-\pi_{ij}]$ for some $b\in\mathcal{B}$ and $\delta >0$. 

        \item The platform only crowdsources vehicles from the destination regions, and there exist some crowdsourced AVs serving trips from region $i$ to $j$ that depart in the interval $[b, b+\delta)$ for some $b\in\mathcal{B}$ and $\delta >0$. 
    \end{enumerate}
    Then, the optimal double-sided crowdsourcing strategy dominates with a strictly reduced total rental time and non-increasing vehicle operation cost. 
\end{Lem}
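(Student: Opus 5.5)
The proof is an exchange argument. Start from an optimal single-sided solution and reroute the critical trips described in Case A or Case B onto vehicles crowdsourced from the opposite side. The rerouted solution must stay feasible for Problem~\eqref{eq:planning-qp}. It must use strictly fewer rental periods and must not increase $c_0\sum_k\langle Y^k,\Pi\rangle_F$. Since the optimal double-sided strategy is at least as good as any feasible double-sided solution, and the rerouted solution already beats the single-sided optimum, the conclusion follows.

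\textbf{Step 1: rental cost of a round tour.} Consider one crowdsourced AV serving a trip $(t,i,j)$ and returning, so its busy interval has length $2\pi_{ij}$. Because $2\overline{\pi}\Delta_t<\Delta_s$, i.e.\ $2\pi_{ij}<s$, this interval meets at most two consecutive rental periods. It meets exactly two if and only if it straddles a rental boundary $b\in\mathcal{B}$. This is the same ceiling computation as in the proof of Proposition~\ref{prop_feasible}: $n_r=\lceil(\beta_0+2\pi_{ij})/s\rceil\in\{1,2\}$. Rentals are integer blocks of periods, and the vehicle's own region must be available throughout, as in the definition of $\Theta_r^{(i)}$ and $\Theta_r^{(j)}$.

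In Case A the vehicle is sourced at origin $i$. It is busy over $[t,t+2\pi_{ij})$ with $t\in(b-\pi_{ij}-\delta,b-\pi_{ij}]$. Choose $\delta$ small, e.g.\ $\delta\le\pi_{ij}$ using $\underline{\pi}>0$. Then $t<b\le t+2\pi_{ij}-1$, so the tour straddles $b$ and needs periods $h-1$ and $h$. A vehicle sourced at destination $j$ instead is busy over $[t-\pi_{ij},t+\pi_{ij})$, which lies in $(b-2\pi_{ij}-\delta,\,b]$. For small $\delta$ this fits in the single period ending at $b$.

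Case B is symmetric. An AV sourced at $j$ is busy over $[t-\pi_{ij},t+\pi_{ij})$ with $t\in[b,b+\delta)$, which straddles $b$. An AV sourced at $i$ is busy over $[t,t+2\pi_{ij})\subset[b,b+\delta+2\pi_{ij})$, which fits in the period starting at $b$ because $2\pi_{ij}<s$.

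\textbf{Step 2: the exchange.} In the single-sided solution, take a small mass $\eta>0$ of the vehicles serving these critical trips. Remove the corresponding two-period rentals, or the straddling portion of them, from the origin (respectively destination) class. Add $\eta$ one-period rentals from a class at the opposite region. Such a class exists with positive potential supply by hypothesis, and I would assume it is available in the needed period; this availability point deserves care. Rental payments per vehicle are unchanged at the margin, or can be bounded using the linear supply \eqref{eq:linear-private-av-supply}, so the total rented vehicle-periods drop strictly, by $\eta$.

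On operating cost, the new vehicle drives the deadhead leg $j\to i$ ($\pi_{ij}$) plus the trip ($\pi_{ij}$). The removed vehicle drove the trip plus its return ($\pi_{ji}=\pi_{ij}$ by symmetry). The $Y$ cost is therefore equal. If the removed vehicle was chaining trips, the triangle inequality \eqref{eq:tri_ineq} ensures that replacing the path by direct legs does not increase travel time. Served demand $W$ is unchanged.

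\textbf{Step 3: feasibility.} This is the main obstacle. The removed vehicle may not be doing a simple round trip; it could serve other trips inside its two-period rental. I would handle this by splitting its rental. Keep the period that contains its other activities, and drop only the period whose sole use is the critical trip's tail (Case A) or head (Case B). I would then verify the flow conservation constraints \eqref{eq:empty-flow-dynamics-private-av}--\eqref{eq:private-av-settle} for the modified $Z^\theta$, $E^\theta$, and $Y^\theta$. The danger is that the critical trip sits in the middle of a chain spanning both periods. If so, I would argue by the smallness of $\delta$ and $2\overline{\pi}<s$ that the chain's portion after or before the critical trip can be reassigned to the new one-period vehicle at no extra travel cost, again using the triangle inequality. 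The perturbed solution then has strictly smaller rental time, so the single-sided optimum is strictly dominated.
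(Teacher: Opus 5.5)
Your Steps 1--2 follow the paper's proof closely. The paper compares the minimum sourcing intervals $[t,t+2\pi_{ij}]$ (origin) and $[t-\pi_{ij},t+\pi_{ij}]$ (destination), takes $0<\delta\le\min(s-2\pi_{ij},\pi_{ij})$ so that only the first straddles $b$, and notes that both cost $2\pi_{ij}c_0$. In Case B it also treats an AV sourced at another destination $j'\neq j$: its interval $[t-\pi_{ij'},t+\pi_{ij}+\pi_{jj'}]$ still straddles $b$, and the cost comparison follows from the triangle inequality. You treat only $j$ there and should add this case. The paper stops at this per-trip comparison, so it tacitly assumes that the AV serving the critical trip is rented for exactly that round tour.

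The genuine gap is Step~3, together with the ``or the straddling portion'' clause of Step~2, where you try to remove that tacit assumption. Suppose the two-period rental also carries other trips. Keeping the period that contains them and dropping only the straddling one removes $\eta$ vehicle-periods, but the new opposite-side rental adds $\eta$. The net change is zero, not $-\eta$. Moreover, the part of the chain on the far side of $b$ cannot be handed to the new vehicle, because its single rental period ends at $b$ in Case A and starts at $b$ in Case B.

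This cannot be patched in general, because the strict reduction fails once the critical AV is also used for other trips. Take $\pi_{ij}=\pi_{ji}=1$, $s=6$, and two trips $i\to j$ of equal mass departing at $b-s$ and at $b-1$. The origin-only optimum uses one $i$-sourced AV over periods $h-1$ and $h$, i.e.\ two vehicle-periods, and it satisfies Case A. Every double-sided solution also needs at least two vehicle-periods:
\begin{itemize}
\item The first trip can be served within period $h-1$ only by an $i$-sourced AV. A $j$-sourced AV would have to leave $j$ before $b-s$, so it would also have to be rented in period $h-2$.
\item Serving the second trip then costs at least one further vehicle-period, either by extending that AV into period $h$ or by renting another AV.
\end{itemize}
So you should state the dedicated-round-tour restriction explicitly as a hypothesis, since that is what the argument actually proves, and drop the reassignment.

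Note also that your closing inference only produces \emph{some} double-sided solution with fewer rental periods and no higher operating cost. The cost-optimal double-sided solution need not have less rental time. Turning the saving into a lower total cost is exactly what Proposition~\ref{prop_both_zone} does, using Assumption~\ref{ass:linear-opportunity-cost}.
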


\begin{figure}[h]
    \centering
    \includegraphics[width=0.9\linewidth]{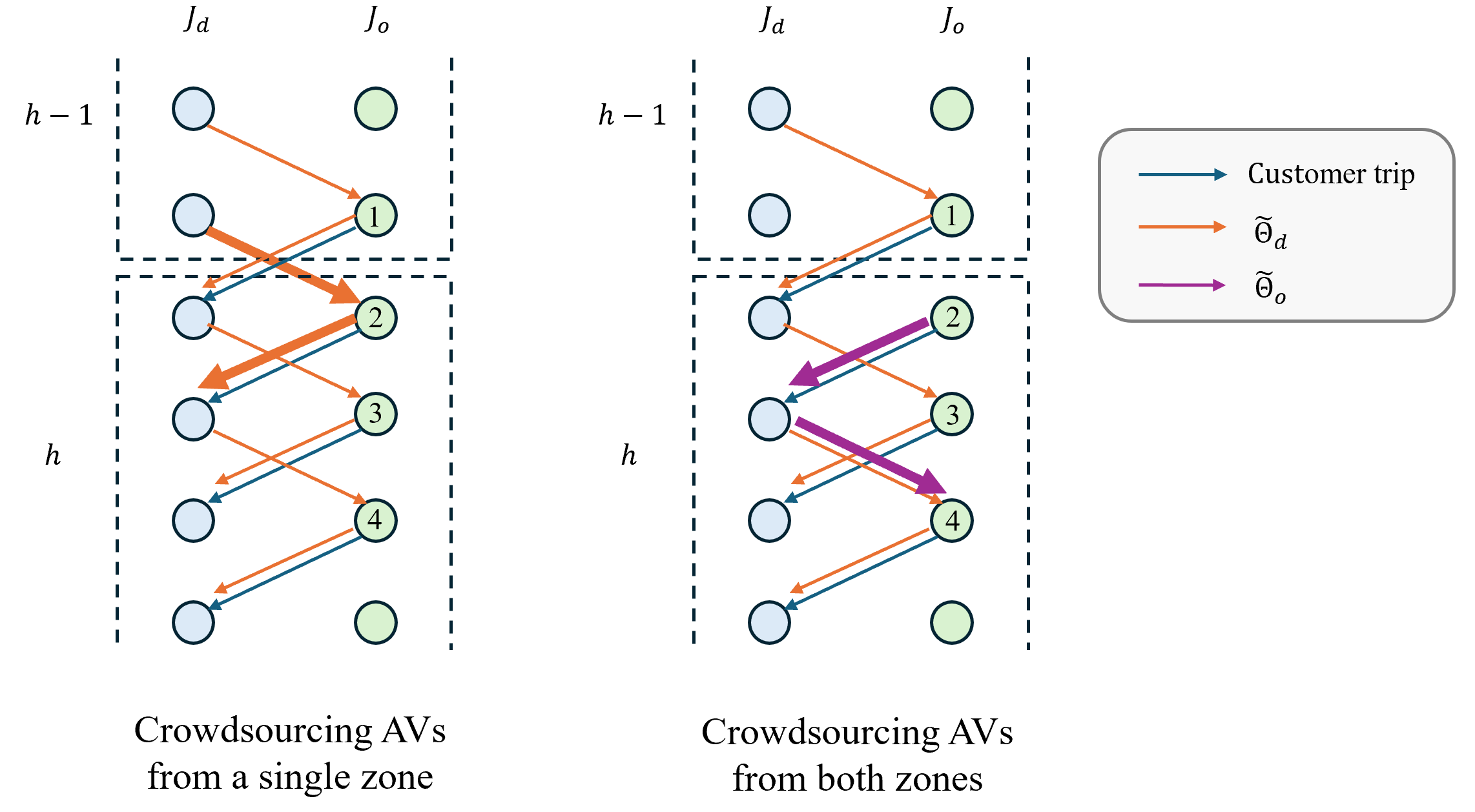}
    \caption{Illustration of Lemma\ref{lemma_cross}: serving MoD trip 2 saves one rental period}
    \label{fig:lemma_cross}
\end{figure}

\begin{proof}
    Without loss of generality, we focus on the one-to-many demand scenario, since the many-to-one scenario can be analyzed in a similar manner and yields the same results. In the following, we prove the two cases separately.

    \begin{enumerate}[label=\textbf{Case \Alph*.}, leftmargin=*, align=left ] 
        \item Suppose the platform only sources AVs from the origin region $i\in \mathcal{J}_o$. Let $0<\delta\le \min(s-2\pi_{ij}, \pi_{ij})$, where $s-2\pi_{ij}> 0$ due to the condition $2\bar{\pi}\Delta_t<\Delta_s$. Then, to serve a trip starting in the interval $t \in (b - \pi_{ij}-\delta, b - \pi_{ij}]$ for some $b\in \mathcal{B}$, the AV is at least sourced during the interval $[t, t+2\pi_{ij}]$, which covers two rental periods as $t<b$ and $t+2\pi_{ij}>b$. On the other hand, if the platform sources an AV from region $j\in\mathcal{J}_d$, the minimum sourcing interval becomes $[t-\pi_{ij}, t+\pi_{ij}]$ with both ends falling in the same rental period as $t-\pi_{ij}>b-1$ and $t+\pi_{ij}<b$. Meanwhile, both crowdsourcing strategies lead to the same vehicle travel distance $2\pi_{ij}$ and thus the same operation cost $2\pi_{ij} c_0$. It then concludes that the double-sided strategy dominates with fewer rental periods and non-increasing operation cost. 

        \item Suppose the platform only sources AVs from destination regions. The result can be similarly proved by setting $0<\delta\leq \min(\underline{\pi}, s-2\bar{\pi})$. In this case, the trips of interest with departure time $t\in[b, b+\delta)$ for some $b\in\mathcal{B}$ is served by some AV sourced from a destination region $j'\in \mathcal{J}_d$. If $j'\neq j$, the minimum sourcing interval is $[t-\pi_{ij'}, t+\pi_{ij}+\pi_{jj'}]$; otherwise $[t-\pi_{ij}, t+\pi_{ij}]$. In both cases, it crosses the boundary $b$ and yields two rental periods. In contrast, if the platform sources AVs from region $i$, the minimum sourcing interval reduces to $[t, t+2\pi_{ij}]$, which is covered by a single rental period. The dominance of operation cost can be easily proved by the triangle inequality Eq.~\eqref{eq:tri_ineq}. 
    \end{enumerate}

\end{proof}

\begin{prop}
     \label{prop_both_zone}
     Suppose Assumption~\ref{ass:linear-opportunity-cost} and the same conditions as in Lemma~\ref{lemma_cross} hold. Then, the optimal double-sided crowdsourcing strategy strictly dominates the optimal single-sided strategy with a lower total cost.
\end{prop}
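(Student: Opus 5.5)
The plan is to build a feasible double-sided strategy directly from the optimal single-sided solution and show that its total cost is strictly lower. Let $x^{s}=\text{col}(P^s,Y^s,N^s,Z^s,V^s,E^s,W^s)$ be an optimal single-sided solution, taken from the reformulation in Proposition~\ref{prop1}, and suppose Case A or Case B of Lemma~\ref{lemma_cross} holds.

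First, I will isolate a small positive amount $\eta>0$ of crowdsourced vehicle flow that serves the trips $(t,i,j)$ identified in the Lemma. In Case A these trips depart in $(b-\pi_{ij}-\delta,b-\pi_{ij}]$; in Case B they depart in $[b,b+\delta)$. The Lemma shows that each such unit of flow can be rerouted to a vehicle sourced from the opposite side: region $j$ in Case A, or region $i$ in Case B. Under that rerouting, the vehicle's minimum sourcing interval lies inside a single rental period instead of spanning two, and the operation cost does not increase. I will then write the modified strategy explicitly:
\begin{itemize}
\item decrease $N^\theta$ by $\eta$ in the region that was used before, for the two rental periods it occupied;
\item increase $N^{\theta'}$ by $\eta$ in the newly used region, for one rental period;
\item adjust $Y$, $V$, $E$, $Z$ along the corresponding paths so that constraints \eqref{eq:empty-flow-dynamics-private-av}--\eqref{eq:private-av-settle} still hold.
\end{itemize}
The demand served, and hence $W$, stays unchanged. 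Because of Assumption~\ref{ass:linear-opportunity-cost} and Eq.~\eqref{eq:linear-private-av-supply}, every supply change corresponds to a price change $P_{hj}=\underline{\varepsilon}+\frac{N^\theta_{hj}}{M^\theta\Omega^\theta_{hj}}(\overline{\varepsilon}-\underline{\varepsilon})$. The hypothesis of positive potential supply in every region ensures that a class $\theta'$ with $\Omega^{\theta'}=1$ on the needed cell exists, so the perturbation is admissible.

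The core step is the cost comparison, and I expect it to be the main obstacle. The rental cost is the quadratic $\sum_{\theta}\sum_{h,j}P_{hj}N^\theta_{hj}$, so a unit of supply does not have a fixed price. Moving $\eta$ units from two rental cells to one rental cell changes the cost to first order in $\eta$ by
\[
\eta\Bigl(\partial_{N}\bigl(PN\bigr)\big|_{\text{new cell}}-\partial_N\bigl(PN\bigr)\big|_{h-1}-\partial_N\bigl(PN\bigr)\big|_{h}\Bigr)+O(\eta^2).
\]
Each marginal cost equals $\underline{\varepsilon}+2N(\overline{\varepsilon}-\underline{\varepsilon})/M$, which is at least $\underline{\varepsilon}$. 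The new cell starts at zero supply under the single-sided strategy, so its marginal cost is exactly $\underline{\varepsilon}$. The two removed cells therefore contribute at least $2\underline{\varepsilon}$, and the first-order change is at most $-\underline{\varepsilon}\eta<0$. Choosing $\eta$ small enough makes the $O(\eta^2)$ term negligible. The operation cost change is nonpositive by the Lemma, via the triangle inequality \eqref{eq:tri_ineq}. The perturbed solution thus has strictly lower objective value.

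I need to check carefully that removing $\eta$ from the period-$(h-1)$ and period-$h$ cells is legitimate. The other vehicles of that class must not require those periods, which is why I perturb only a marginal share $\eta$ of the flow and rely on the uniform distribution, which makes supply divisible. Finally, the optimal double-sided strategy costs no more than this feasible perturbation, which completes the strict dominance.
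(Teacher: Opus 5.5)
Your proposal is correct and follows essentially the paper's argument. Starting from the optimal single-sided solution, you shift a small amount of supply, as licensed by Lemma~\ref{lemma_cross}, into an opposite-side cell that currently has zero supply, and you use the linear supply of Assumption~\ref{ass:linear-opportunity-cost} to show that the quadratic rental cost falls to first order while the operation cost does not rise. The only difference is where strictness comes from: you take it from the eliminated rental period (each removed cell has marginal cost at least $\underline{\varepsilon}>0$), whereas the paper uses the saved period only as a weak inequality and gets strictness period by period from the positive gap $\mathring{N}_{i,h}/\overline{M}_i+(\mathring{P}_{i,h}-\underline{\varepsilon})$ between the marginal rental cost of the loaded origin cell and the marginal cost $\underline{\varepsilon}$ of the empty cell.
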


\begin{proof}
    Suppose the platform initially sources AVs exclusively from region $i$. By Lemma \ref{lemma_cross}, there exist time windows where replacing a vehicle sourced from region $i$ with one from region $j$ serves the same trip while reducing the total vehicle rental hours, without increasing the vehicle operating cost. We focus on the change in total payment cost, denoted by $\delta$. 
    The original rental cost is $\sum_h \mathring{N}_{i,h} \mathring{P}_{i,h} $. 
    Following the strategy in the analysis of  Lemma \ref{lemma_cross}, denote $\hat{\mathcal{H}}$ as the set of rental periods that the platform decreases $\epsilon$ vehicles. Since Lemma \ref{lemma_cross} shows that the total vehicle hours decreases, the new rental cost is less than
    \begin{align*}
        & \sum_{h\in\hat{\mathcal{H}}} \left[(\mathring{N}_{i,h}- \epsilon)(\mathring{P}_{i,h}- \Delta P_{i,h}) + \epsilon(\underline{\varepsilon} + \Delta P_{j,h}) \right]+\sum_{h\notin\hat{\mathcal{H}}} \mathring{N}_{i,h}\mathring{P}_{i,h}
    \end{align*}
    Then, let $\mathring{P}_{i,h}$ denote the original equilibrium price and $\Delta P_{i,h}$ denote  the reduction in price, and the new payment in region $j$ (starting from a baseline $\underline{\varepsilon}$) represented as $ (\underline{\varepsilon} + \Delta P_{j,h})$. We have 
    \begin{align*}
        & \delta \le \sum_{h\in\hat{\mathcal{H}}} \left[(\mathring{N}_{i,h}- \epsilon)(\mathring{P}_{i,h}- \Delta P_{i,h}) + \epsilon(\underline{\varepsilon} + \Delta P_{j,h}) - \mathring{N}_{i,h}\mathring{P}_{i,h}\right] \\
        & = \sum_{h\in\hat{\mathcal{H}}} \left[-\mathring{N}_{i,h}\Delta P_{i,h} - \epsilon (\mathring{P}_{i,h}- \Delta P_{i,h} - \underline{\varepsilon} - \Delta P_{j,h})\right]\\
        &  \overset{(a)}{=} \sum_{h\in\hat{\mathcal{H}}}\left[-\mathring{N}_{i,h}\left( \frac{\overline{M}_j}{\overline{M}_i} \Delta P_{j,h} \right) - \overline{M}_j \Delta P_{j,h} (\mathring{P}_{i,h}- \frac{\overline{M}_j}{\overline{M}_i} \Delta P_{j,h} - \underline{\varepsilon} - \Delta P_{j,h}) \right]\\
        & = \sum_{h\in\hat{\mathcal{H}}}\left\{-\overline{M}_j \Delta P_{j,h} \left[ \frac{\mathring{N}_{i,h}}{\overline{M}_i} + (\mathring{P}_{i,h}- \underline{\varepsilon}) - \Delta P_{j,h} \left( 1 + \frac{\overline{M}_j}{\overline{M}_i} \right) \right] \right\}
    \end{align*}
    where the equality (a) is because both the decrease in vehicle supply in region $i$ and the increase in vehicle supply in region $j$ are equal to $\epsilon$, and Assumption \ref{ass:linear-opportunity-cost} implies that $\overline{M}_i \Delta P_{i,h} = \overline{M}_j \Delta P_{j,h}$.

    As $\epsilon \to 0^+$, it implies $\Delta P_{j,h} \to 0^+$. In the limit, the term inside the square brackets simplifies to $\frac{N_{i,h}^{*}}{\overline{M}_i} + (\mathring{P}_{i,h}- \underline{\varepsilon})$. Since $\mathring{N}_{i,h}> 0$, $\overline{M}_i > 0$, and the equilibrium price $\mathring{P}_{i,h}$ is greater than the baseline opportunity cost $\underline{\varepsilon}$, the term in the brackets is strictly positive. Therefore, $\delta < 0$, proving that the total payment cost decreases when sourcing from both zones. The case where the platform initially sources AVs exclusively from region $j$ follows by symmetry.
\end{proof}

\section{Sensitivity analyses of influential factors} \label{sec_num}

This section proceeds to explore key factors that affect the performance of the AV crowdsourcing service through numerical experiments on a stylized network. As shown in Figure~\ref{fig:region}, the network consists of a residential area and a commercial area.
Accordingly, we construct trips of both AV owners and MoD customers following the daily commuting pattern, shown in Figure~\ref{fig:shift_concept}. Specifically, all MoD trips are from the residential area to the commercial area during the morning peak from 7 am to 10 am, while the opposite direction applies for the evening peak from 17 pm to 20 pm.
The travel pattern of AV owners closely mimics that of MoD customers but shifts either forward or backward (see two examples in Figure~\ref{fig:shift_concept}). 

All experiments presented in this section adopt the rental interval $\Delta_s = 1$ hr and service interval $\Delta_t = 15$ min, consistent with the Chicago case study in Section~\ref{sec_chicago}. 
The service horizon spans the entire 24 hours of a day, and the default travel times are set as $\Delta_t$ for intra-region trips and $3\Delta_t$ for inter-region trips.  
Following Assumption~\ref{ass:linear-opportunity-cost}, we assign a uniform distribution of AV owners' opportunity cost with lower bound $\underline{\varepsilon} = 1$ and upper bound $\overline{\varepsilon} = 25$  (\$/hr). The operation cost is set as $c_0 = 0.1$ (\$/hr) following \cite{WU201552}. Finally, the fleet size of platform-owned AVs is assumed to be $N^\nu = 100$.

\begin{figure}[h]
    \centering 
    \includegraphics[width=0.5\linewidth]{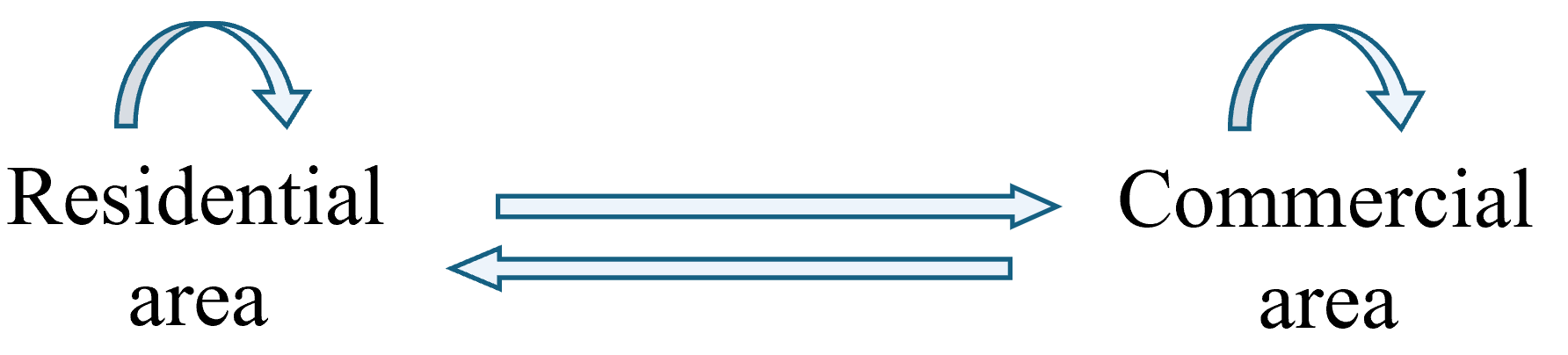}
    \caption{Stylized two-region network. }
    \label{fig:region}
\end{figure}
\begin{figure}[h]
    \centering
    \includegraphics[width=0.9\linewidth]{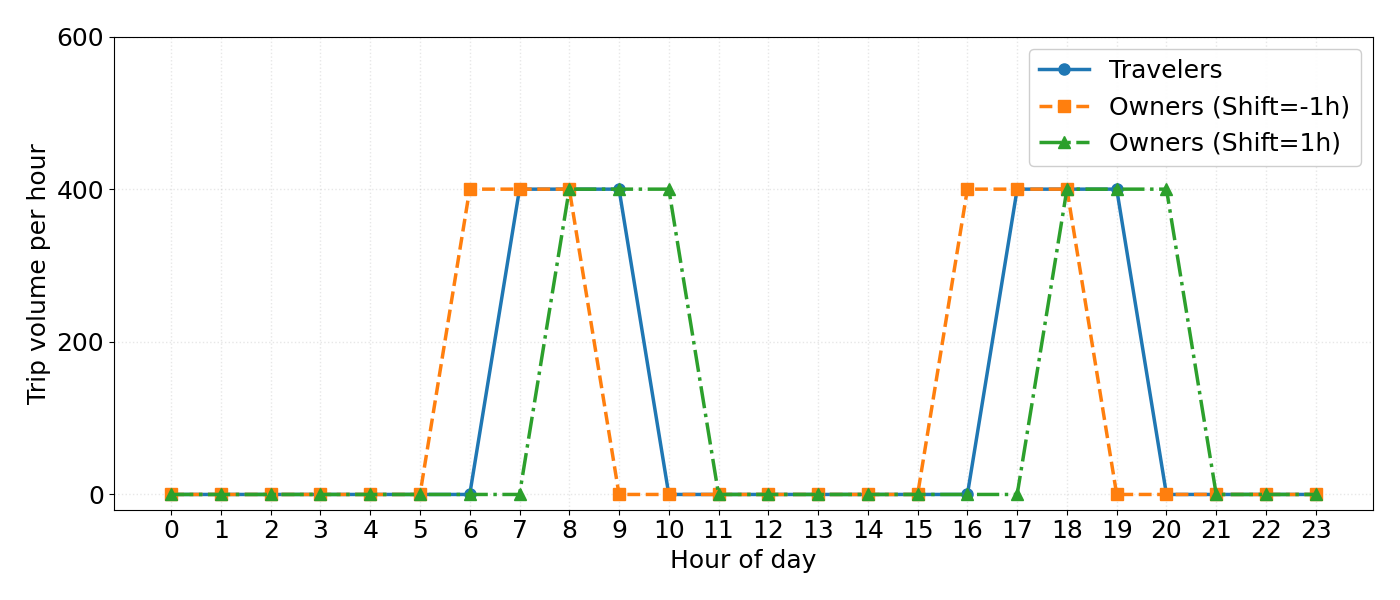}
    \caption{An example of travel time shift between owners and customers.}
    \label{fig:shift_concept}
\end{figure}

In what follows, we examine the joint effect of the activity pattern shift between AV owners and MoD customers, the primary factor, and each of the three other factors on both optimal platform strategies and service performance metrics. The goal is not to deliver practical insights, which is, however, the main purpose of Section~\ref{sec_chicago}, but instead to quantify the sensitivity of the AV crowdsourcing service towards these factors.

\subsection{Demand loss penalty}
We first investigate the service outcomes against the shift in activity patterns and the penalty of loss customer. 
As shown in Figure~\ref{fig:performance_metrics_shift}, the service achieves a higher efficiency with higher demand fulfillment and lower total service cost when the absolute gap between activity patterns expands. The direction of shift, however, does not make a significant difference, as all the results are mostly symmetric. These findings are generally expected because a significant supply-demand imbalance is likely to arise when passengers and AV owners simultaneously require vehicle access.

As expected, the higher the penalty, the more aggressive the private AV acquisition, which is reflected in both rental hours and price as per Figures~\ref{fig:performance_metrics_shift} (b) and (c). After the absolute travel time shift exceeds 1 hour, rental hours and served demand increase as the magnitude of the shift grows. This increasing trend, followed by a gradual slowdown, is more noticeable under the high-penalty scenario.
Noticeably, under the high-penalty scenario, rental prices exhibit substantial oscillations, driven by the trade-off between demand-loss penalties and supply acquisition costs. Overall, results in Figure~\ref{fig:performance_metrics_shift} suggest that given the considerable lost customer penalty, the success of AV crowdsourcing largely depends on the difference in activity patterns between private AV owners and passengers. 
\begin{figure}[!b]
    \centering
    \begin{subfigure}{0.45\linewidth}
        \centering
        \includegraphics[width=\linewidth]{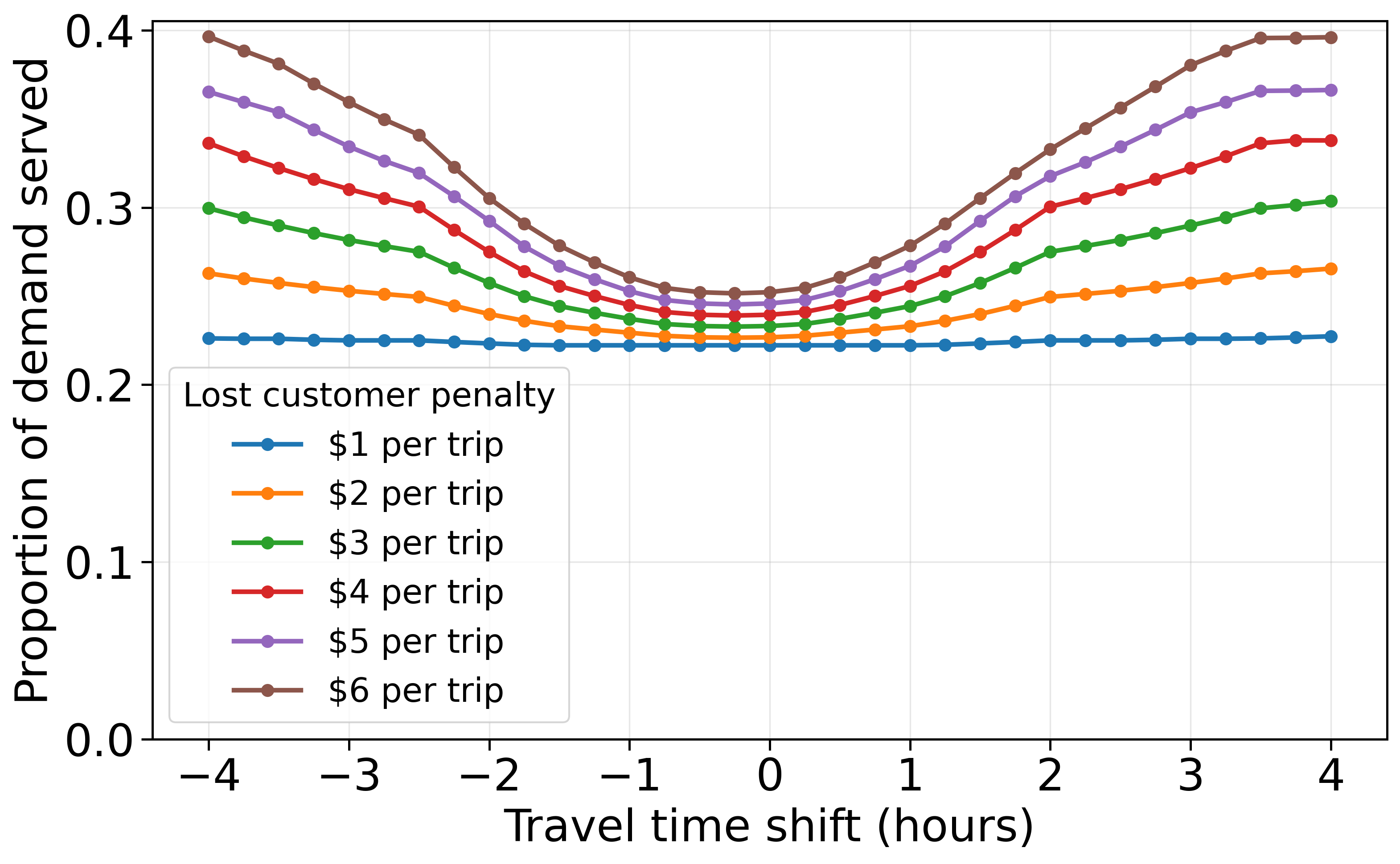}
        \caption{Proportion of demand served}
        \label{fig:demand_served}
    \end{subfigure}
    \hfill 
    \begin{subfigure}{0.45\linewidth}
        \centering
        \includegraphics[width=\linewidth]{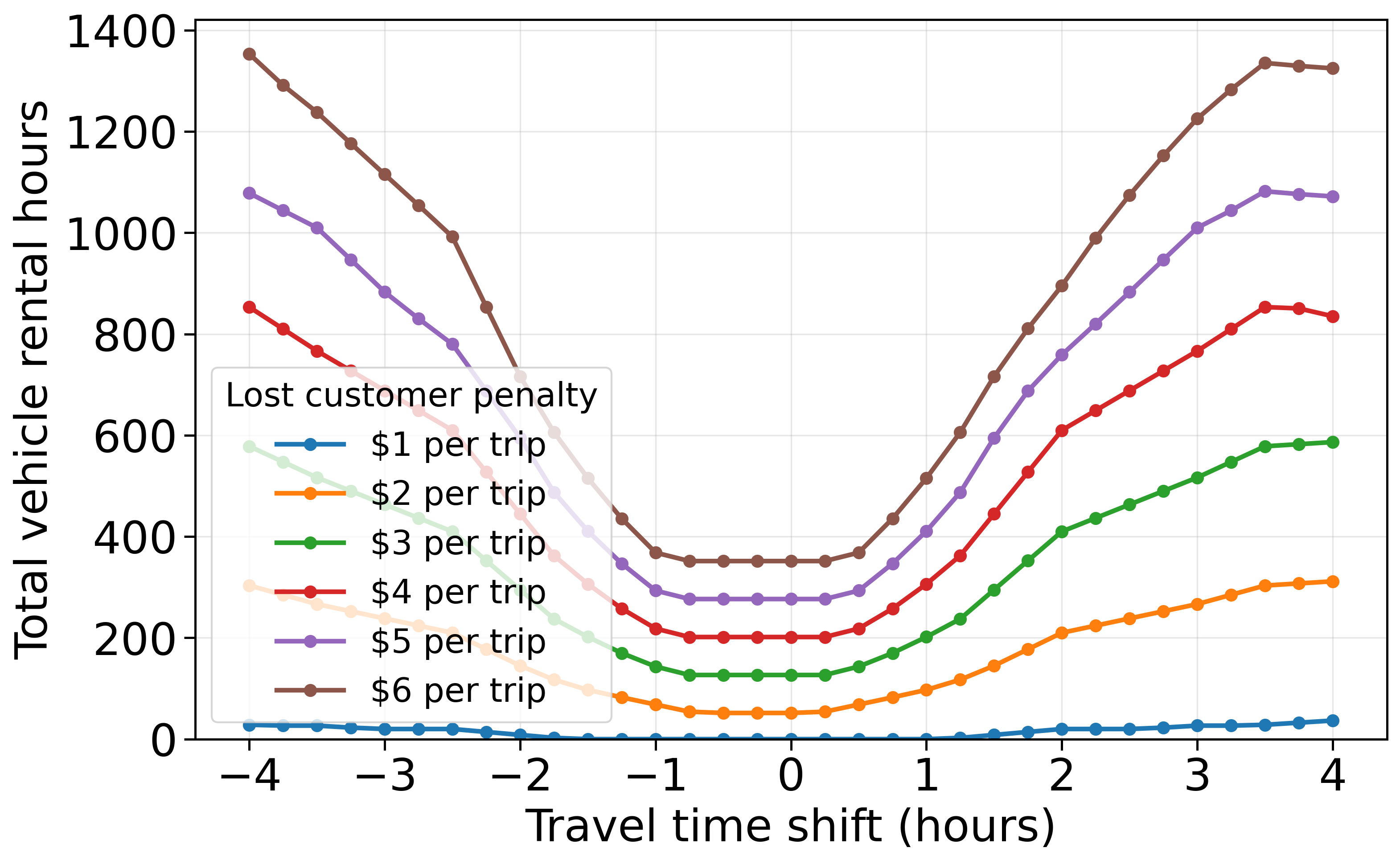}
        \caption{Total vehicle rental hours}
        \label{fig:rental_hours}
    \end{subfigure}
    \begin{subfigure}{0.45\linewidth}
        \centering
        \includegraphics[width=\linewidth]{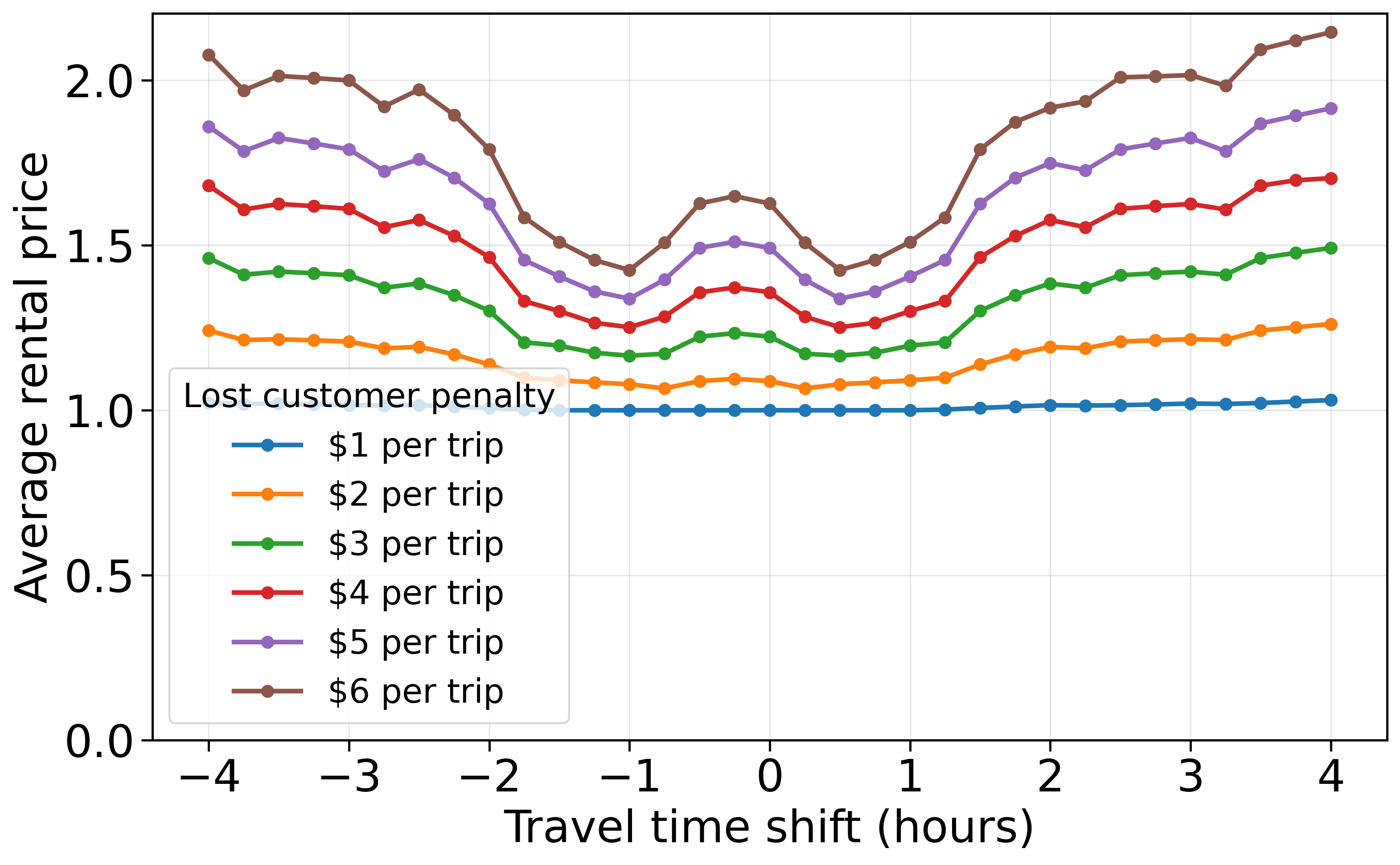}
        \caption{Average rental price}
        \label{fig:avg_price}
    \end{subfigure}
    \hfill
    \begin{subfigure}{0.45\linewidth}
        \centering
        \includegraphics[width=\linewidth]{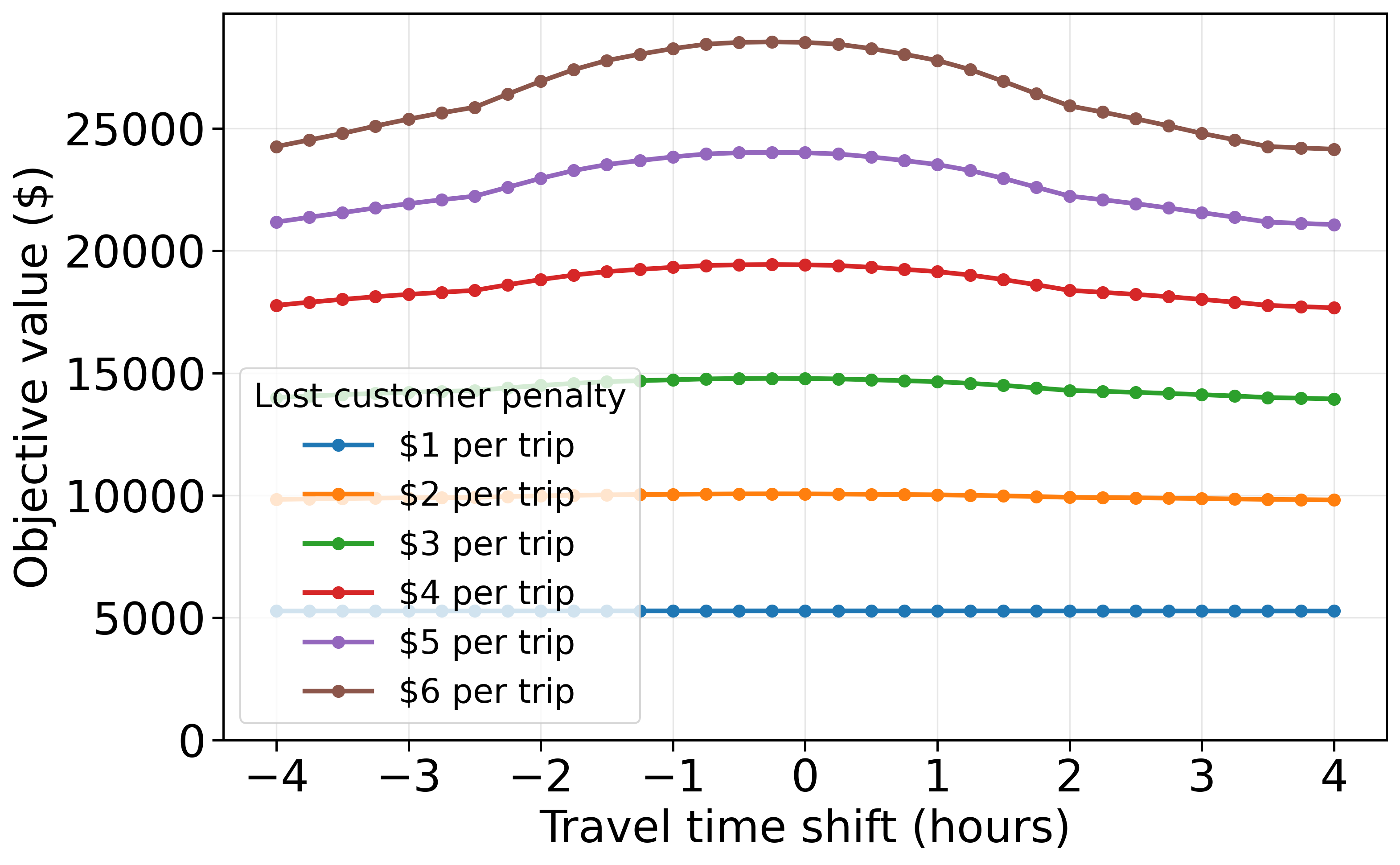}
        \caption{Objective value}
        \label{fig:obj_value}
    \end{subfigure}
    \caption{Results under different activity pattern shifts and demand loss penalties.} 
    \label{fig:performance_metrics_shift} 
\end{figure}

%
\subsection{Reliability buffer}
To guarantee vehicle access for AV owners when needed, some buffer time shall be introduced before and after the AV owner's trip, as illustrated in Figure~\ref{fig:buffer_concept}. Consequently, the available period for crowdsourcing becomes shorter. The length of buffer time also reflects the AV owner's risk preference: a larger buffer indicates the owner is highly risk-averse and favors access to his/her own vehicle over the potential revenue of vehicle sharing. 

\begin{figure}[h]
    \centering
    \includegraphics[width=0.65\linewidth]{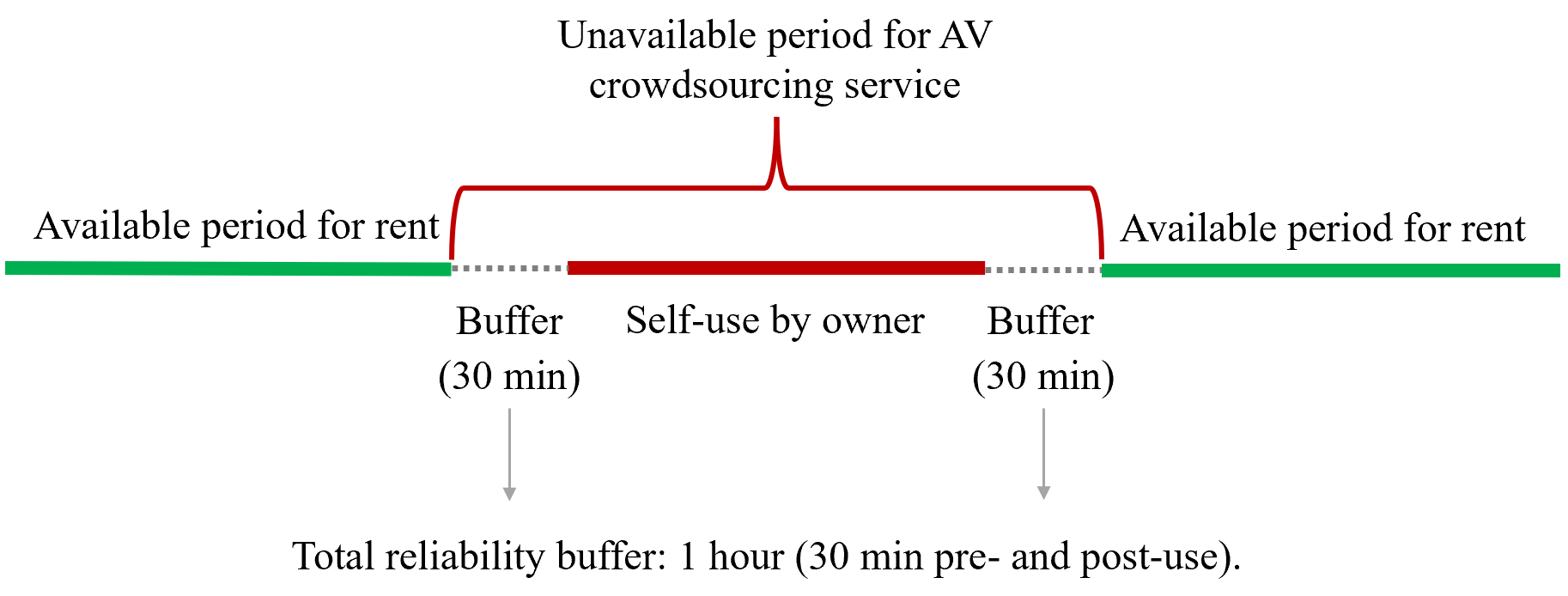}
    \caption{Illustration of one-hour reliability buffer}
    \label{fig:buffer_concept}
\end{figure}

\begin{figure}[!h]
    \centering
    \begin{subfigure}{0.45\linewidth}
        \centering
        \includegraphics[width=\linewidth]{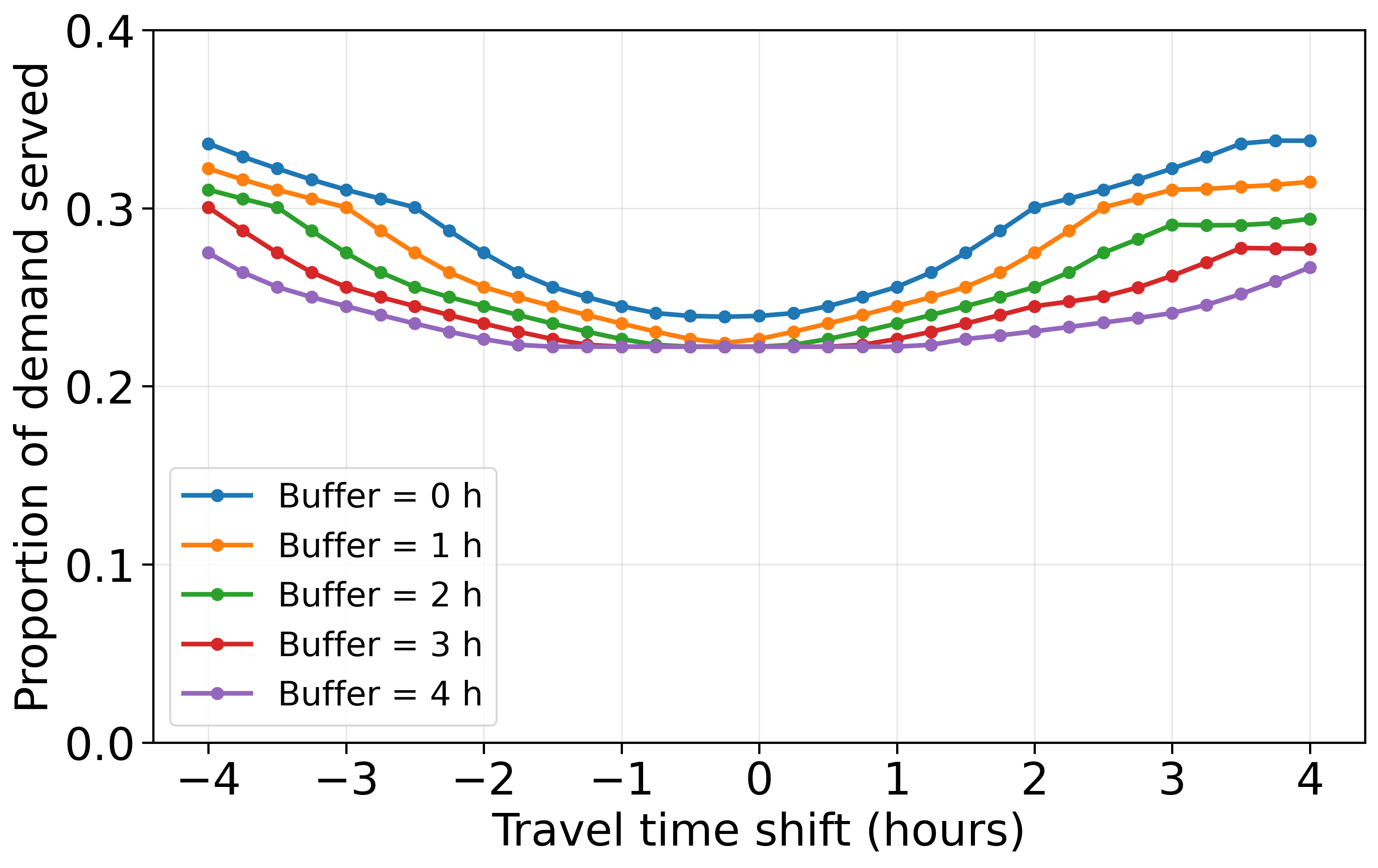}
        \caption{Proportion of demand served}
        \label{fig:demand_served}
    \end{subfigure}
    \hfill 
    \begin{subfigure}{0.45\linewidth}
        \centering
        \includegraphics[width=\linewidth]{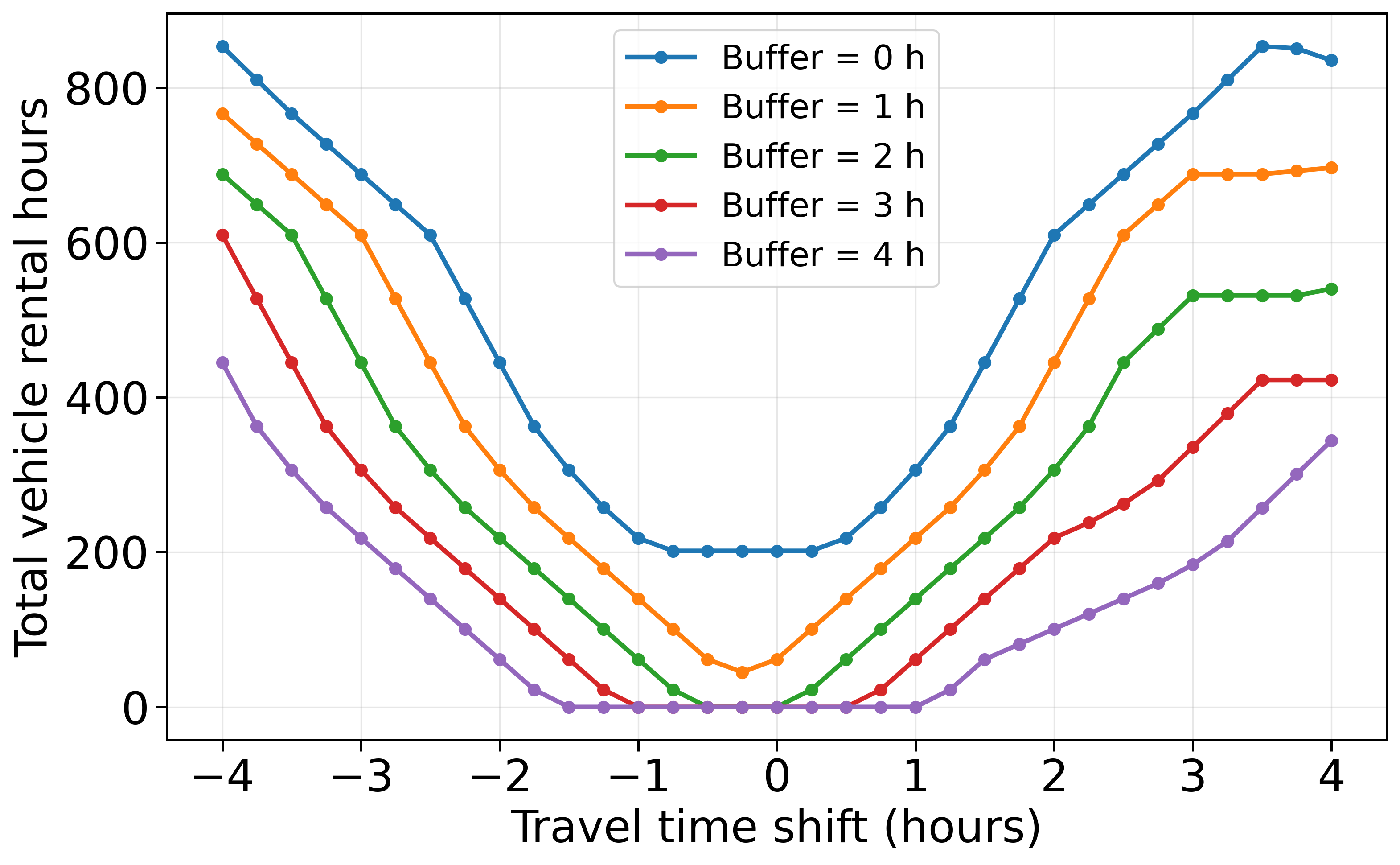}
        \caption{Total vehicle rental hours}
        \label{fig:rental_hours}
    \end{subfigure}

    \begin{subfigure}{0.45\linewidth}
        \centering
        \includegraphics[width=\linewidth]{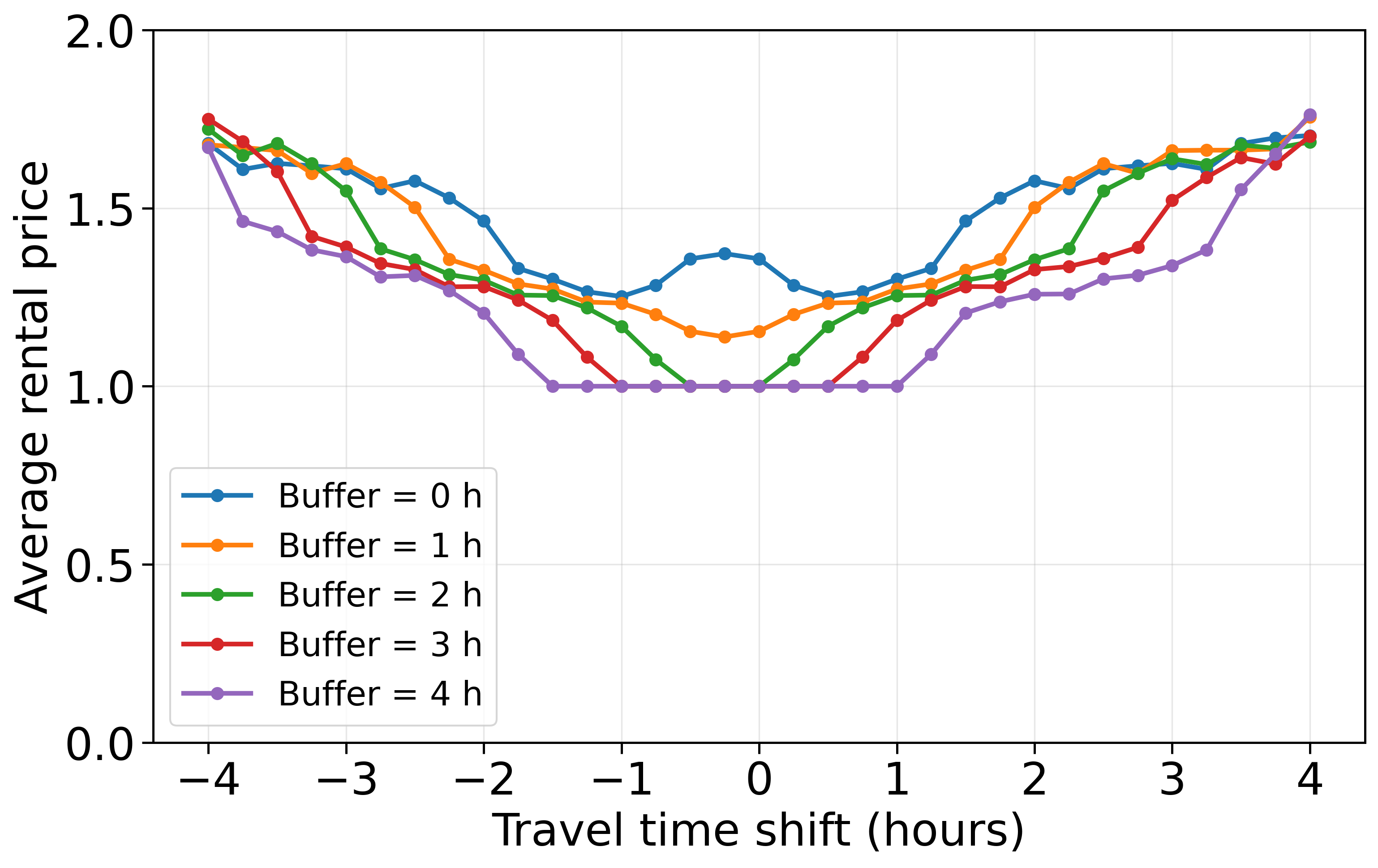}
        \caption{Average rental price}
        \label{fig:avg_price}
    \end{subfigure}
    \hfill
    \begin{subfigure}{0.45\linewidth}
        \centering
        \includegraphics[width=\linewidth]{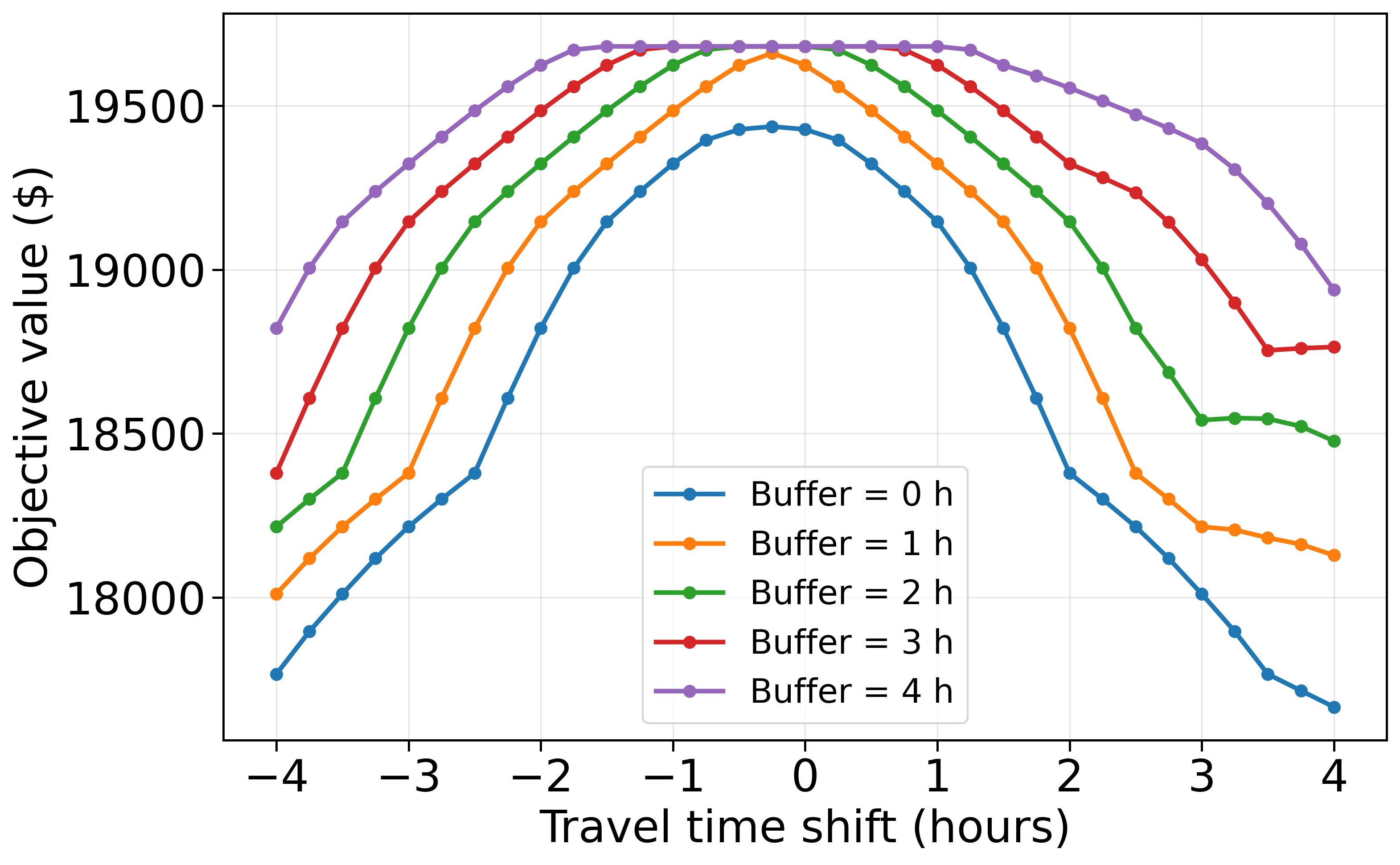}
        \caption{Objective value}
        \label{fig:obj_value}
    \end{subfigure}
    \caption{Results under different activity pattern shifts and reliability buffers.} 
    \label{fig:performance_metrics_buffer} 
\end{figure}

As shown in Figure \ref{fig:performance_metrics_buffer}, a larger buffer compromises the service performance, but the negative impact can be offset by the shift in activity patterns. For instance, with a four-hour buffer, the service rate under a four-hour activity pattern shift recovers to the level observed with zero buffer and one-hour shift. However, when the buffer is considerably large, the platform no longer sources AVs if the activity patterns of AV owners and MoD customers are not sufficiently different. 
For example, with a four-hour buffer, the platform does not implement crowdsourcing when the shift is below one hour. Moreover, with the same level of rental price, much fewer vehicle hours are obtained with longer buffers, as shown in Figures~\ref{fig:performance_metrics_buffer}(b) and (c). These results indicate that the reliability buffer, together with the activity pattern shift, plays an important role in determining the feasibility of AV crowdsourcing services. It also plays a critical role in the design and operation of such services, which requires a deeper analysis considering the heterogeneity among AV owners.

\subsection{Inter-region travel time} \label{sec_time}

\begin{figure}[!h]
    \centering
    \begin{subfigure}{0.45\linewidth}
        \centering
        \includegraphics[width=\linewidth]{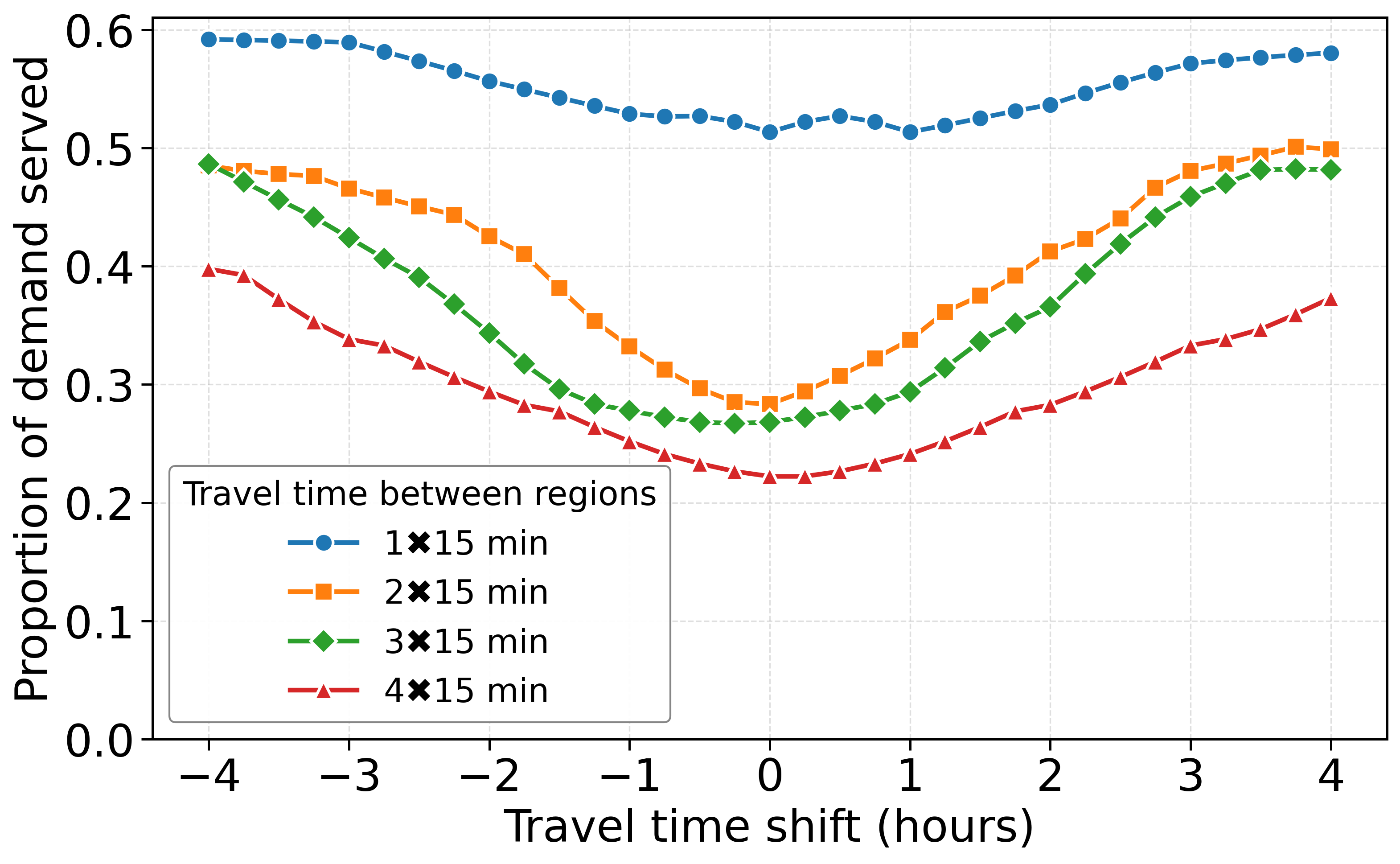}
        \caption{Proportion of demand served}
        \label{subfig:demand_rate}
    \end{subfigure}
    \hfill
    \begin{subfigure}{0.45\linewidth}
        \centering
        \includegraphics[width=\linewidth]{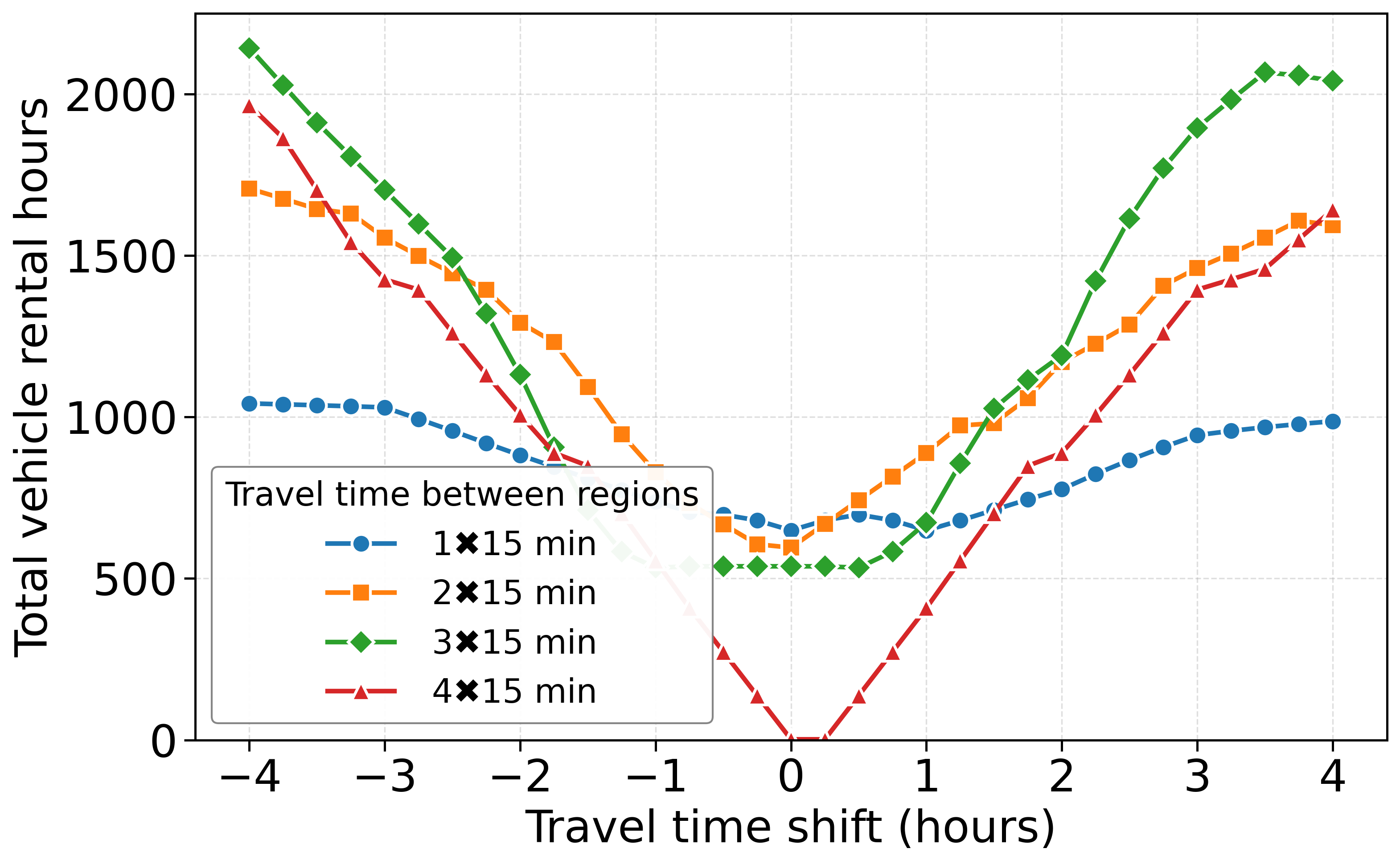}
        \caption{Total vehicle rental hours}
        \label{subfig:supply}
    \end{subfigure}
    \begin{subfigure}{0.45\linewidth}
        \centering
        \includegraphics[width=\linewidth]{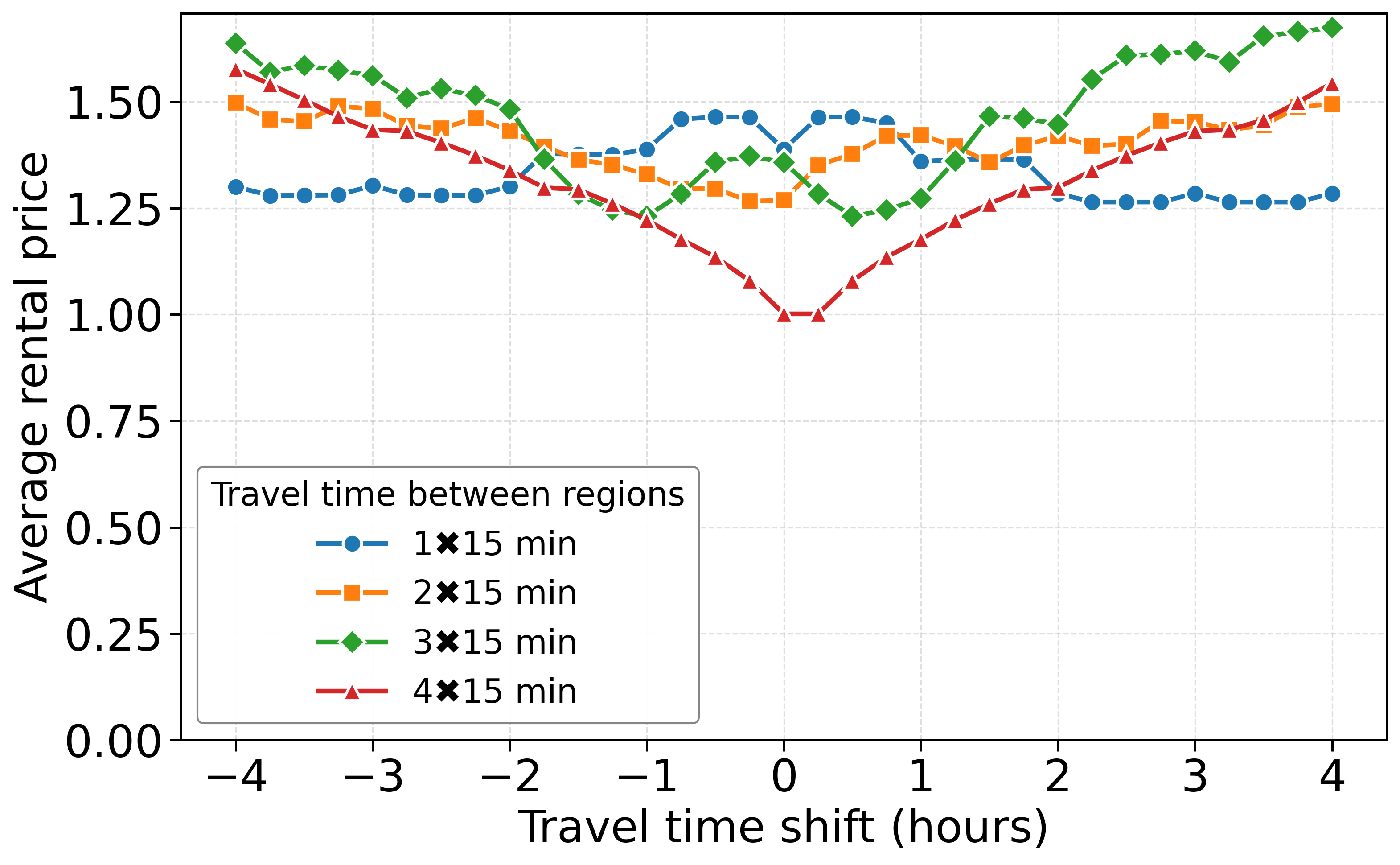}
        \caption{Average rental price}
        \label{subfig:price}
    \end{subfigure}
    \hfill
    \begin{subfigure}{0.45\linewidth}
        \centering
        \includegraphics[width=\linewidth]{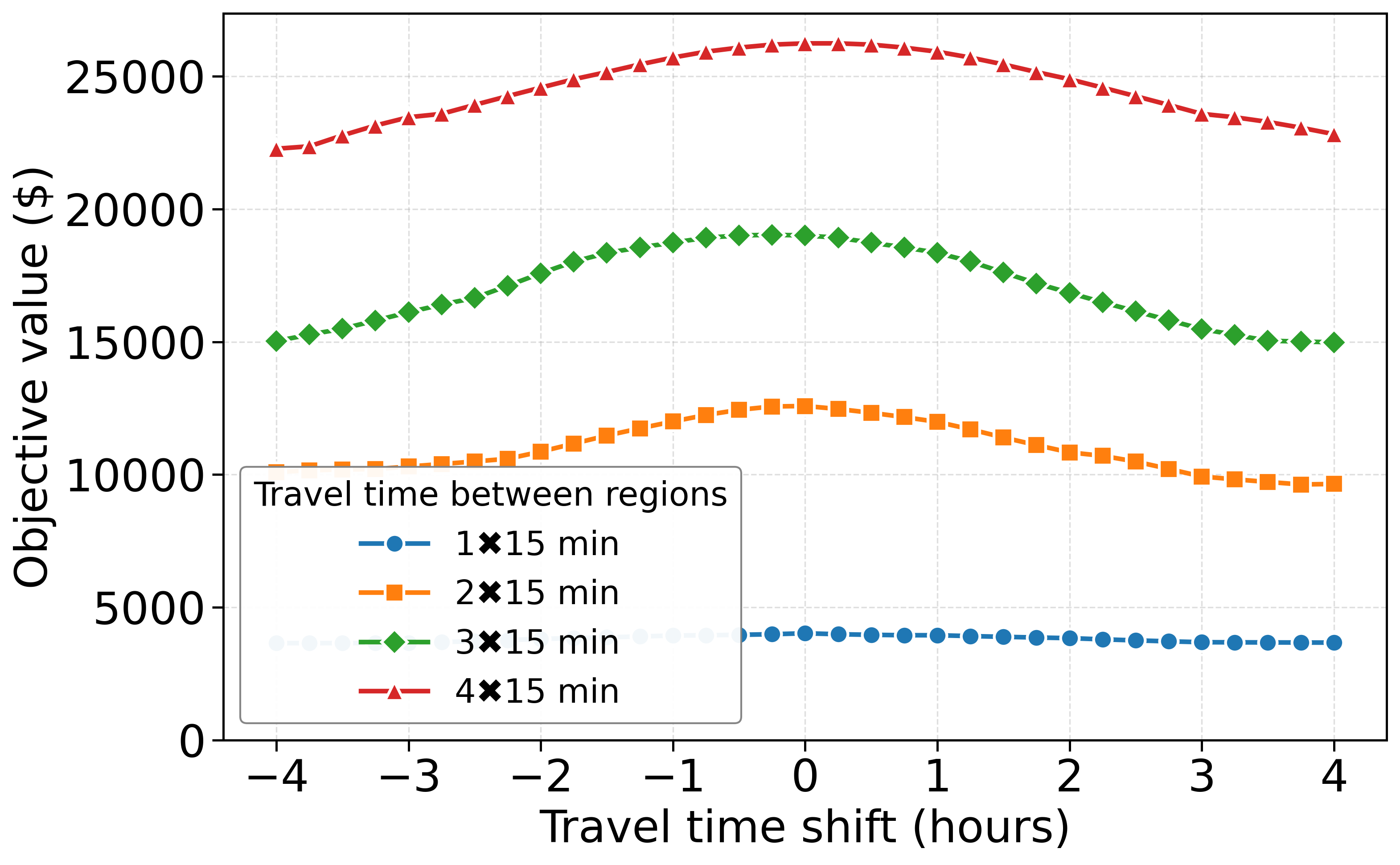}
        \caption{Objective value}
        \label{subfig:objective}
    \end{subfigure}
    \begin{subfigure}{0.45\linewidth}
        \centering
        \includegraphics[width=\linewidth]{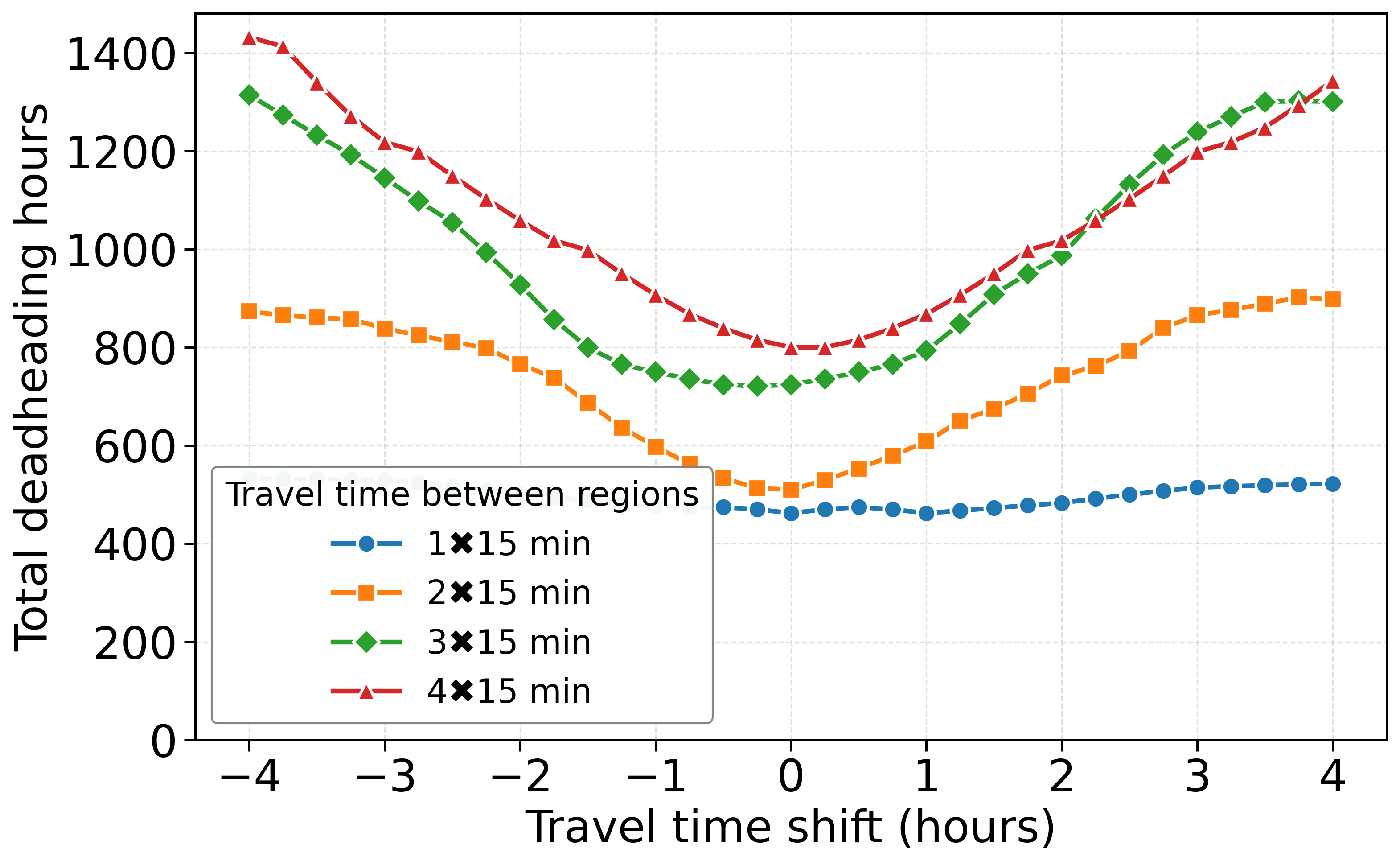}
        \caption{Total deadheading hours}
        \label{subfig:deadheading}
    \end{subfigure}
    \caption{Results under different travel pattern shifts and inter-region travel times.} 
    \label{fig:sensitivity_analysis_travel_time}
\end{figure}

The inter-region travel time is particularly important when crowdsourced AVs are required to be returned to the same locations. If there are no passenger trips in the same direction as the vehicle return trips and the inter-zone travel time is long, a large number of deadheading hours is expected. 
This result is indeed reflected in Figure~\ref{fig:sensitivity_analysis_travel_time} (e), especially when the activity pattern shift is large. In this case, more private AVs are available and also rented to fulfill the majority of passenger demand; see Figures~\ref{fig:sensitivity_analysis_travel_time} (a) and (b). In contrast, when inter-region travel time is sufficiently short, the benefit of activity pattern shifts diminishes because the platform can easily coordinate the movements of private AVs between regions. Furthermore, a slightly lower rental price is observed in Figure~\ref{fig:sensitivity_analysis_travel_time} (c), indicating an oversupply of private AVs. 

Figures~\ref{fig:sensitivity_analysis_travel_time}(b) and (d) indicate that, when the inter-region travel time is extremely high (e.g., one hour) and the activity pattern shift is relatively small,  the platform does not crowdsource any AVs, meanwhile incurring the highest cost. This indicates that the inter-region travel time, together with the activity pattern shift between owners and customers, is a determining factor affecting service feasibility. With shorter inter-region travel time, the service is more likely to be feasible. 


\section{Case study of Chicago} \label{sec_chicago}

To further validate the operational feasibility of AV crowdsourcing services, we conduct a case study of Chicago based on the 2019 household travel survey by the Chicago Metropolitan Agency of Planning~\citep{CMAP_TravelSurvey2019}, and the open-access ride-hailing data~\citep{chicago_tnp_trips_2023}.

\subsection{Model setups}\label{sec_setup}
\begin{figure}[h]
    \centering
    \begin{subfigure}[b]{0.45\linewidth}
        \centering
        \includegraphics[width=\linewidth]{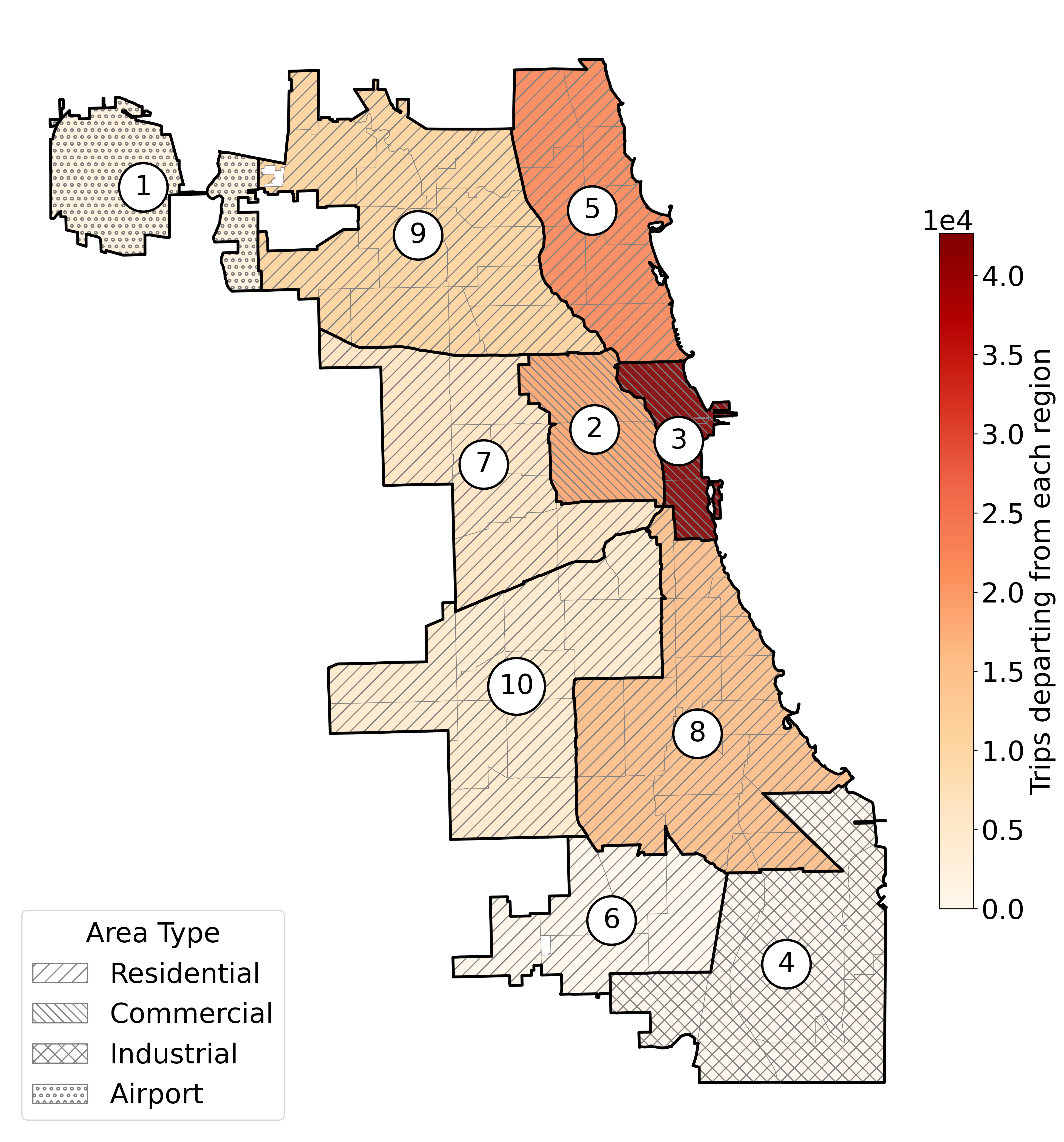}
        \caption{Potential travelers' origin distribution}
    \end{subfigure}
    \begin{subfigure}[b]{0.45\linewidth}
        \centering
        \includegraphics[width=\linewidth]{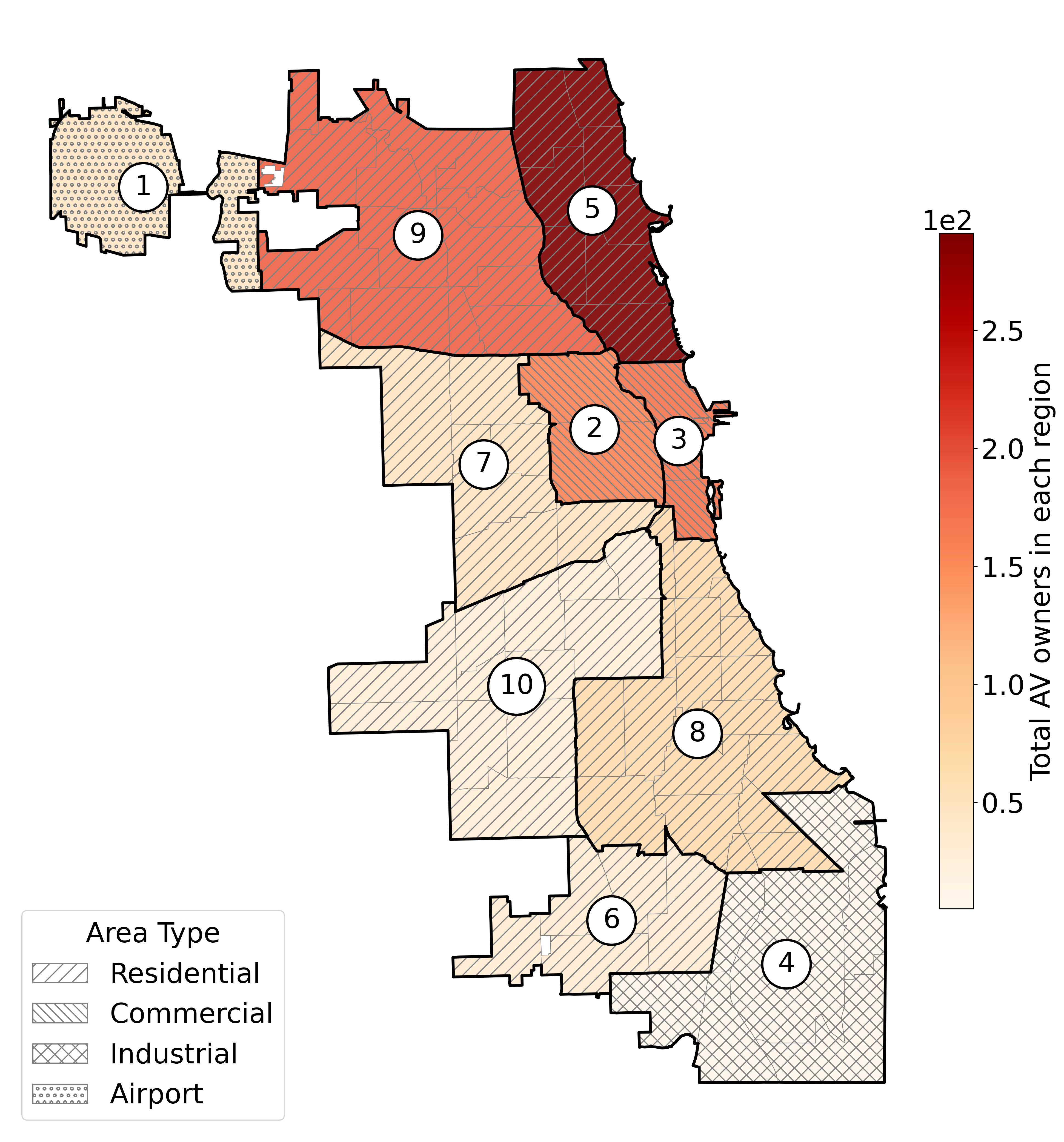}
        \caption{Potential AV owner distribution}
    \end{subfigure}
    \caption{Potential demand and supply distribution in Chicago.}
    \label{fig:chicago_comparison}
\end{figure}

We combine the land use and geographical features to partition the city of Chicago into 10 regions~\citep{ChicagoPlanningRegions}. 
The time intervals are set to $\Delta_t=15 \ \text{min}$ and $\Delta_s=1\ \text{h}$. 
Due to the lack of empirical data about private AV owners, we consider high-income individuals with private vehicles in the survey as potential AV owners and rescale it proportionally to the population. 
The spatial distribution of AV owners is shown in Figure~\ref{fig:chicago_comparison} (b).
Following \cite{valente2019sharing}, we assume the opportunity cost of AV owners follows a single uniform distribution while setting $\underline{\varepsilon} = 1$ (\$/hr) and $\overline{\varepsilon} = 25$ (\$/hr), while a further characterization of their activity patterns is detailed in Section~\ref{sec_cluster_owner}. 
The number of platform-owned vehicles is $N^\nu=100$.  

To estimate the MoD service demand pattern, we extracted the ride-hailing trip data on November 20, 2024, filtered trips that both started and ended in the study region, and mapped them to time periods and regions specified in this case study. The spatial distribution of trip origin is shown in Figure~\ref{fig:chicago_comparison} (a), while the temporal demand pattern is illustrated in Figure~\ref{fig: intraday_chicago_demand}. 
\begin{figure}[h]
    \centering
    \includegraphics[width=0.65\linewidth]{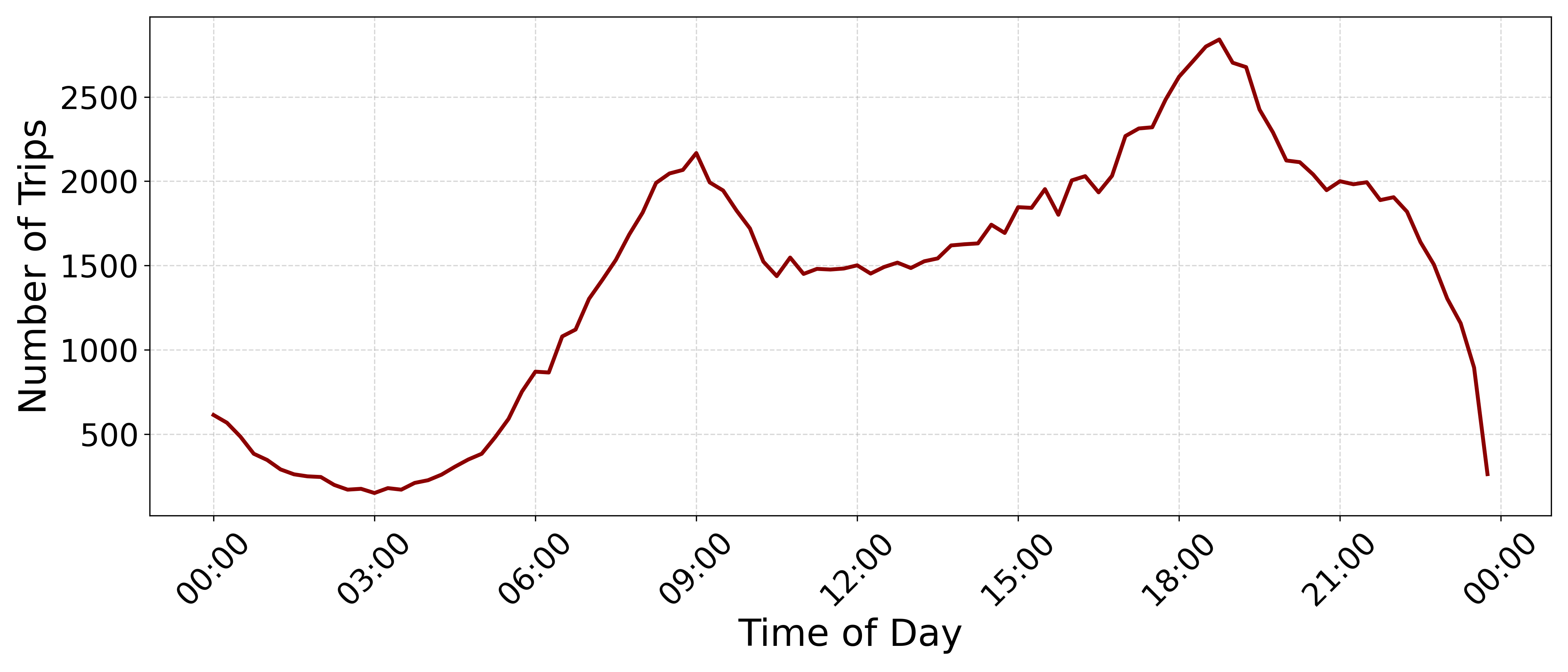}
    \caption{Variation in demand over the day}
    \label{fig: intraday_chicago_demand}
\end{figure}
\begin{figure}[h]
    \centering
    \includegraphics[width=0.45\linewidth]{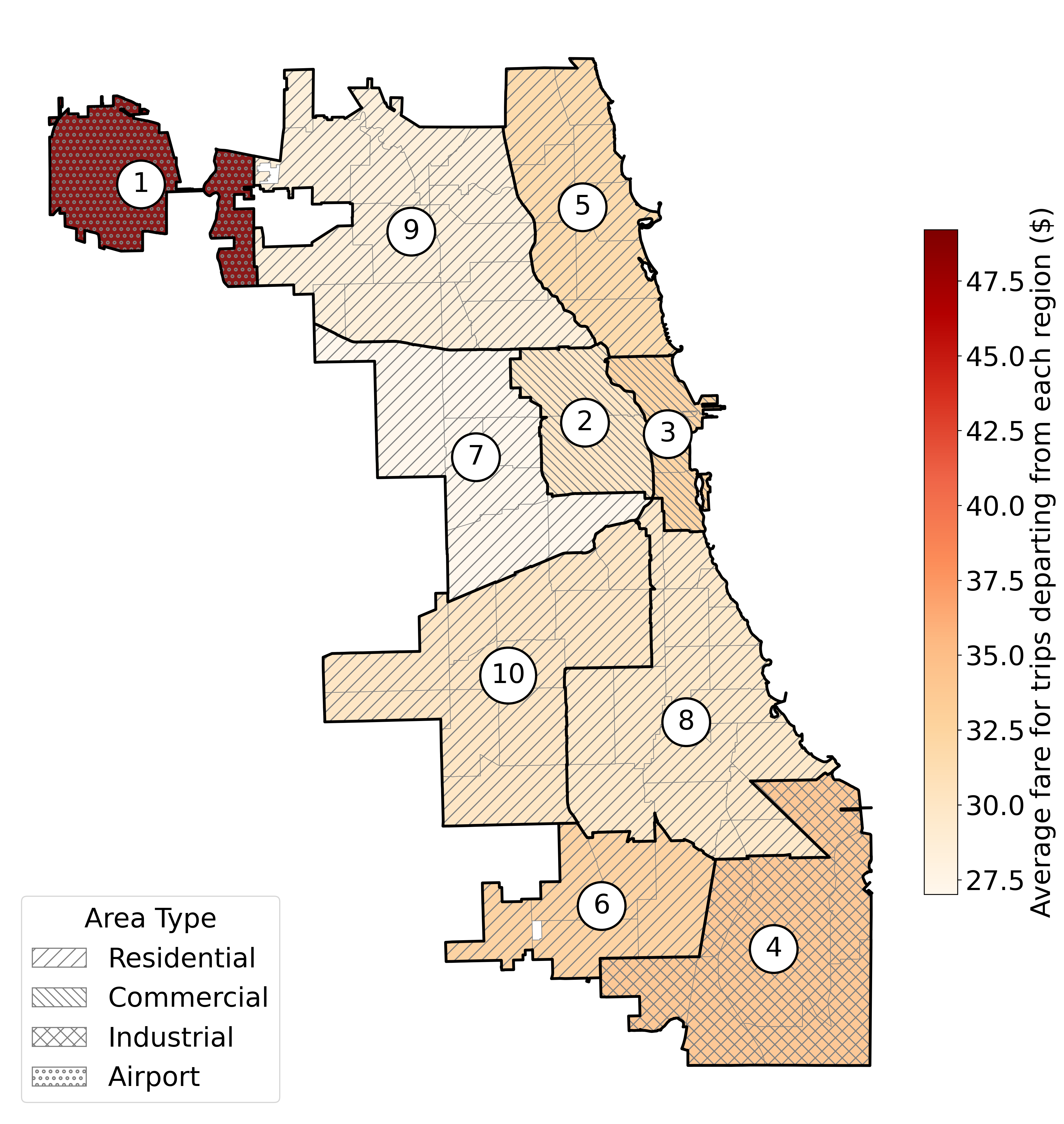}
    \caption{Average trip fare by region}
    \label{fig:fare}
\end{figure}
The travel times and demand loss penalties are also estimated from ride-hailing trip data, while the latter is set as the average trip fare presented in Figure~\ref{fig:fare}. 

\subsection{Characterization of AV owners} \label{sec_cluster_owner}

We select vehicle owners in the household survey with a yearly income higher than \$150,000 as potential AV owners, which corresponds to 699 individuals and accounts for 21.4\% of all vehicle owners in the survey. 
To generate representative daily activity patterns, we cluster the high-income vehicle owners using the eight features in Table \ref{tab:classification}, which cover the temporal, spatial, and activity-related aspects. All features are standardized using Z-scores to ensure zero mean and unit variance.
\begin{table}[h]
\centering
\caption{Indicators for clustering owners}
\label{tab:classification}
\begin{tabular}{p{4.5cm} p{11cm}}
\hline
\textbf{Indicators} & \textbf{Description} \\
\hline
Time centroid & The average beginning time of daily activities. \\
Time standard deviation & The standard deviation of daily activity start times. \\
Total displacement & The cumulative Haversine distance between consecutive activity regions. \\
Region coverage & The total number of distinct regions visited within a day. \\
Inter-activity travel speed & The average speed during inter-activity transitions. \\
Local activity ratio & The proportion of consecutive activities occurring within the same region. \\
Dominant activity ratio & The maximum time spent on a single purpose (e.g., work) relative to the total idle duration. \\
Purpose variety & The ratio of unique trip purposes by an owner to the total activity purpose categories. \\
\hline
\end{tabular}
\end{table}

\begin{table}[!h]
\centering
\caption{Representative profiles of AV owners}
\begin{tabular}{c p{2.5cm} p{8cm} c}
\hline
\textbf{Category ID} & \textbf{Type} & \textbf{Descriptions/Notes} & \textbf{Population} \\ \hline
1 & Home-oriented & Feasible except during evening peak hours. & 238 (34\%)\\
2 & Commuter & Residing outside Chicago and only feasible in the morning. & 179 (26\%) \\
3 & Commuter & Intra-city commuter.  & 98 (14\%)\\
4 & Commuter & Intra-city commuter.  & 55 (8\%)\\
5 & Remote worker & Working from home but using own vehicles during peak hours. & 42 (6\%)\\
6 & Companion & Accompanying someone and only feasible during evening peak hours. & 39 (6\%)\\
7 & Commuter & Intra-city commuter & 26 (4\%)\\
8 & Commuter &  Additional non-shopping errands (e.g., banking, post office, government) during evening peak hours. & 22 (3\%)\\ \hline
\end{tabular} \label{table_owner_type}
\end{table}

We apply the K-means clustering algorithm and select the results with eight clusters that yield a silhouette score of 0.53 and a reasonably good separation among clusters (see Figure~\ref{fig:Principal Component Analysis}). 
For each cluster, the representative AV owner is defined as the individual closest to the cluster centroid in terms of Euclidean distance. As summarized in Table \ref{table_owner_type}, the largest cluster, accounting for 34\% of the high-income vehicle owners, is characterized by a home-oriented representative pattern. Commuter-related subgroups collectively account for the majority of AV owners. Notably, non-commuters primarily remain at home but still travel during peak hours, making these time slots infeasible for sharing private vehicles.

\begin{figure}[!h]
    \centering
    \includegraphics[width=0.85\linewidth]{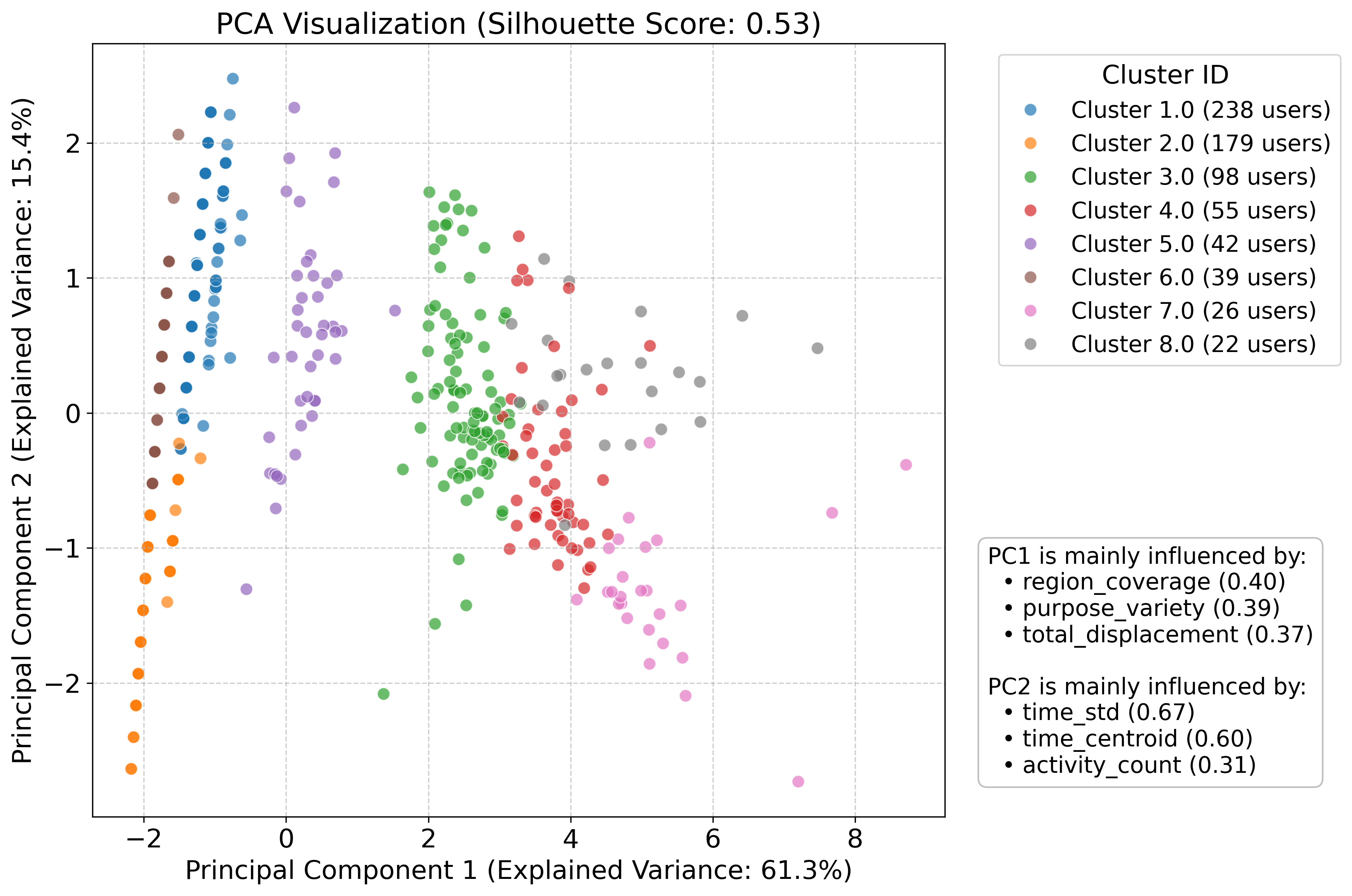}
    \caption{Principal Component Analysis}
    \label{fig:Principal Component Analysis}
\end{figure}

Figure \ref{fig:Principal Component Analysis} presents the principal component analysis (PCA) results and clustering visualization of AV owners. The first two principal components explain 76.7\% of the total variance, with PC1 accounting for 61.3\% and PC2 for 15.4\%. PC1 is primarily associated with spatial activity extent, as indicated by high loadings on region coverage, purpose variety, and total displacement. PC2 mainly reflects temporal behavior, characterized by the standard deviation of activity start times, the average activity beginning time, and the number of activities. In summary, the clustering structure shows clear separation along the spatial dimension, further refined by temporal patterns.

\subsection{Results of base scenario} \label{sec_case_result}
In Chicago case study, Gurobi 12.0.3 terminated with a final primal-dual gap below 0.2\%, indicating a near-optimal solution.
Using the data, we present the results of a case study on rental pricing strategies and dispatching strategies, as well as performance-related results, including service rate and vehicle empty hours.  
Figure~\ref{fig:price_chicago} presents the average rental price, along with variance, over the day. It can be seen that the price closely follows the passenger demand pattern in Figure~\ref{fig: intraday_chicago_demand}. 
The synchronization is due to two main factors. On the one hand, the platform-owned vehicles are not sufficient to serve the high passenger demand during peak hours, which forces the platform to increase the rental price to attract more AV owners. On the other hand, many AV owners also travel during the peaks, making it more expensive to rent private AVs. 
\begin{figure}[h] 
    \centering
    \includegraphics[width=0.6\linewidth]{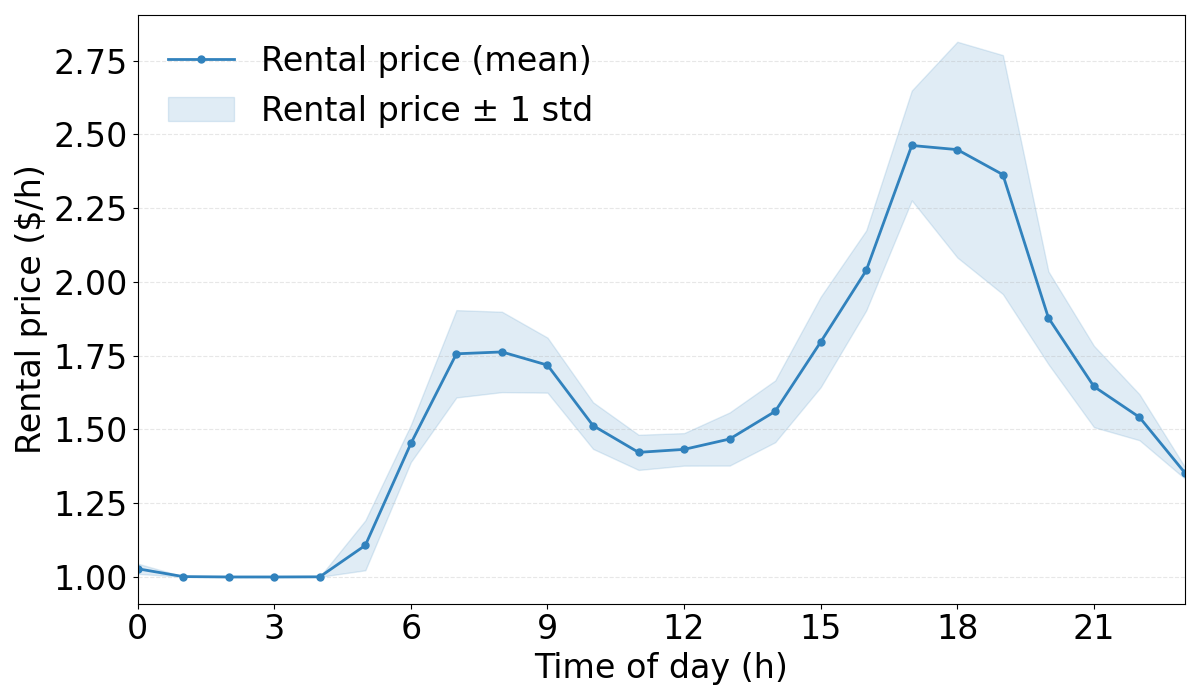}
    \caption{Average rental prices with variability}
    \label{fig:price_chicago}
\end{figure}
Figure~\ref{fig:price_chicago_boxplot} presents a box plot showing the distribution of rental participation among AV owners in each cluster, measured as the daily average proportion of owners who choose to rent. Overall, rental participation rates are relatively low in magnitude, ranging roughly from near zero to about $2.5\%$ across clusters. Most clusters concentrate within a narrow range of approximately $1\%$ to $2\%$, indicating generally low but non-negligible participation rate. Clusters 2 and 6 are located at the lower end of the distribution. When combined with the owner profiles, we observe that these two clusters have the fewest available time slots throughout the day, which in turn results in lower average participation rates.
\begin{figure}[h]
    \centering
    \includegraphics[width=0.7\linewidth]{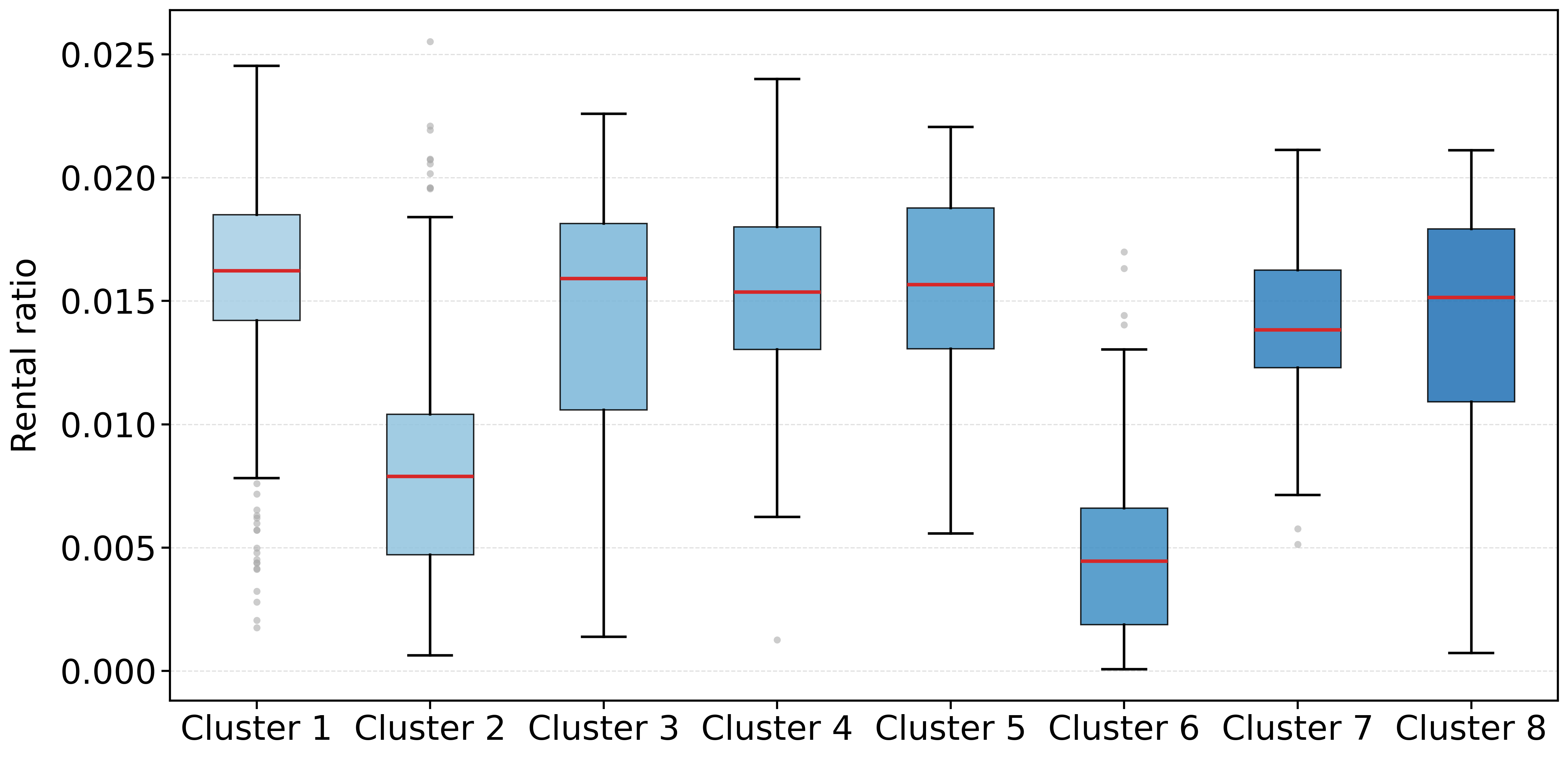}
    \caption{Box plot of rental ratios by traveler cluster}
    \label{fig:price_chicago_boxplot}
\end{figure}

\begin{figure}[!h]
    \centering
    \begin{subfigure}{0.24\textwidth}
        \includegraphics[width=\textwidth]{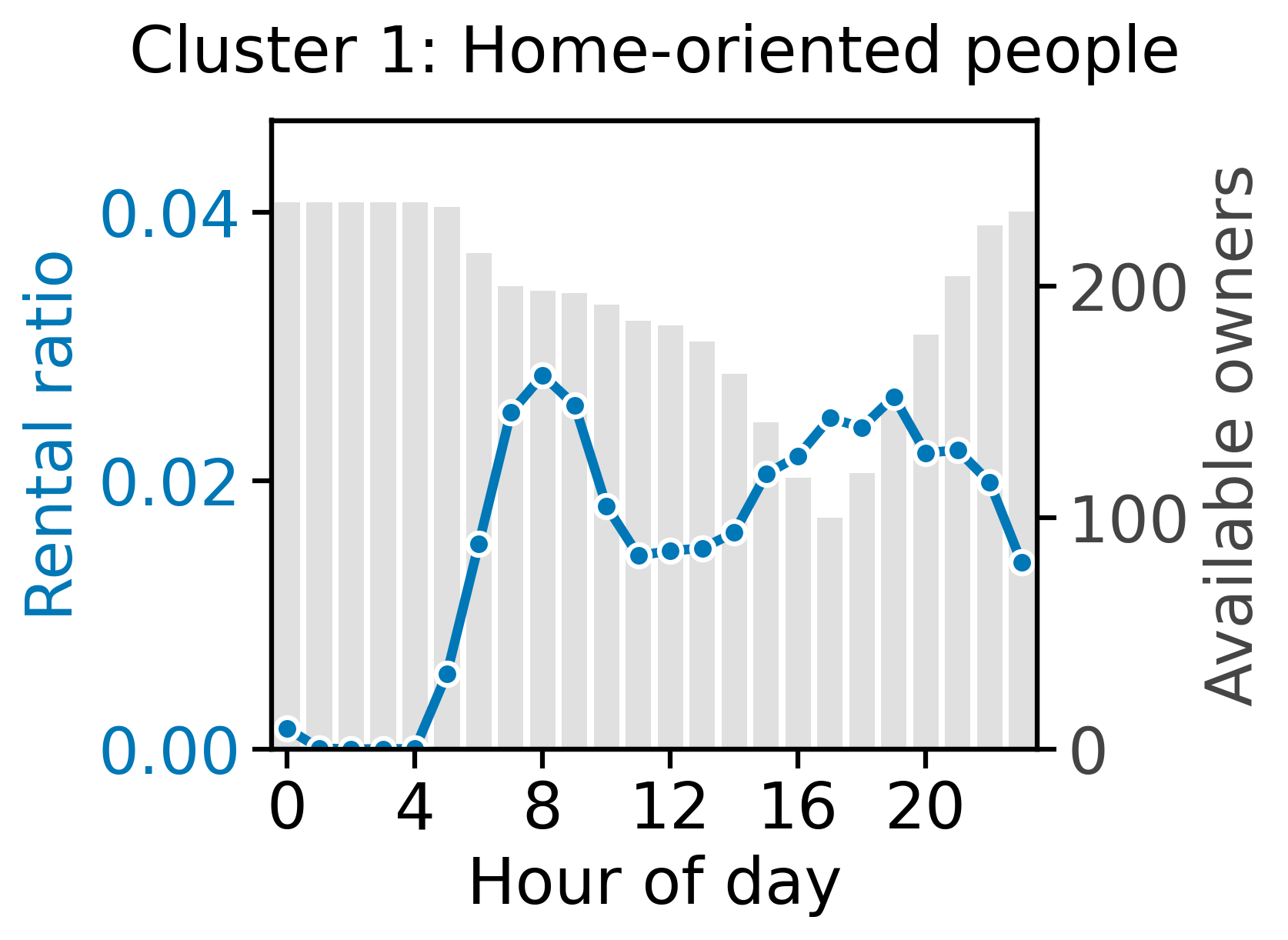}
    \end{subfigure}
    \hfill 
    \begin{subfigure}{0.24\textwidth}
        \includegraphics[width=\textwidth]{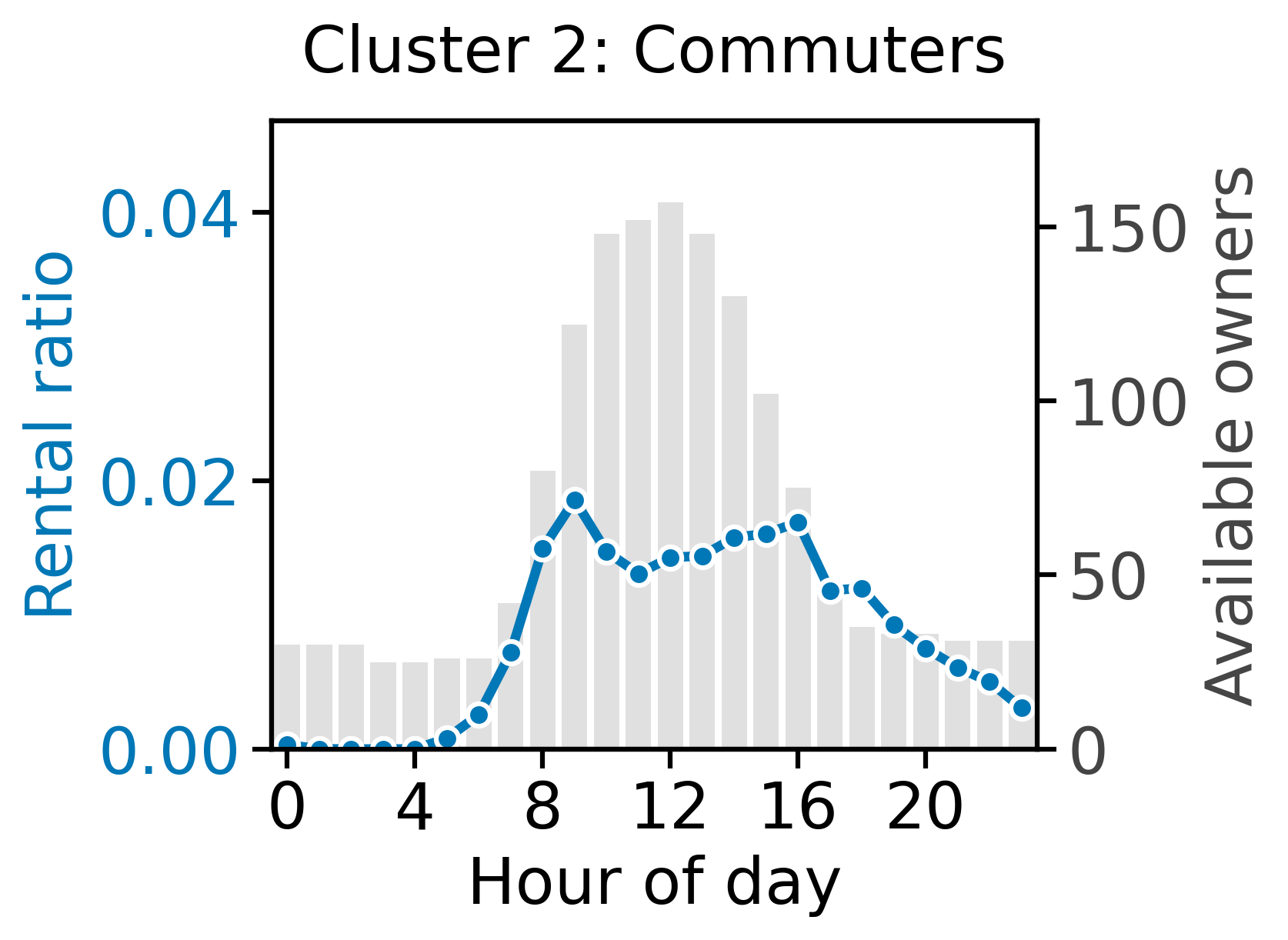}
    \end{subfigure}
    \hfill
    \begin{subfigure}{0.24\textwidth}
        \includegraphics[width=\textwidth]{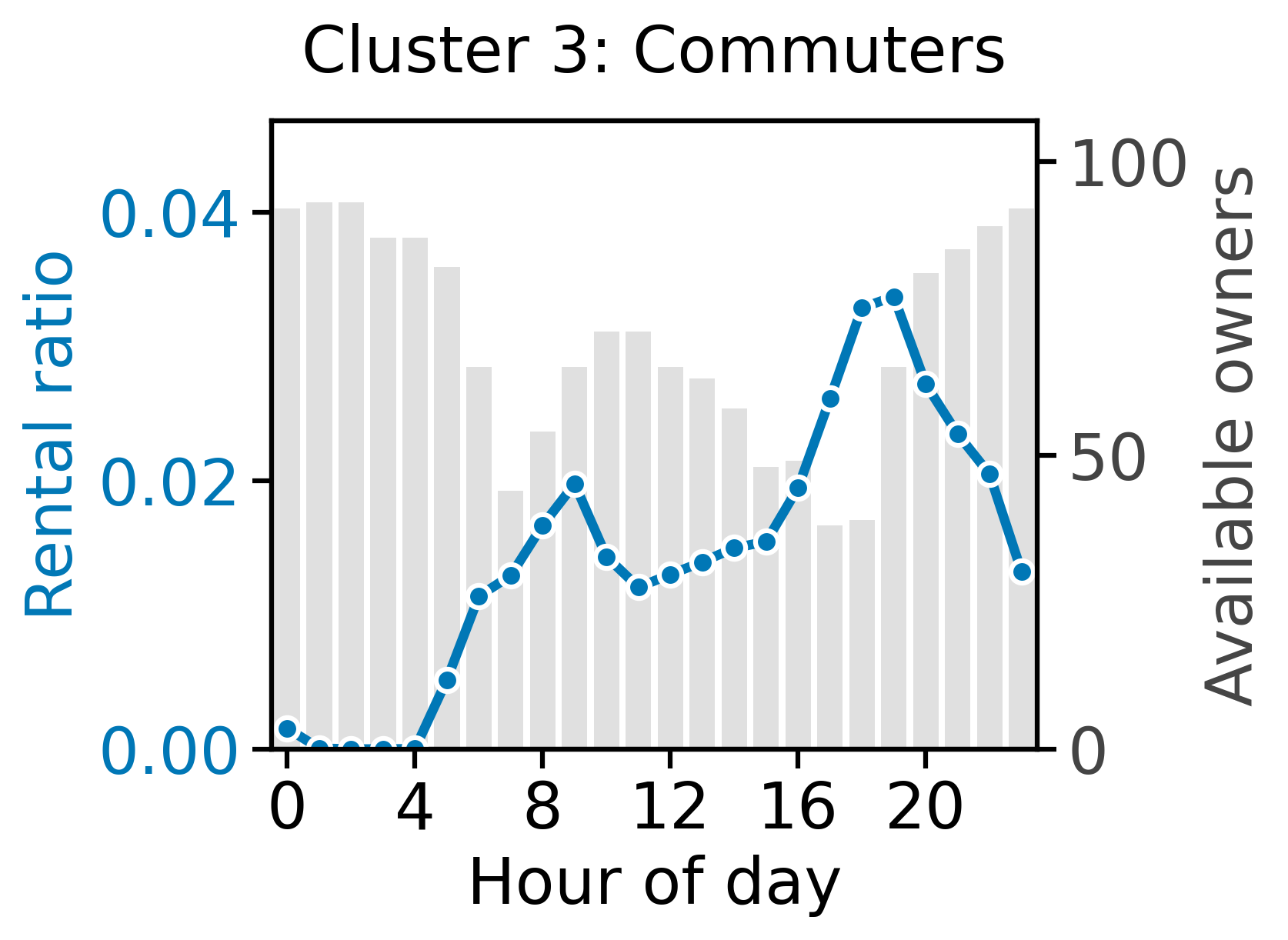}
    \end{subfigure}
    \hfill
    \begin{subfigure}{0.24\textwidth}
        \includegraphics[width=\textwidth]{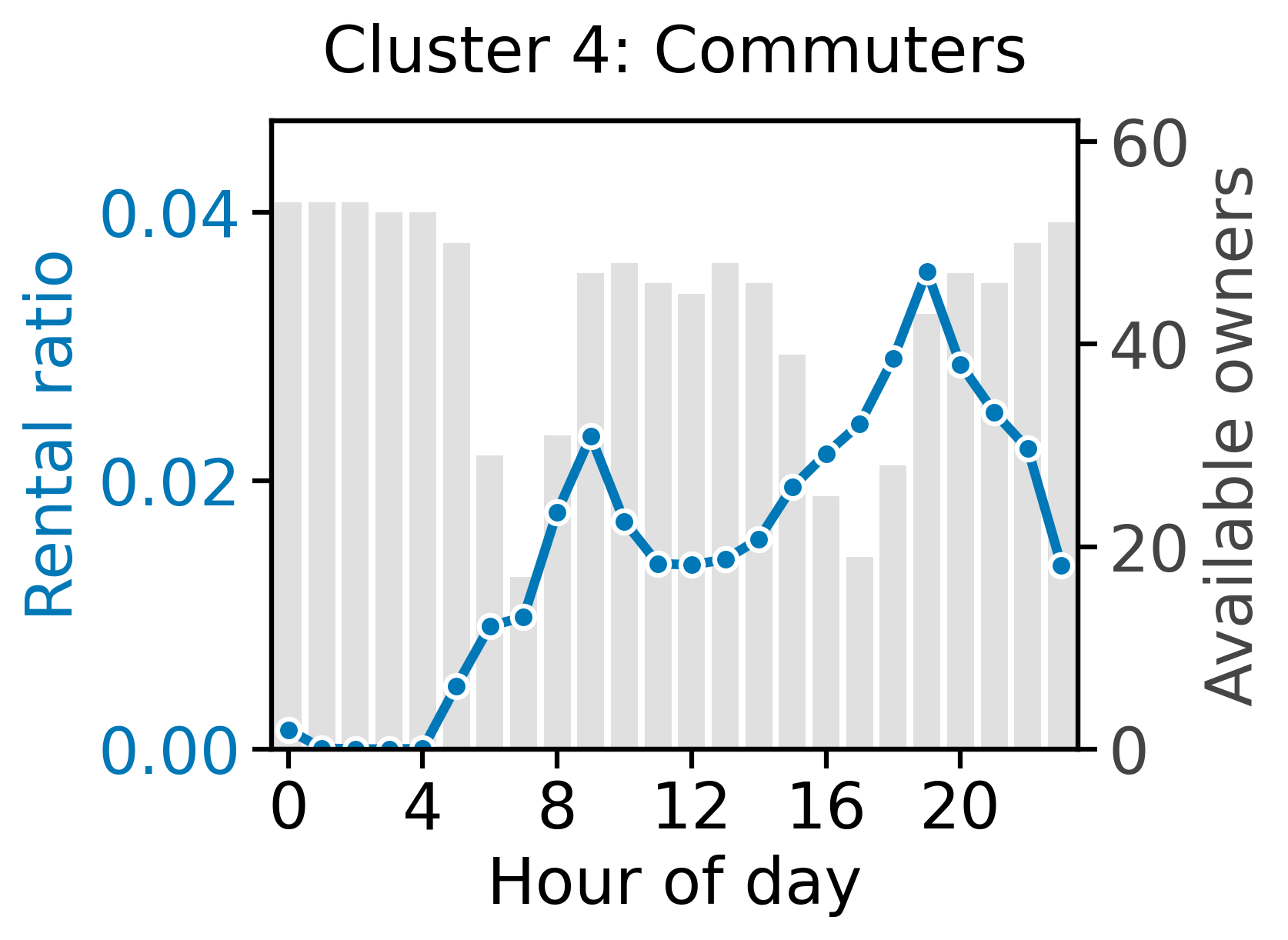}
    \end{subfigure}
    \vspace{1em} 
    \begin{subfigure}{0.24\textwidth}
        \includegraphics[width=\textwidth]{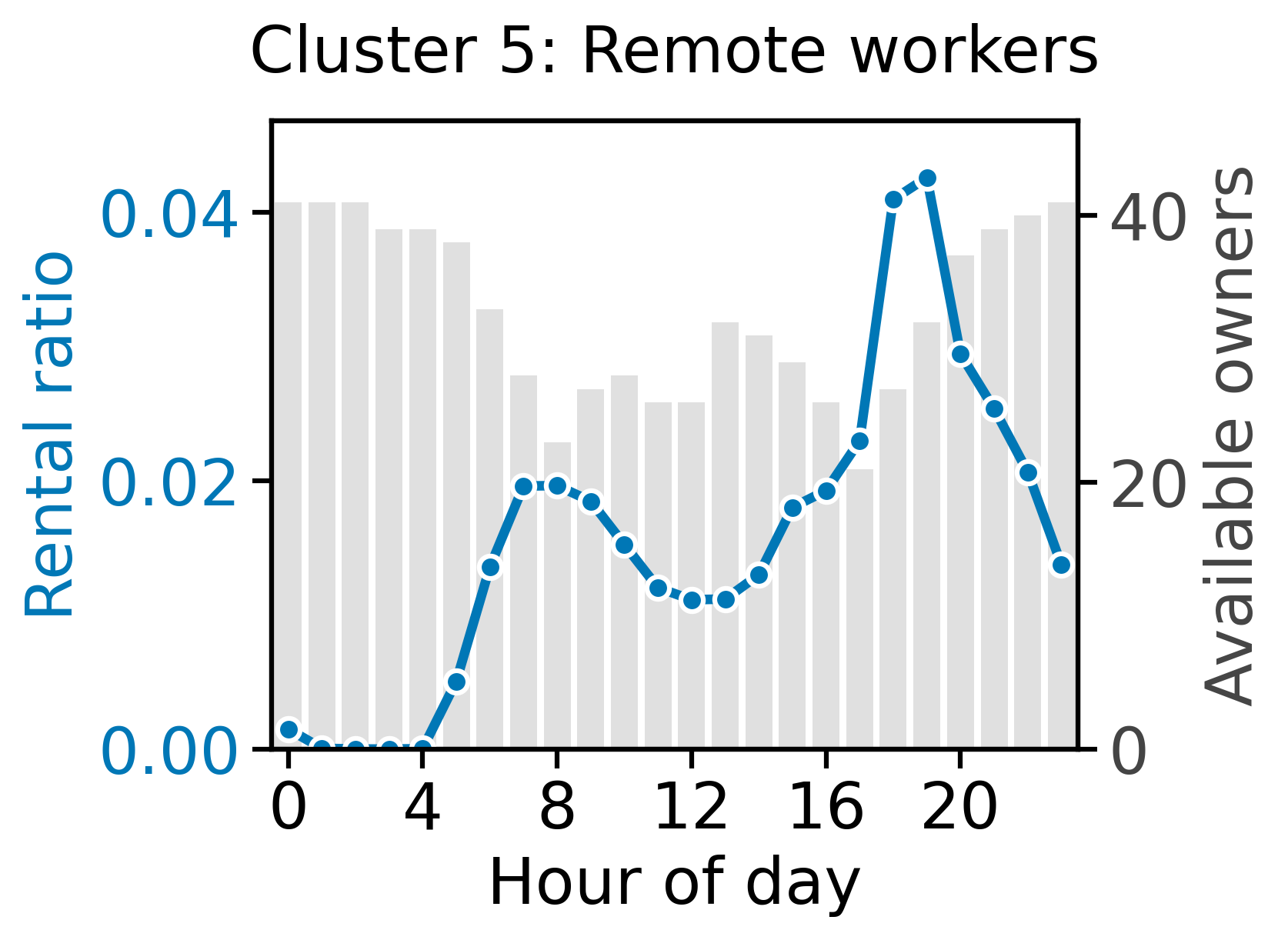}
    \end{subfigure}
    \hfill
    \begin{subfigure}{0.24\textwidth}
        \includegraphics[width=\textwidth]{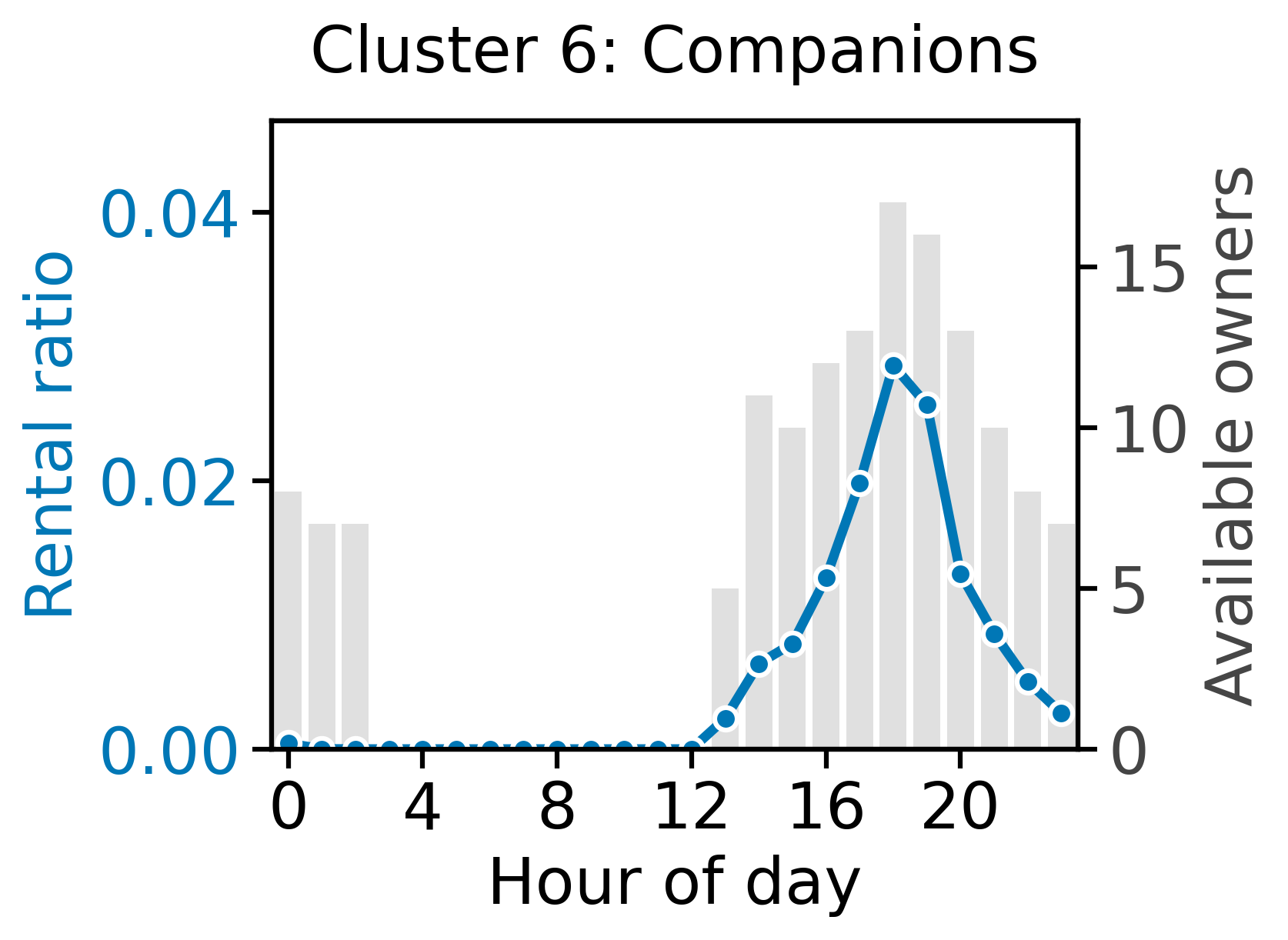}
    \end{subfigure}
    \hfill
    \begin{subfigure}{0.24\textwidth}
        \includegraphics[width=\textwidth]{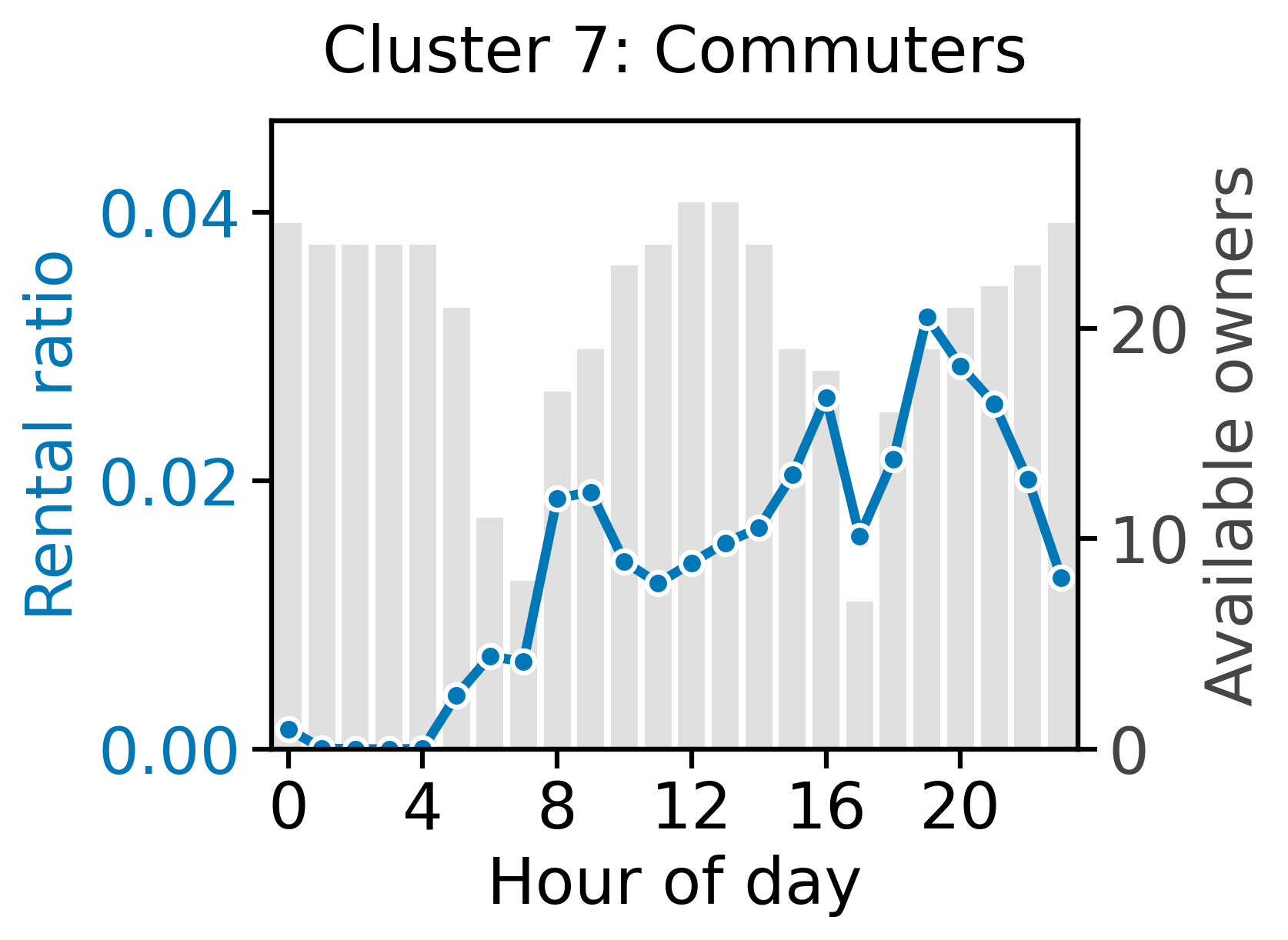}
    \end{subfigure}
    \hfill
    \begin{subfigure}{0.24\textwidth}
        \includegraphics[width=\textwidth]{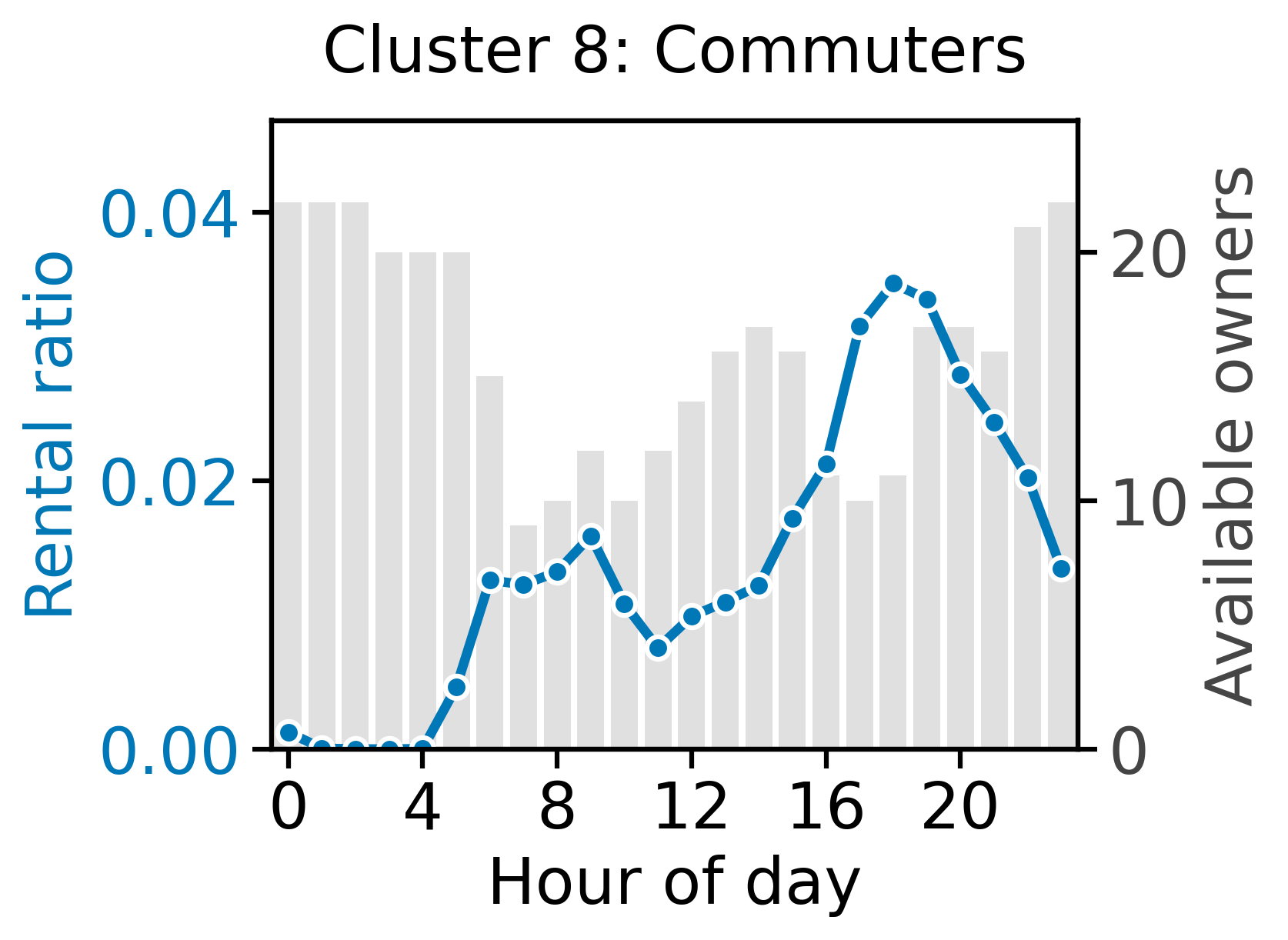}
    \end{subfigure}
    \caption{Average rental pricing for eight clusters of owners}
    \label{fig:eight_plots}
\end{figure}
In Figure~\ref{fig:eight_plots}, the gray histograms in the background show the number of available owners in each cluster across different hours of the day. The results indicate that workers exhibit clear infeasibility around $7{:}00$--$8{:}00$ and $17{:}00$--$18{:}00$. During these supply-shortage periods, the rental ratio remains high, which can be attributed to strong demand that incentivizes the platform to adopt crowdsourced AVs. Even when supply is relatively low, the rental ratio stays elevated due to higher rental payments driven by demand pressure. The blue curve in Figure~\ref{fig:eight_plots} illustrates the average participation ratio for each cluster of vehicle owners. Overall, participation is higher during the evening peak than the morning peak in most clusters, driven by stronger evening demand in Figure~\ref{fig:price_chicago}. Among the eight clusters, the home-oriented group contributes significantly to the AV crowdsourcing service. This is mainly due to two factors: it accounts for 34\% of the total AV owners, and it maintains high availability across most time periods. As a result, during the morning peak, this group exhibits the highest utilization, highlighting its key role in mitigating morning supply–demand imbalance. It also contributes substantially during the evening peak due to the larger overall pool of potential supply. Among the eight clusters of owners, commuters account for 55\% of the total, and they also contribute significantly through active participation, particularly those owners who reside within the city.
\begin{figure}[H]
    \centering
    \includegraphics[width=1\linewidth]{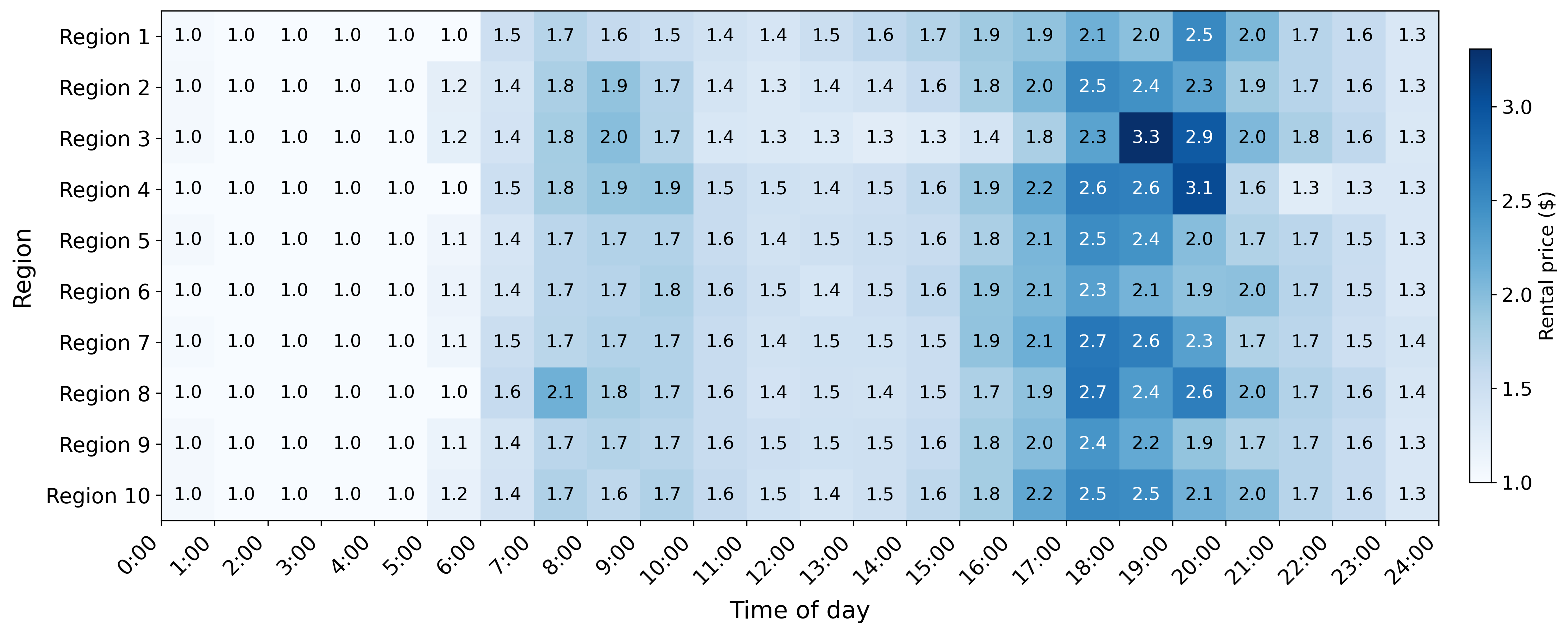}
    \caption{Spatial and temporal heatmap of rental prices}
    \label{fig:heatmap_price}
\end{figure}
Figure~\ref{fig:heatmap_price} illustrates the platform’s rental pricing strategies across different regions and time periods. It is consistent with Figure~\ref{fig:price_chicago}, showing that evening-peak rental prices are higher than morning-peak prices, with greater variability. In addition, we can notice more detailed information from this heatmap. For instance, although rental prices in the commercial region (Region 3) are high during 18:00-19:00, it is one dollar lower in the preceding period (17:00–18:00). This suggests that strategic owners interested in maximizing rental revenue should carefully consider both spatial and temporal factors when deciding where and when to rent. 
\begin{figure}[h]
    \centering
        \begin{subfigure}[b]{0.48\linewidth}
        \centering
        \includegraphics[width=\linewidth]{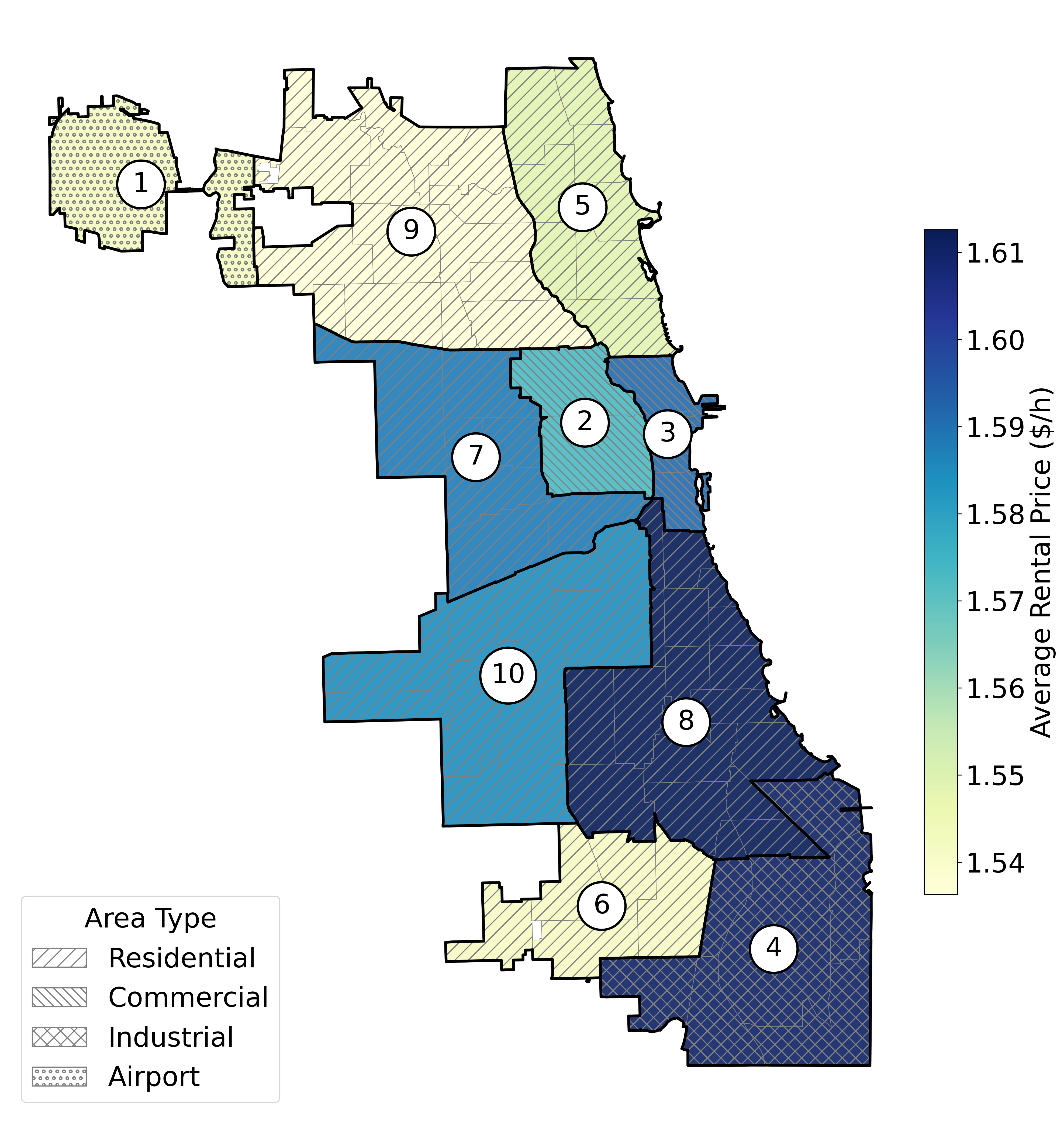}
        \caption{Average rental price across areas}
    \end{subfigure}
    \begin{subfigure}[b]{0.48\linewidth}
        \centering
        \includegraphics[width=\linewidth]{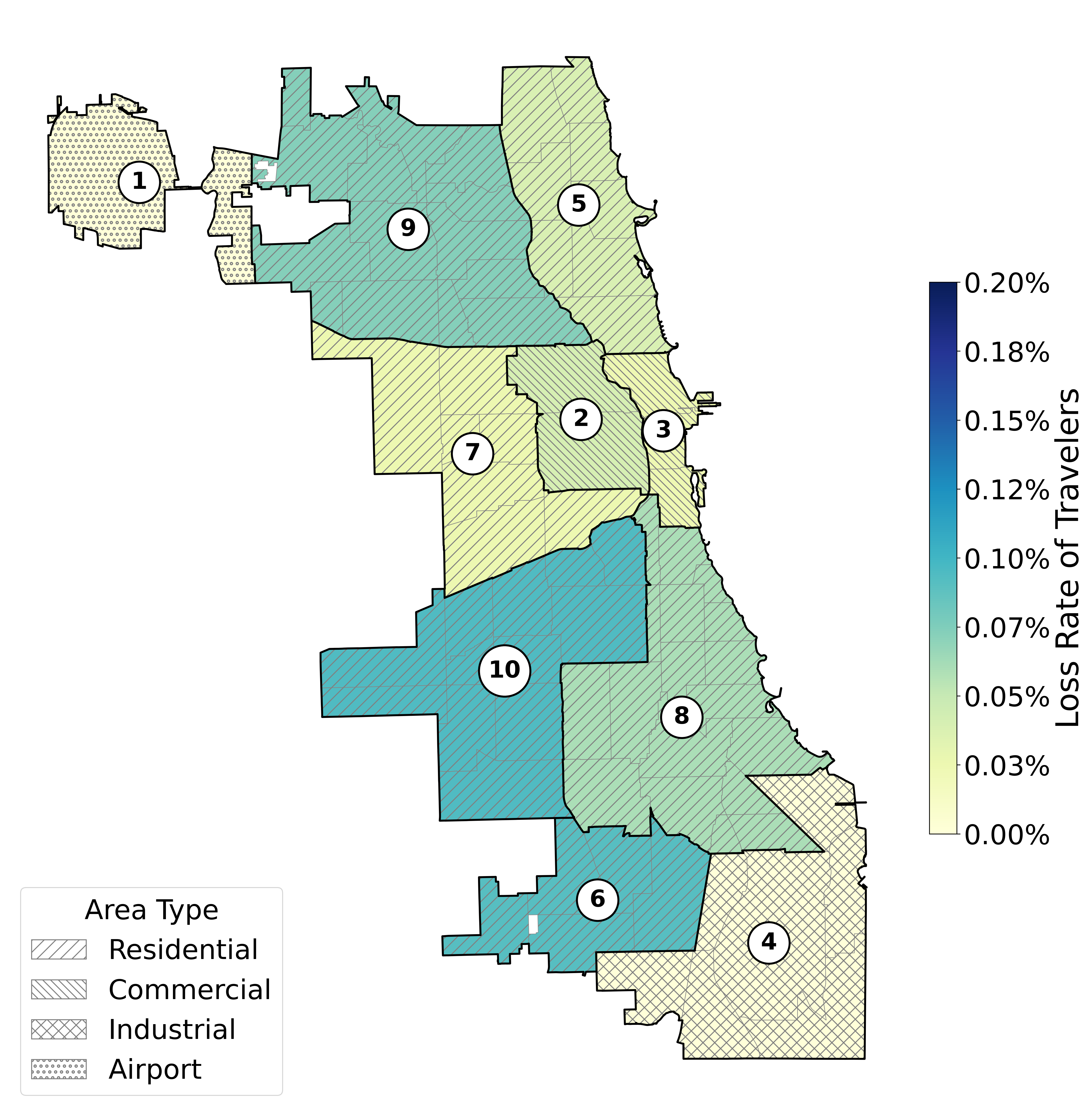}
        \caption{Percentage of lost travelers}
    \end{subfigure}
    \caption{Case study results. (a) Rental prices, (b) Lost customers.}
    \label{fig:chicago_result}
\end{figure}
Figure~\ref{fig:chicago_result}(a) shows the spatial distribution of average rental prices across the study area. Comparing Figure~\ref{fig:chicago_comparison}(b) and Figure~\ref{fig:chicago_result}(a), one can easily recognize that high rental prices are mostly associated with sparse private AV availability. 

Figures~\ref{fig:chicago_result} (b) presents the spatial distribution of lost demand. It is shown that almost all passengers are served thanks to the well-coordinated operations of crowdsourced and fleet-owned AVs, which is further demonstrated in Figure~\ref{fig:chicago_result2}.
As shown in Figure~\ref{fig:chicago_result2}, the crowdsourced and platform-owned AVs have fairly different operational patterns. 
While the crowdsourced AVs are highly concentrated in central commercial areas, the platform-owned AVs are more spatially dispersed, covering both commercial and surrounding residential areas. This may be attributed to the fact that many owners concentrate in commercial areas for work, so that limited available time windows and return-location constraints result in higher utilization of crowdsourced AVs in these areas. In contrast, platform-owned vehicles are not subject to such constraints and are therefore more evenly distributed across the city.

\begin{figure}[!h]
    \centering
        \begin{subfigure}[b]{0.48\linewidth}
        \centering
        \includegraphics[width=\linewidth]{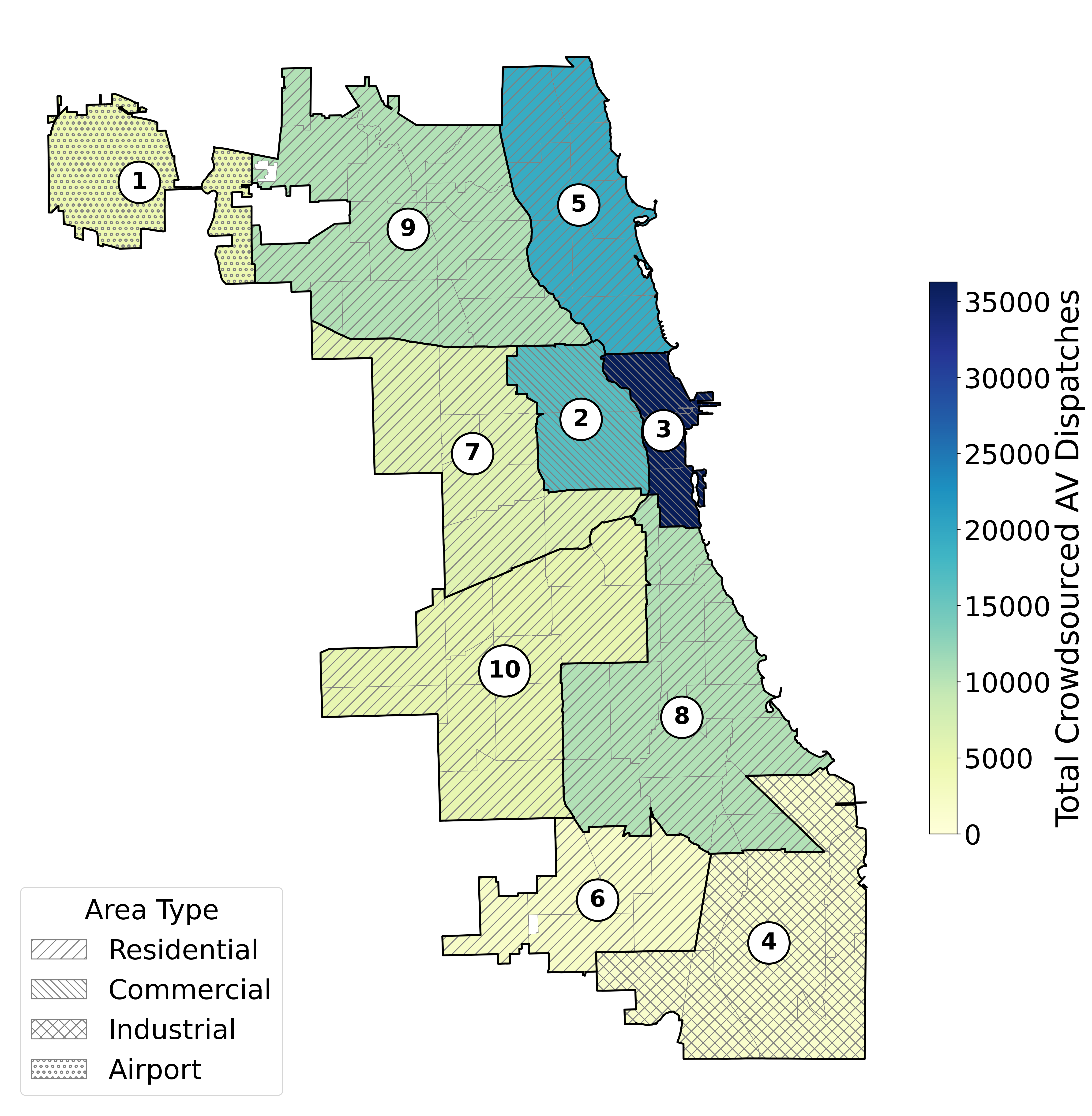}
        \caption{Total crowdsourced AV dispatches}
    \end{subfigure}
    \begin{subfigure}[b]{0.48\linewidth}
        \centering
        \includegraphics[width=\linewidth]{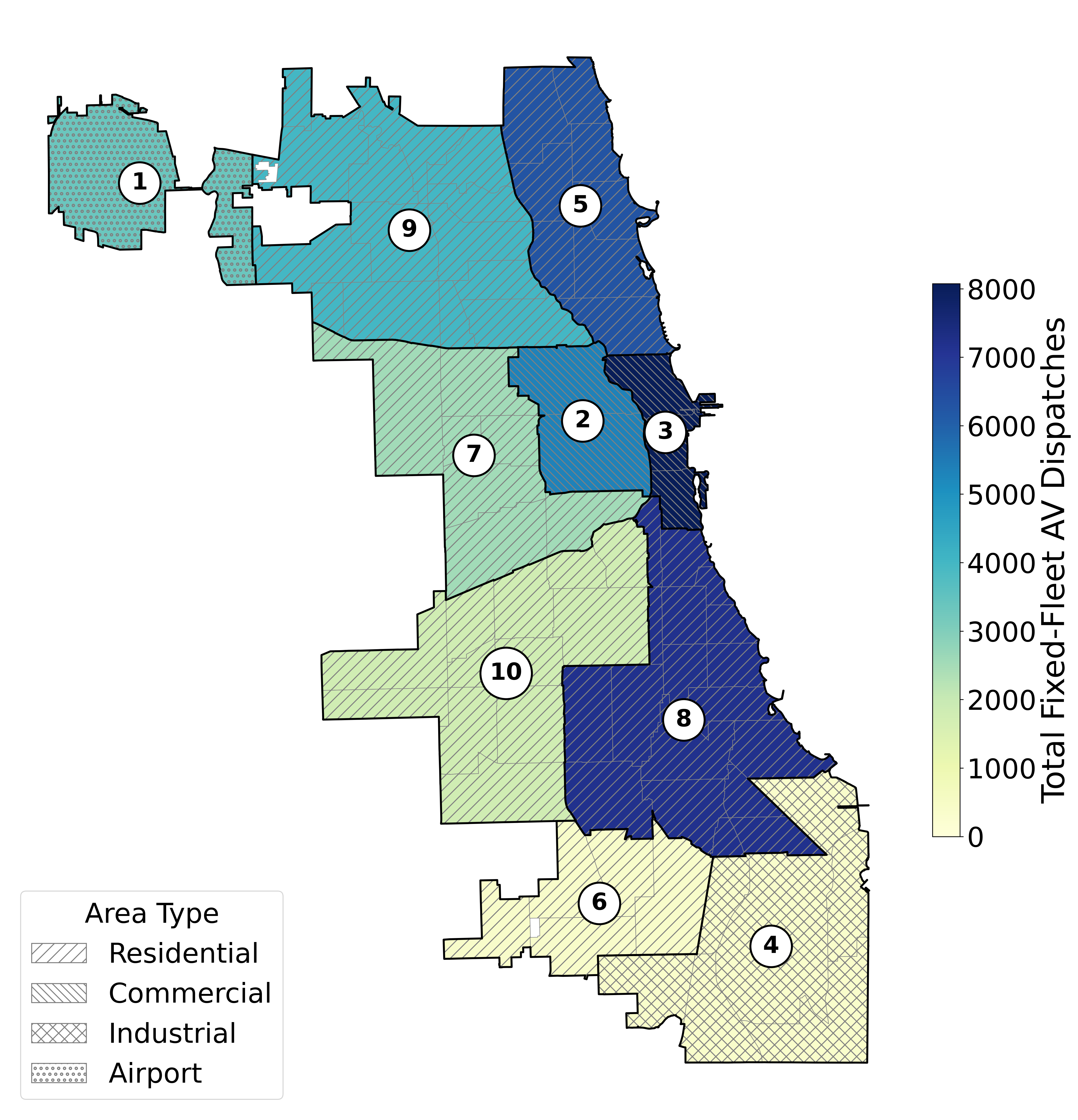}
        \caption{Total fixed-fleet AV dispatches}
    \end{subfigure}
    \caption{Case study results. (a) Crowdsourced AV dispatches, (b) Fixed fleet dispatches.}
    \label{fig:chicago_result2}
\end{figure}

\begin{figure}[!h]
    \centering
    \begin{subfigure}[b]{0.48\linewidth}
        \centering
        \includegraphics[width=\linewidth]{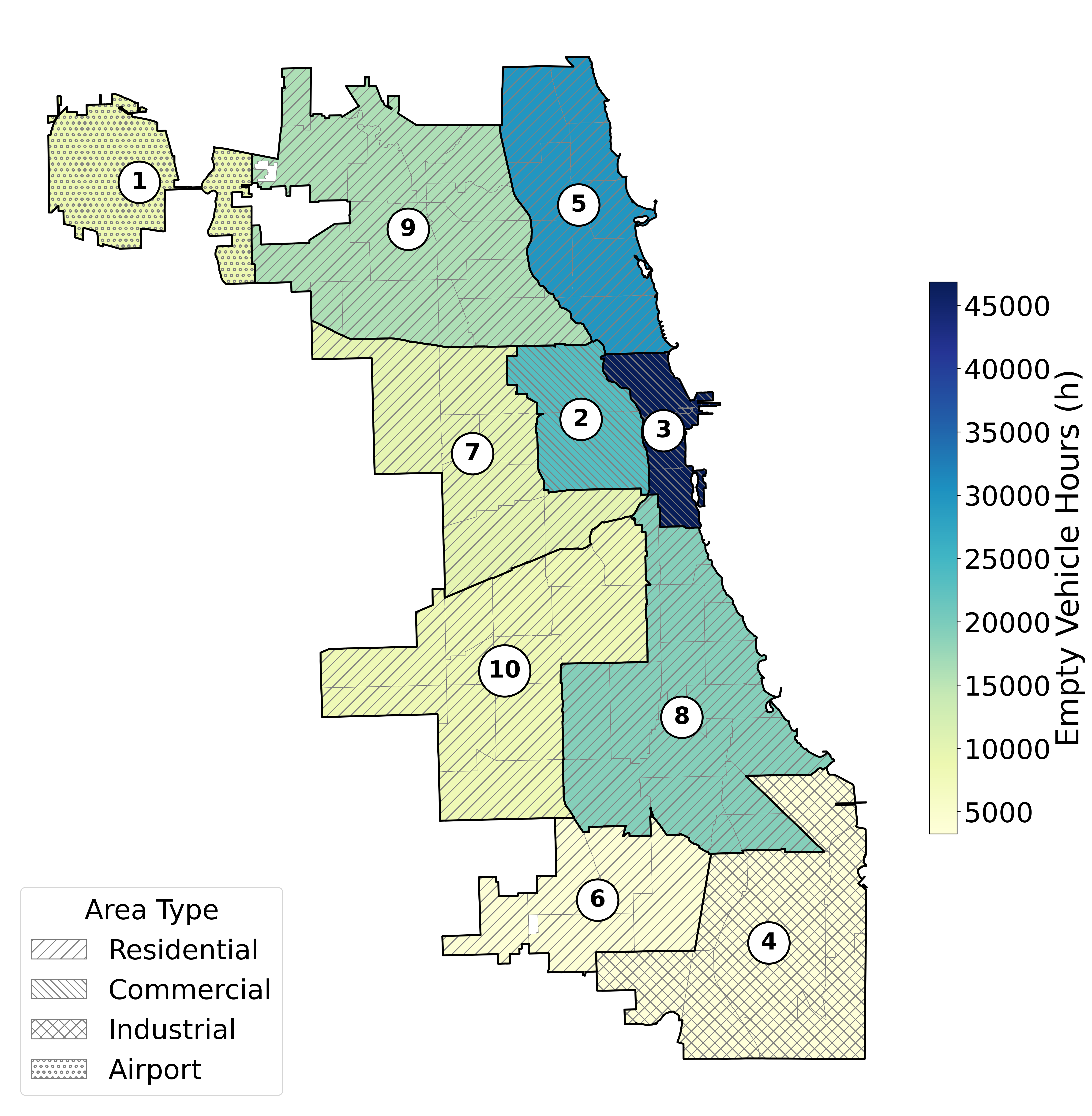}
        \caption{AV empty hours}
    \end{subfigure}
    \begin{subfigure}[b]{0.48\linewidth}
        \centering
        \includegraphics[width=\linewidth]{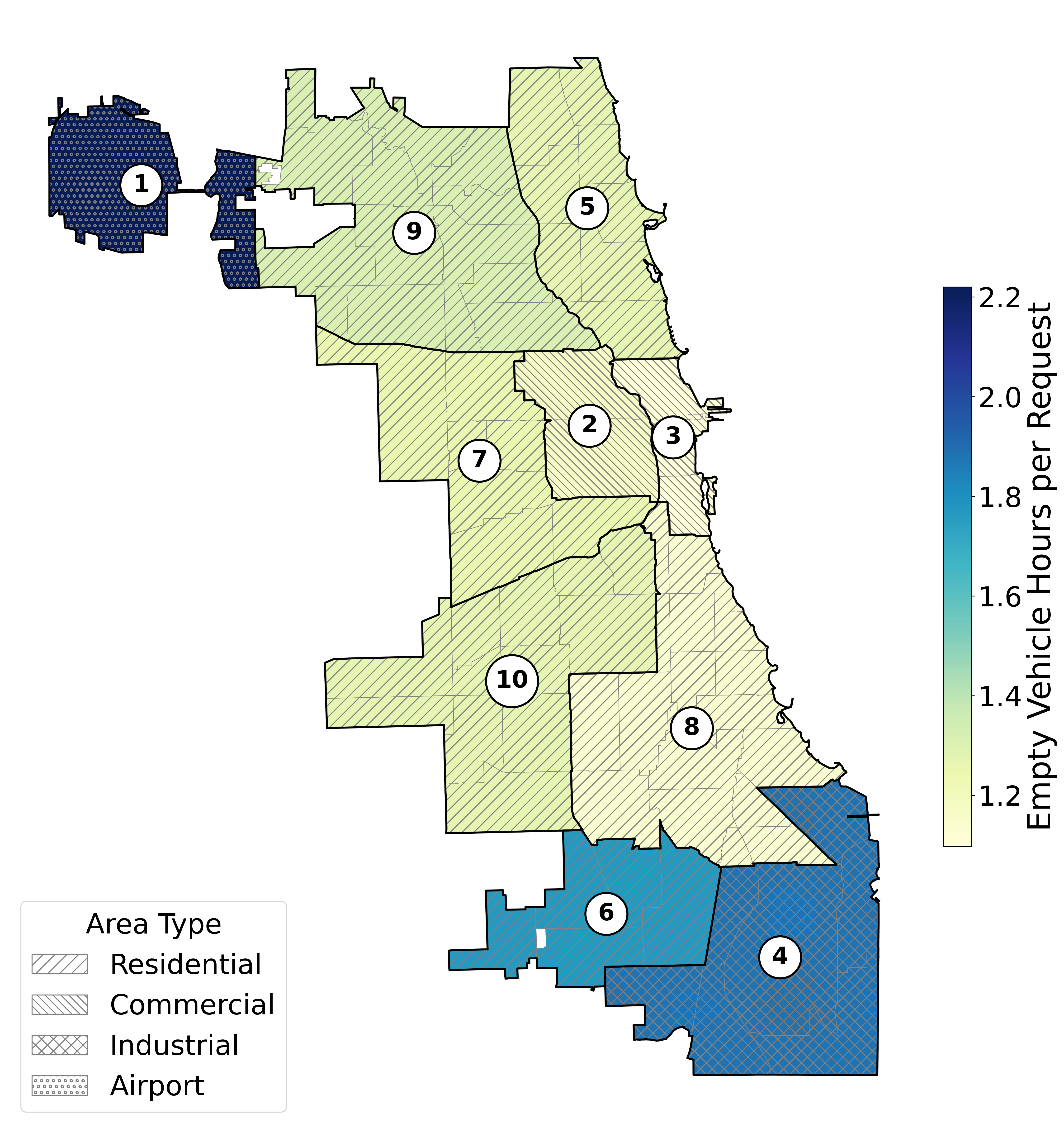}
        \caption{AV empty hours per request}
    \end{subfigure}
    \caption{Case study results. (a) AV empty hours, (b) AV empty hours per request.}
    \label{fig:chicago_result_empty_total}
\end{figure}

Figure~\ref{fig:chicago_result_empty_total}(a) illustrates the spatial distribution of empty vehicle hours. Regions with the highest demand, such as the commercial area, experience the most frequent dispatches and the longest empty vehicle hours. However, this does not necessarily imply higher service quality or shorter waiting times in those areas, as high demand in these regions can lead to intensified competition for available vehicles. 
Figure~\ref{fig:chicago_result_empty_total}(b) shows that the average amortized empty vehicle hours per request exceed one hour across all regions, suggesting that there are adequate vehicle supply for on-demand customers. In the airport area, the southern industrial area, and the southern residential area, the empty vehicle hours for each request are relatively high, reaching approximately two hours. 
Those peripheral areas face lower demand, enabling customers to secure vehicles more easily. In addition, platform-level incentives also play an important role. For example, Figure~\ref{fig:fare} shows that trip fares from the airport are substantially higher than those in other areas, incentivizing the platform to maintain service coverage even if vehicles remain idle for long periods; otherwise, substantial penalty costs would be incurred.

\subsection{Sensitivity analysis}
We next conduct a sensitivity analysis on the supply side. First, we examine how the number of potential AV owners and the fixed fleet size affect market outcomes. We then analyze how the proportion of non-commuters among AV owners influences these outcomes.

\subsubsection{Number of potential AV owners}
\begin{figure}[h]
    \centering
    \begin{subfigure}{0.48\textwidth}
        \centering
        \includegraphics[width=\linewidth]{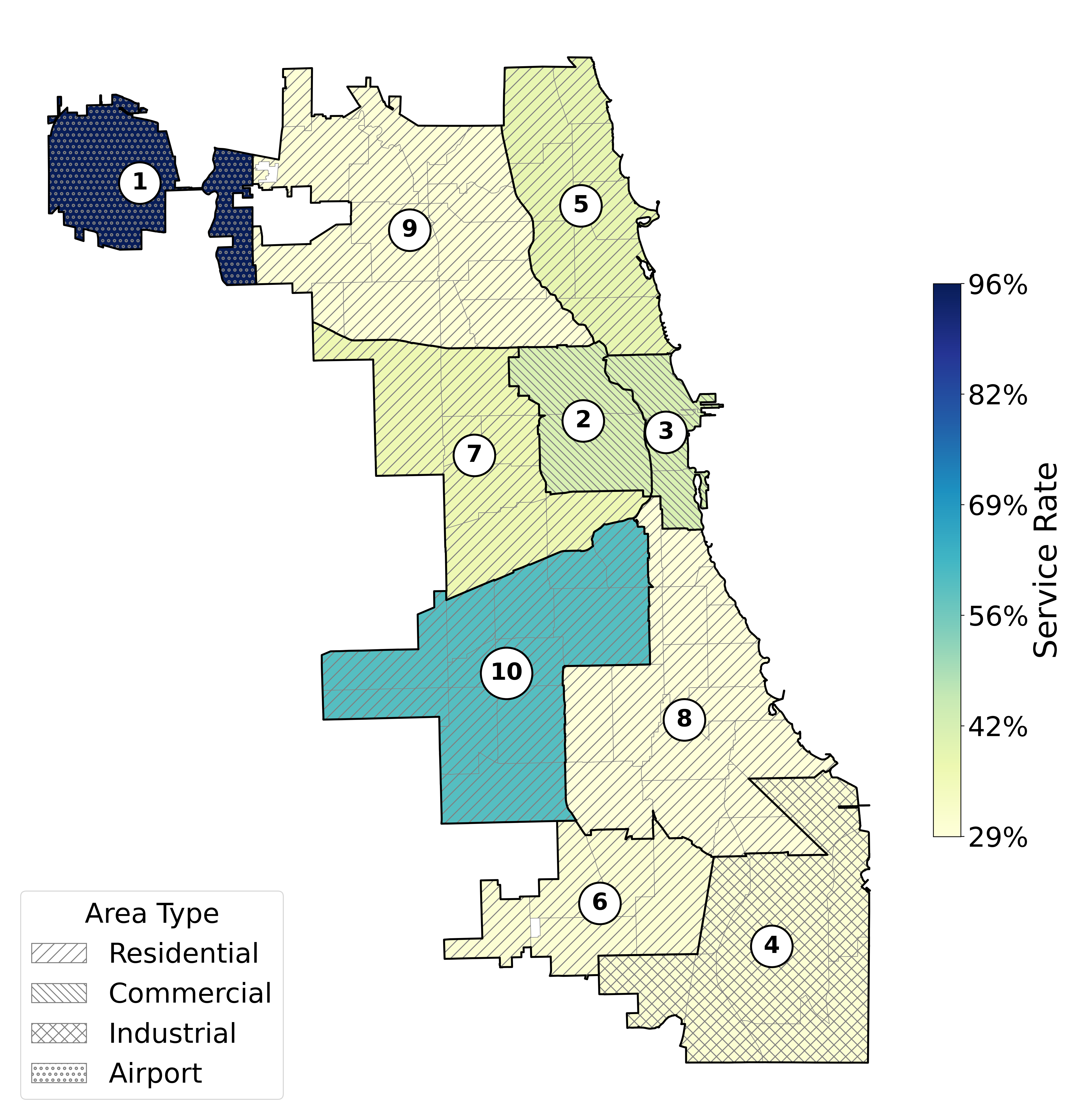}
        \caption{Service rate with $1\%$ private AV penetration}
        \label{fig:vehicles}
    \end{subfigure}
    \hfill
    \begin{subfigure}{0.48\textwidth}
        \centering
        \includegraphics[width=\linewidth]{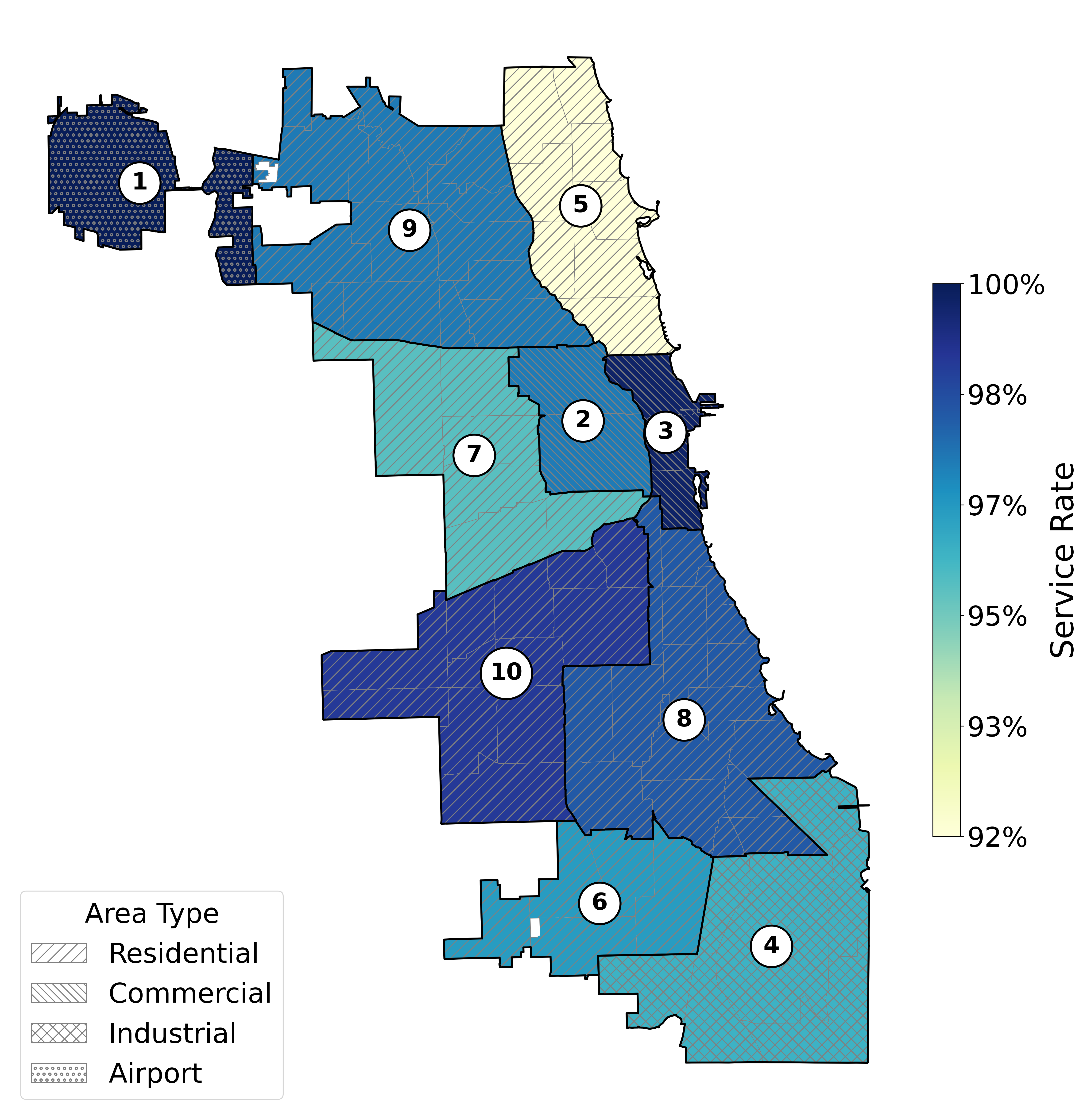}
        \caption{Service rate with $10\%$ private AV penetration}
        \label{fig:price}
    \end{subfigure}
    \caption{Impact of private AV penetration rate on service rate}
    \label{fig:service_rate_penetration}
\end{figure}
Figures~\ref{fig:service_rate_penetration}(a) and (b) show that 1\% and 10\% of the benchmark number of AV owners (as defined in subsection \ref{sec_cluster_owner}) significantly influence passenger service rates. As expected, a lower number of vehicles leads to a decline in service rates. Interestingly, when the vehicle fleet is very small (1\%), the highest passenger service rates occur at the airport and in Zone 10 (a residential area), whereas commercial areas do not exhibit high service rates. This may be attributed to the high fares generated by long-distance trips, which provide incentives for serving these areas. Given the very limited vehicle supply, trips in the city center characterized by short travel distances, tend to be given lower priority. When the available AVs increase to 10\%, the service rates at the airport, Zone 10, and the commercial area all become high. Compared to Figure \ref{fig:chicago_comparison}(b), the lowest service rate is observed in the area with the highest concentration of AV owners (Zone 5), though the difference is relatively small, only 8\%. This may be caused by the temporal overlap between when AV owners use their vehicles personally and when passengers need rides, which leads to a supply–demand imbalance. Therefore, even though Zone 5 has the largest number of AV owners, the overlapping travel times prevent them from fully serving all passengers.

\subsubsection{Number of fixed-fleet AVs}
Figure~\ref{fig:price_fixe_fleet}(a) and (b) show that as the fixed fleet size increases, rental prices decrease substantially, indicating a reduction in the participation of crowdsourced AVs. When the fleet reaches 500 vehicles, crowdsourced AVs are used only during peak hours and not during off-peak periods. As the number of fixed fleet AVs exceeds 900, the platform relies almost entirely on the fixed fleet but adopts few crowdsourced AVs. These results show that the platform-owned fleet size is important in determining the feasibility of AV crowdsourcing services. 
\begin{figure}[h]
    \centering
    \begin{subfigure}{0.46\textwidth}
        \centering
        \includegraphics[width=\linewidth]{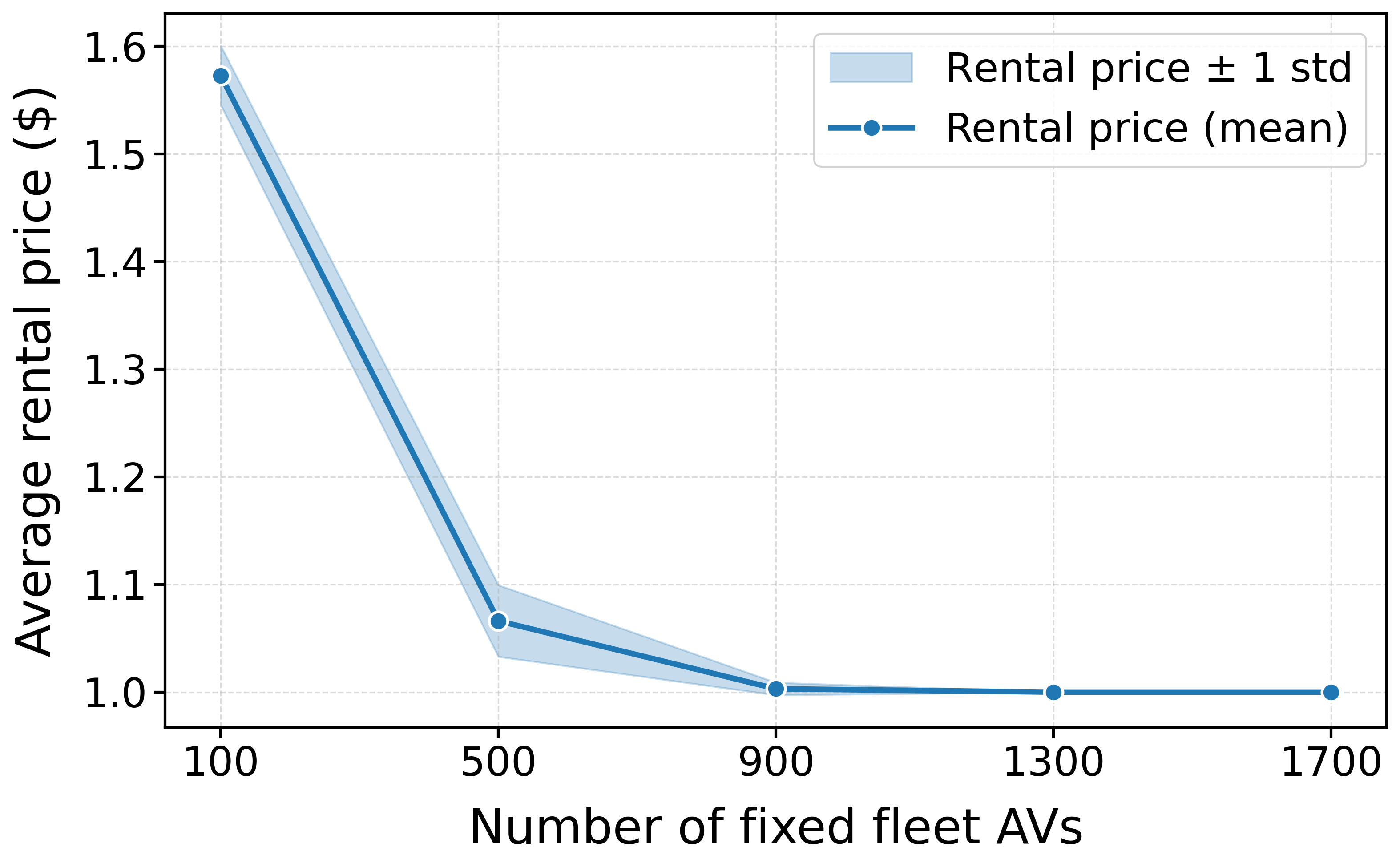}
        \caption{Average rental prices}
        \label{fig:fixed_scale_price}
    \end{subfigure}
    \hfill
    \begin{subfigure}{0.5\textwidth}
        \centering
        \includegraphics[width=\linewidth]{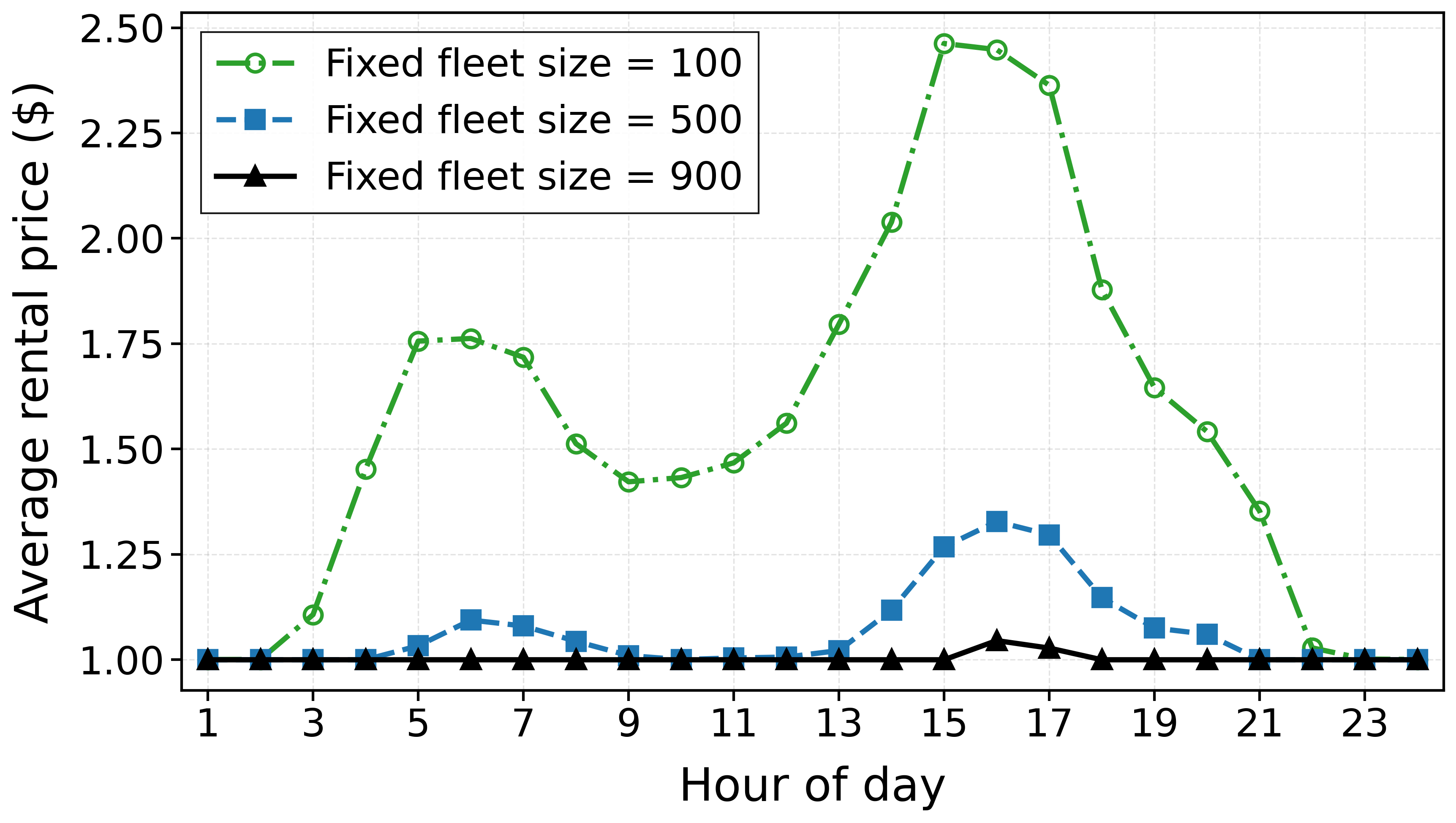}
        \caption{Rental prices along the time}
        \label{fig:fixed_scale_price}
    \end{subfigure}
    \caption{Impact of fixed fleet size on rental prices}
    \label{fig:price_fixe_fleet}
\end{figure}
\begin{figure}[h]
    \centering
    \begin{subfigure}{0.5\textwidth}
        \centering
        \includegraphics[width=\linewidth]{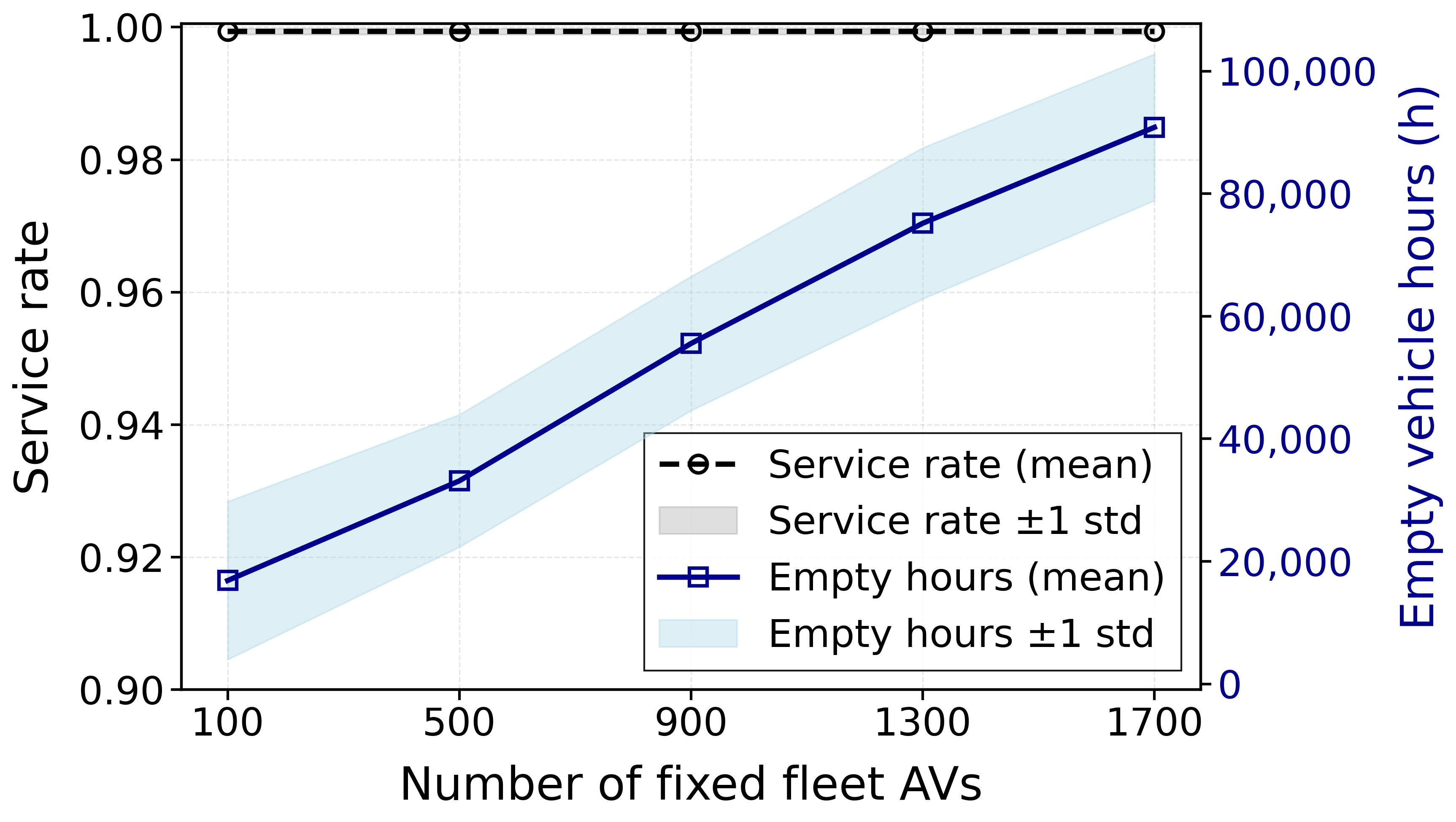}
        \caption{Empty vehicle hours}
    \end{subfigure}
    \hspace{0.2cm}
    \begin{subfigure}{0.47\textwidth}
        \centering
        \includegraphics[width=\linewidth]{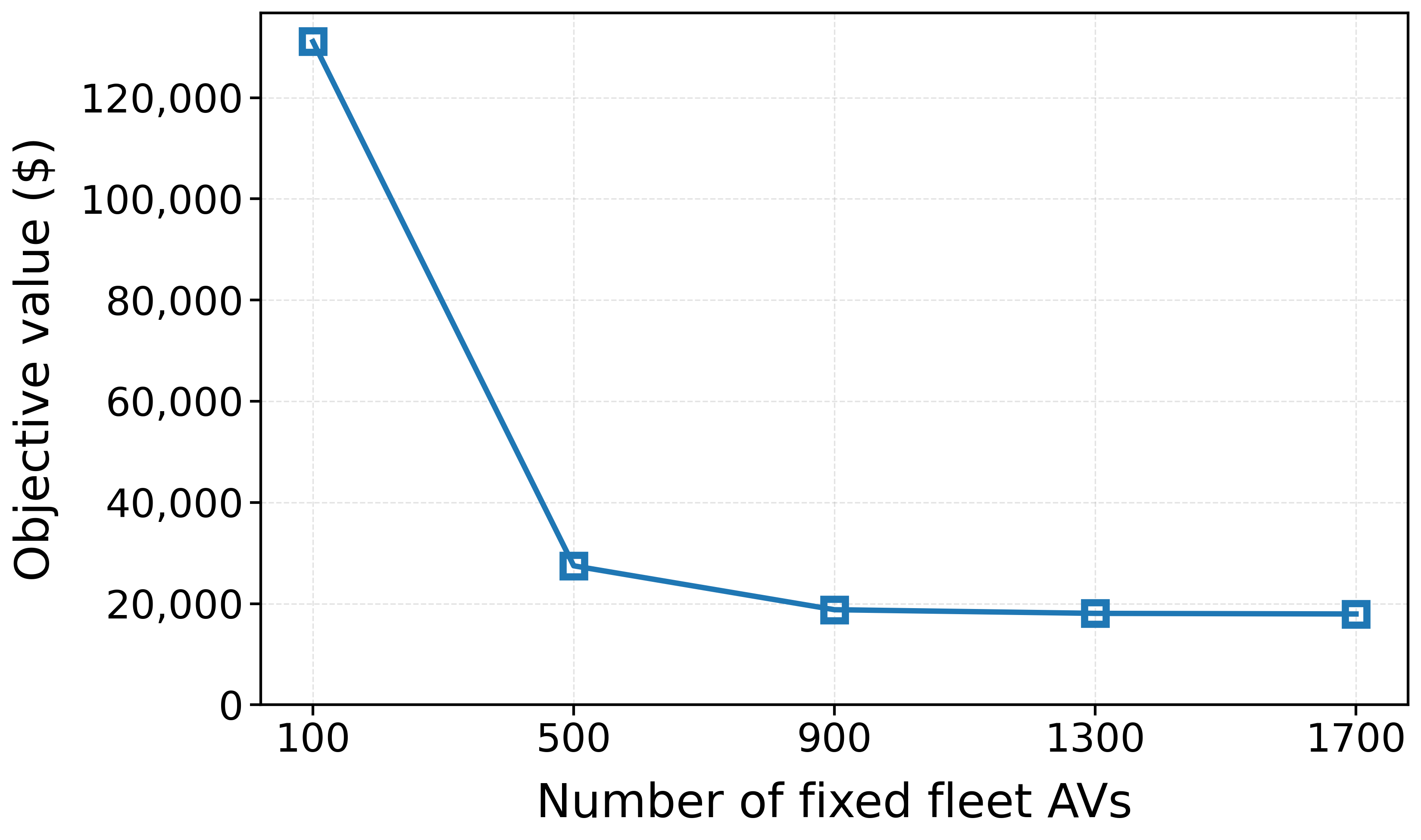}
        \caption{Platform cost (\$)}
    \end{subfigure}
    \caption{Impact of fixed fleet size on empty vehicle hours and platform cost}
    \label{fig:outcome_fixed_fleet}
\end{figure}

However, Figure \ref{fig:outcome_fixed_fleet} (a) reveals that a larger number of platform-owned AVs does not necessarily lead to better market outcomes. As the size of the fixed fleet increases, the empty vehicle hours rise significantly, while the customer service rate remains close to one. These results suggest that when the pool of potential AV owners is sufficiently large, expanding the platform-owned fleet increases total empty vehicle hours significantly but yields limited improvement in service rate. Although Figure \ref{fig:outcome_fixed_fleet}(b) shows that increasing the fixed fleet reduces the platform’s operational cost, we do not account for the capital cost of purchasing AVs. Maintaining a larger fleet may no longer be economically favorable for the platform once the acquisition cost of AVs is considered. 

\subsubsection{Proportion of non-commuting AV owners}
Motivated by the increasing prevalence of telecommuting and flexible work arrangements in the post-pandemic era, we conduct a sensitivity analysis on the ratio of non-commuters among AV owners. As suggested by the clustering result in subsection \ref{sec_cluster_owner}, commuters and non-commuters exhibit distinct spatio-temporal patterns in their daily activities. Consistent with the scale of our previous case study (699 owners), we simulate a population of 700 individuals by sampling from two distinct pools: commuters and non-commuters. 
Let the parameter $\sigma \in [0,1]$ control the proportion of commuters in the simulated population. Then, a fraction $\sigma$ of the 700 individuals is sampled from the commuter pool, while the remaining fraction $(1-\sigma)$ is sampled from the non-commuter pool.

\begin{figure}[H]
    \centering
    \begin{subfigure}{0.45\textwidth}
        \centering
        \includegraphics[width=\linewidth]{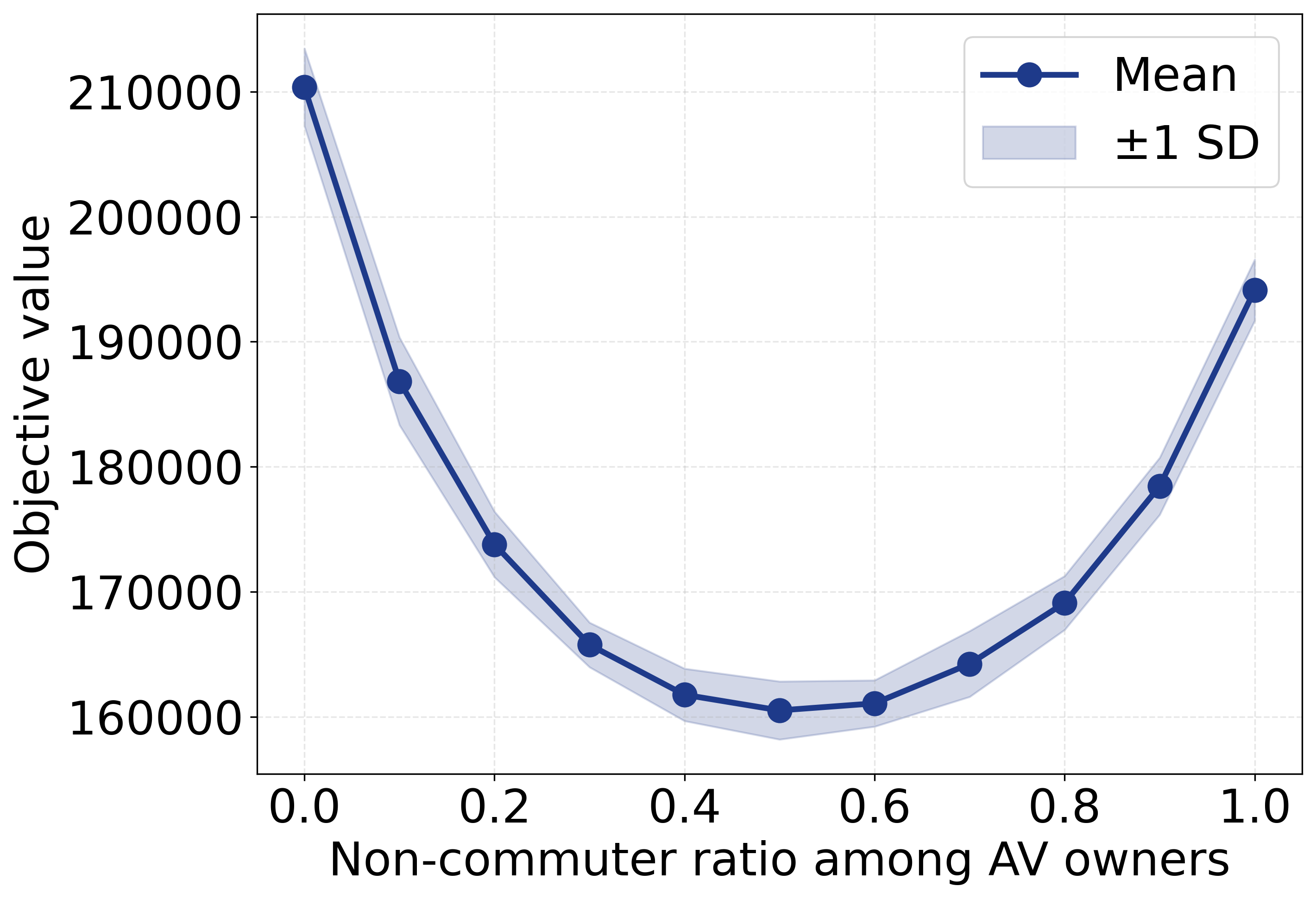}
        \caption{Platform cost (\$)}
        \label{fig:objective}
    \end{subfigure}
    \hspace{0.2cm}
    \begin{subfigure}{0.45\textwidth}
        \centering
        \includegraphics[width=\linewidth]{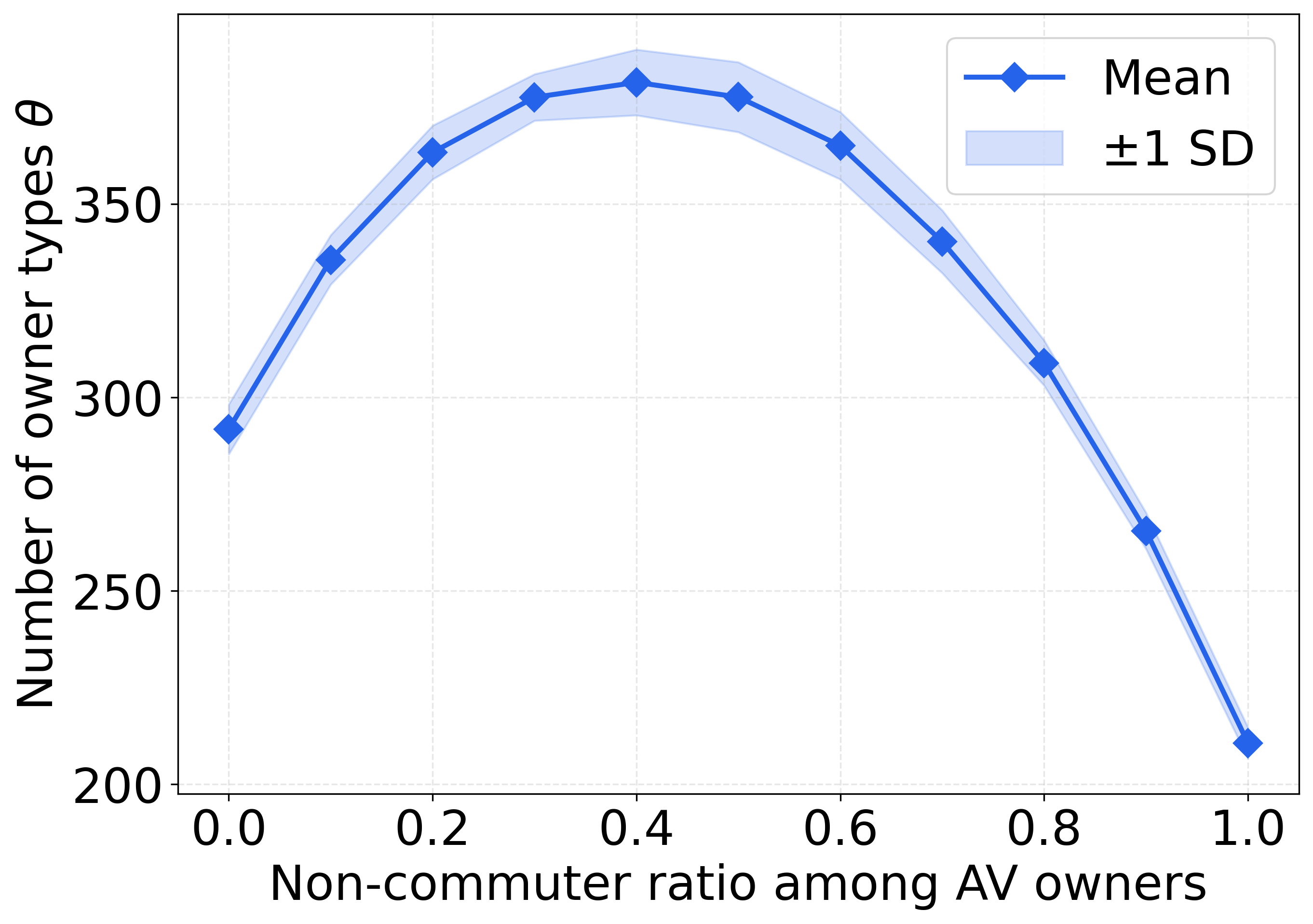}
        \caption{Number of spatiotemporal availability profiles $\theta$}
        \label{fig:owner_type}
    \end{subfigure}
    \begin{subfigure}{0.45\textwidth}
        \centering
        \includegraphics[width=\linewidth]{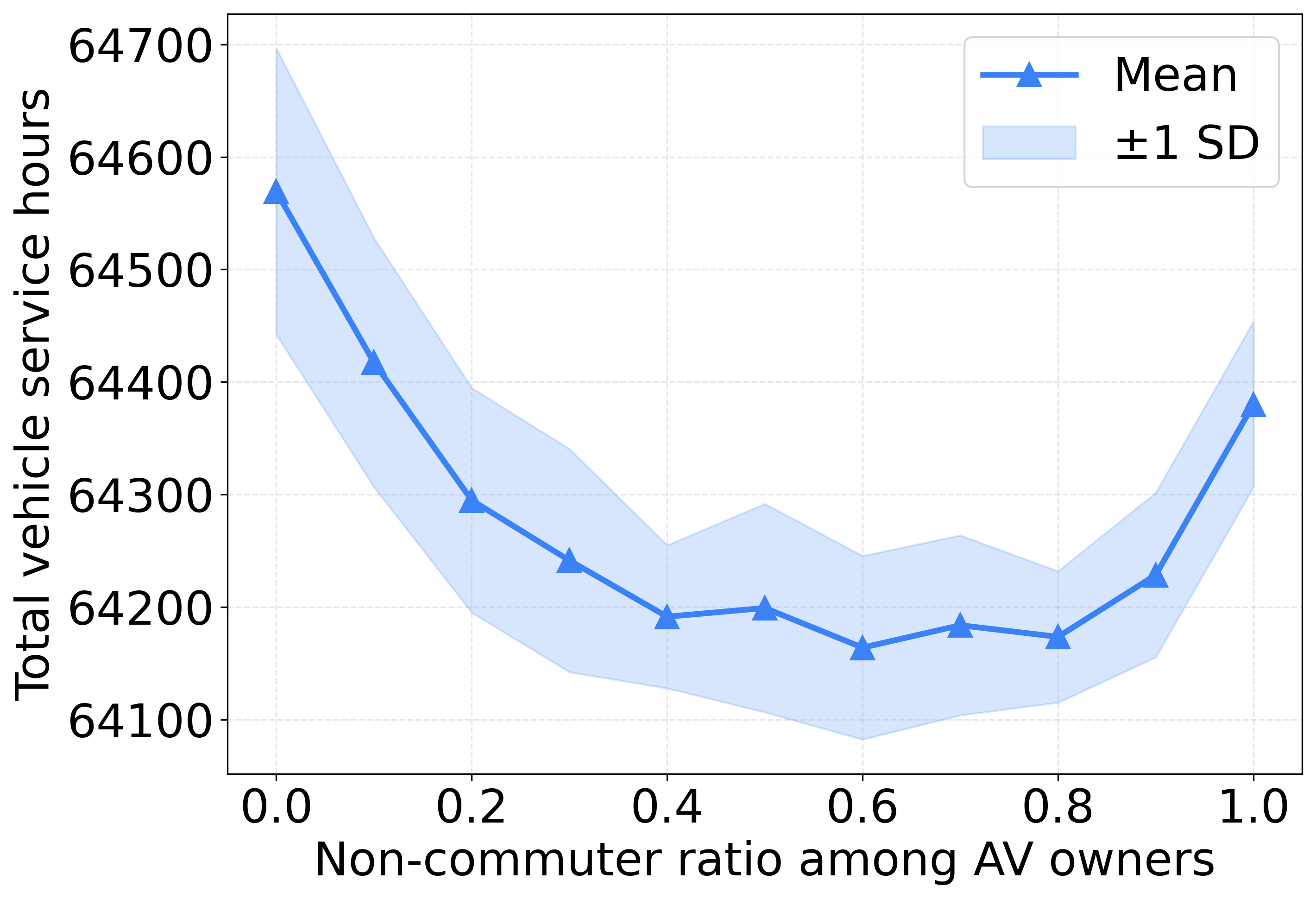}
        \caption{Total vehicle serving hours (h)}
        \label{fig:vehicles}
    \end{subfigure}
    \hspace{0.2cm}
    \begin{subfigure}{0.45\textwidth}
        \centering
        \includegraphics[width=\linewidth]{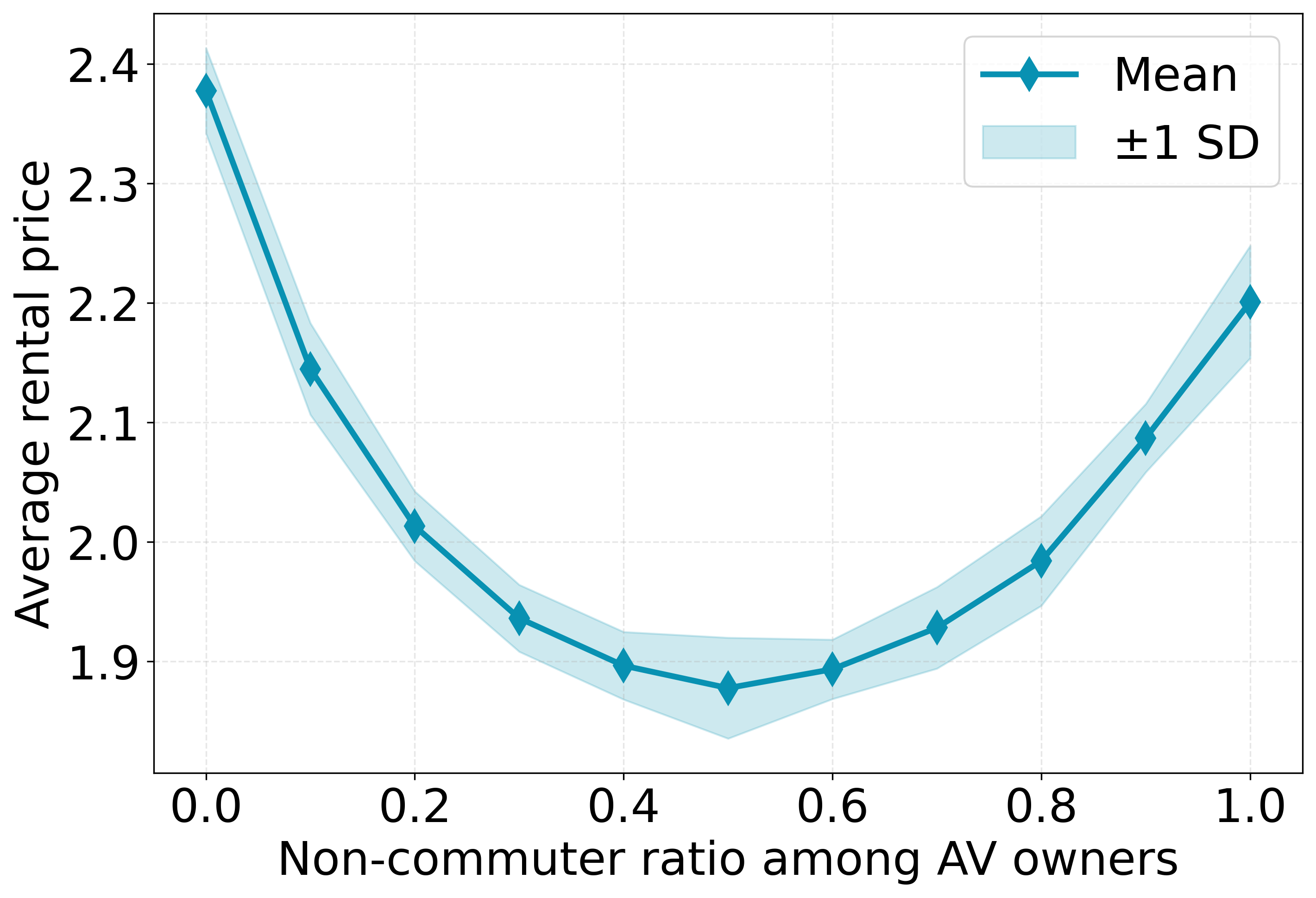}
        \caption{Average rental price (\$)}
        \label{fig:price}
    \end{subfigure}
    \caption{Impact of the non-commuting rate among AV owners}
    \label{fig:sensitivity_analysis_commuters}
\end{figure}

According to Table \ref{tab:classification}, although both commuters and non-commuters often need to use their own vehicles during peak hours, their rental locations differ significantly. By comparing the results in Figures~\ref{fig:sensitivity_analysis_commuters}(c) and (d) with Lemma \ref{lemma_cross} and Proposition \ref{prop_both_zone}, we find that the numerical results validate the theoretical analysis. Specifically, renting vehicles from a more diverse set of regions reduces the required service duration and lowers prices, ultimately leading to lower platform costs. This is illustrated in Figure~\ref{fig:sensitivity_analysis_commuters}(a), where an approximately 50\% mix of commuters and non-commuters among AV owners yields the lowest platform cost.

In addition, we plot the number of vehicle classes $\theta$, defined by spatiotemporal availability. We observe that when commuters and non-commuters are mixed, the diversity of vehicle types increases, indicating a richer spatiotemporal distribution. Comparing Figures~\ref{fig:sensitivity_analysis_commuters}(a) and (b), we find that a more diverse spatiotemporal distribution of private vehicles is associated with a reduction in platform costs. This suggests that the service operator may benefit from selecting drivers with more diverse characteristics in real practice.

\section{Conclusion} \label{sec_concl}

\subsection{Discussion}
The analysis and key components examined in Sections \ref{sec_property} and \ref{sec_num} can also be applied to other sharing economy settings, as long as the system is characterized by the misaligned usage of the same asset by owners and customers. Some sharing economy applications are not spatially relevant; for example, certain digital tasks can be performed remotely (e.g., computational power sharing). In such cases, these settings can be viewed as simplified versions of our problem.

Looking ahead, AI-enabled assets may become scarce yet highly valuable assets, characterized by high acquisition costs but substantial utility. The sharing of robots and other autonomous systems may therefore emerge as an important business model. In this sense, autonomous vehicles can be viewed as a specific class of self-driving robots, and our work provides a forward-looking perspective. The potential sharing of robotic systems would also involve operational constraints similar to those studied in this paper.

\subsection{Conclusion}
This paper investigates the operational feasibility of AV crowdsourcing services under heterogeneous travel patterns. Given private AV owners’ spatiotemporal availability throughout the day, we develop a time-expanded network model to jointly optimize AV rental pricing and vehicle operations. Assuming a uniform distribution of AV owners’ opportunity costs, the resulting capacity planning problem can be formulated as a convex quadratic program, enabling analytical tractability and scalable computation. Combining theoretical analysis, numerical experiments in a two-node network, and a Chicago case study, we summarize the main findings as follows: 
\begin{itemize}
    \item This research establishes the conditions for the feasibility of AV crowdsourcing services. The service rate increases as the travel time gap between AV owners and customers decreases, the reliability buffer for owners' private use is reduced, the customer loss penalty parameter increases, and inter-regional travel times decrease.
    \item 
    Platform will select AV owners with diverse activity patterns. In the case study of Chicago, results show that an approximately 50\% mix of commuting and non-commuting AV owners yields the lowest platform cost.
    \item Chicago data reveals that demand and owner availability exhibit significant spatial heterogeneity across regions. Although many owners tend to travel during demand peak hours, their geographic distributions differ substantially.
    \item With central dispatching, although empty vehicles tend to concentrate in high-demand areas, some peripheral areas may experience good service quality due to lower traveler competition. The airport region exhibits a markedly high service rate. 
\end{itemize}

Future research can extend this work in multiple directions. 
First and foremost, strong assumptions on the opportunity cost, reliability buffer, and other parameters shall be relaxed to further investigate the impact of user heterogeneity and generate additional insights into practice.  Moreover, the case study can be revisited once data on privately owned AVs become available. Last but not least, it has been assumed the travel pattern of AV owners is not affected by the AV crowdsourcing service, which may not be true in reality. Future research can explore how the vehicle sharing revenue could reshape the AV owners' activity patterns.

\newpage
\bibliographystyle{apalike}
\bibliography{references_management}

@article{WU201552,
title = {Electric vehicles’ energy consumption measurement and estimation},
journal = {Transportation Research Part D: Transport and Environment},
volume = {34},
pages = {52-67},
year = {2015},
issn = {1361-9209},
doi = {https://doi.org/10.1016/j.trd.2014.10.007},
url = {https://www.sciencedirect.com/science/article/pii/S1361920914001485},
author = {Xinkai Wu and David Freese and Alfredo Cabrera and William A. Kitch}
}

@article{WANG2021103362,
title = {Aggregate modeling and equilibrium analysis of the crowdsourcing market for autonomous vehicles},
journal = {Transportation Research Part C: Emerging Technologies},
volume = {132},
pages = {103362},
year = {2021},
issn = {0968-090X},
doi = {https://doi.org/10.1016/j.trc.2021.103362},
url = {https://www.sciencedirect.com/science/article/pii/S0968090X21003648},
author = {Xiaoyan Wang and Xi Lin and Meng Li}
}

@article{ZHANG2025103305,
title = {To park or to share your autonomous vehicle?},
journal = {Transportation Research Part B: Methodological},
volume = {200},
pages = {103305},
year = {2025},
issn = {0191-2615},
doi = {https://doi.org/10.1016/j.trb.2025.103305},
url = {https://www.sciencedirect.com/science/article/pii/S0191261525001547},
author = {Zhuoye Zhang and Fangni Zhang and Wei Liu}
}

@article{WANG2024104732,
title = {Competition of ride-hailing platforms in the era of autonomous vehicles: Heavy or light asset?},
journal = {Transportation Research Part C: Emerging Technologies},
volume = {165},
pages = {104732},
year = {2024},
issn = {0968-090X},
doi = {https://doi.org/10.1016/j.trc.2024.104732},
url = {https://www.sciencedirect.com/science/article/pii/S0968090X24002535},
author = {Xiaoyan Wang and Xi Lin and Meng Li and Zhengtian Xu and Ke Zhang}
}

@article{XU2025101533,
title = {Understanding the spatiotemporal dynamics of ride-hailing services: A study of demand and supply patterns using a large-scale driver activity dataset},
journal = {Case Studies on Transport Policy},
volume = {21},
pages = {101533},
year = {2025},
issn = {2213-624X},
doi = {https://doi.org/10.1016/j.cstp.2025.101533},
url = {https://www.sciencedirect.com/science/article/pii/S2213624X25001701},
author = {Shuoyan Xu and Nael Alsaleh and Timur Hamzaev and Eric J. Miller}
}

@article{LUO2024103936,
title = {Exploring competitiveness of taxis to ride-hailing services from a multidimensional spatio-temporal perspective: A case study in Beijing, China},
journal = {Journal of Transport Geography},
volume = {118},
pages = {103936},
year = {2024},
issn = {0966-6923},
doi = {https://doi.org/10.1016/j.jtrangeo.2024.103936},
url = {https://www.sciencedirect.com/science/article/pii/S0966692324001455},
author = {Yihao Luo and Ailing Huang and Zhengbing He and Jiaqi Zeng and Dianhai Wang}
}

@misc{census2020,
  author       = {U.S. Census Bureau},
  title        = {Data Tables for Counties: 2020-2023},
  year         = {2023},
  howpublished = {\url{https://www.census.gov/data/tables/time-series/demo/popest/2020s-counties-total.html}},
  note         = {Accessed: 2023-12-26}
}

@misc{ChicagoPlanningRegions,
  author       = {{City of Chicago}},
  title        = {Boundaries - Planning Regions (Map)},
  howpublished = {\url{https://data.cityofchicago.org/Community-Economic-Development/Boundaries-Planning-Regions-Map/2wek-zf5g}},
  year         = {2025},
  note         = {Accessed: 2026-01-09}
}

@misc{CMAP_TravelSurvey2019,
  author       = {{Chicago Metropolitan Agency for Planning (CMAP)}},
  title        = {2019 Travel Survey — Results and Analysis},
  howpublished = {\url{https://cmap.illinois.gov/data/transportation/travel-survey/#2019-survey-results-and-analysis}},
  year         = {2025},
  note         = {Accessed: 2026-01-09}
}

@article{WANG2023111crowddelivery,
title = {Joint optimization of parcel allocation and crowd routing for crowdsourced last-mile delivery},
journal = {Transportation Research Part B: Methodological},
volume = {171},
pages = {111-135},
year = {2023},
issn = {0191-2615},
doi = {https://doi.org/10.1016/j.trb.2023.03.007},
url = {https://www.sciencedirect.com/science/article/pii/S0191261523000504},
author = {Li Wang and Min Xu and Hu Qin}
}

@article{JIANG2026104505,
title = {An integrated optimization approach for e-order fulfillment using self-owned and crowdsourced delivery},
journal = {Transportation Research Part E: Logistics and Transportation Review},
volume = {205},
pages = {104505},
year = {2026},
issn = {1366-5545},
doi = {https://doi.org/10.1016/j.tre.2025.104505},
url = {https://www.sciencedirect.com/science/article/pii/S1366554525005332},
author = {Dapei Jiang and Xiangyong Li and Wei Yang and Yuxuan Zhao}
}

@article{hu2025multimodal,
  title={Multimodal large-language model empowering next-generation autonomous driving systems},
  author={Hu, Zhiqiang and Xu, Mingxing and Cheng, Qixiu},
  journal={Journal of Intelligent and Connected Vehicles},
  volume={8},
  number={2},
  pages={9210059--1},
  year={2025},
  publisher={TUP}
}

@article{LIN2026104762,
title = {Exploring influential factors of fleet and parking management in shared autonomous vehicle systems: An agent-based simulation framework},
journal = {Transportation Research Part A: Policy and Practice},
volume = {203},
pages = {104762},
year = {2026},
issn = {0965-8564},
doi = {https://doi.org/10.1016/j.tra.2025.104762},
url = {https://www.sciencedirect.com/science/article/pii/S0965856425003957},
author = {Yuqian Lin and Kenan Zhang and Daniel Kondor and Zhan Zhao and Carlo Ratti and Yang Xu}
}

@article{Siddiq2021,
author = {Siddiq, Auyon and Taylor, Terry A.},
title = {Ride-Hailing Platforms: Competition and Autonomous Vehicles},
journal = {Manufacturing \& Service Operations Management},
volume = {24},
number = {3},
pages = {1511-1528},
year = {2022},
doi = {10.1287/msom.2021.1013},
URL = {  https://doi.org/10.1287/msom.2021.1013},
eprint = { https://doi.org/10.1287/msom.2021.1013}
}

@article{mourad2019owning,
  title={Owning or sharing autonomous vehicles: comparing different ownership and usage scenarios},
  author={Mourad, Abood and Puchinger, Jakob and Chu, Chengbin},
  journal={European Transport Research Review},
  volume={11},
  number={1},
  pages={31},
  year={2019},
  publisher={Springer}
}

@article{LI2025103871,
title = {Fleet sizing and pricing for hybrid ownership of shared autonomous vehicles in a multimodal transportation system},
journal = {Transportation Research Part E: Logistics and Transportation Review},
volume = {193},
pages = {103871},
year = {2025},
issn = {1366-5545},
doi = {https://doi.org/10.1016/j.tre.2024.103871},
url = {https://www.sciencedirect.com/science/article/pii/S1366554524004629},
author = {Qing Li and Feixiong Liao and Wei Xu and Hai-Jun Huang}
}

@article{Senlei,
author = {Wang,  Senlei and Correia, Goncalo Homem de Almeida and Lin, Haixiang},
title = {Exploring the Performance of Different On-Demand Transit Services Provided by a Fleet of Shared Automated Vehicles: An Agent-Based Model},
journal = {Journal of Advanced Transportation},
volume = {2019},
number = {1},
pages = {7878042},
doi = {https://doi.org/10.1155/2019/7878042},
url = {https://onlinelibrary.wiley.com/doi/abs/10.1155/2019/7878042},
eprint = {https://onlinelibrary.wiley.com/doi/pdf/10.1155/2019/7878042},
year = {2019}
}

@article{ayetor2023comparing,
  title={Comparing the cost per mile of electric vehicles and internal combustion engine vehicles in Ghana},
  author={Ayetor, Godwin Kafui and Dzebre, Denis Kwame Edem and Mensah, Lena Dzifa and Boahen, Samuel and Amoabeng, Kofi Owura and Tay, Godwin Fabiola Kwaku},
  journal={Transportation Research Record},
  volume={2677},
  number={5},
  pages={682--693},
  year={2023},
  publisher={SAGE Publications Sage CA: Los Angeles, CA}
}

@misc{chicago_tnp_trips_2023,
  title        = {Transportation Network Providers - Trips (2023-2024)},
  author       = {{City of Chicago}},
  year         = {2026},
  howpublished = {\url{https://data.cityofchicago.org/Transportation/Transportation-Network-Providers-Trips-2023-2024-/n26f-ihde/about_data}},
  note         = {Accessed: 2026-01-19}
}

@article{valente2019sharing,
  title={Sharing economy: becoming an Uber driver in a developing country},
  author={Valente, Eduardo and Patrus, Roberto and C{\'o}rdova Guimar{\~a}es, Rosana},
  journal={Revista de Gest{\~a}o},
  volume={26},
  number={2},
  pages={143--160},
  year={2019},
  publisher={Emerald Publishing Limited}
}

@article{baldick2000linear,
  title={Linear supply function equilibrium: Generalizations, application, and limitations},
  author={Baldick, Ross and Grant, Ryan and Kahn, Edward Paul and others},
  year={2000},
  publisher={University of California Energy Institute Berkeley, CA}
}

@article{doi:10.1177/10591478251319683,
author = {Benjamin Legros and Johan SH van Leeuwaarden and Jan C Fransoo},
title ={Managing Reusable Resources With Usage Time Limits},
journal = {Production and Operations Management},
volume = {34},
number = {8},
pages = {2413-2429},
year = {2025},
doi = {10.1177/10591478251319683},
URL = { 
        https://doi.org/10.1177/10591478251319683
},
eprint = { 
        https://doi.org/10.1177/10591478251319683
},
}

@article{doi:10.1111/poms.13883,
author = {Huiqi Guan and Xin Geng and Haresh Gurnani},
title ={Peer‐to‐peer sharing platforms with quality differentiation: Manufacturer's strategic decision under sharing economy},
journal = {Production and Operations Management},
volume = {32},
number = {2},
pages = {485-500},
year = {2023},
doi = {10.1111/poms.13883},
URL = {https://doi.org/10.1111/poms.13883},
eprint = { https://doi.org/10.1111/poms.13883}
}

@article{doi:10.1111/poms.13491,
author = {Zenan Zhou and Xiang Wan},
title ={Does the Sharing Economy Technology Disrupt Incumbents? Exploring the Influences of Mobile Digital Freight Matching Platforms on Road Freight Logistics Firms},
journal = {Production and Operations Management},
volume = {31},
number = {1},
pages = {117-137},
year = {2022},
doi = {10.1111/poms.13491},
URL = { https://doi.org/10.1111/poms.13491},
eprint = { https://doi.org/10.1111/poms.13491}
}

@article{doi:10.1177/10591478251331407,
author = {Yu Zhang and Min Huang and Xiaohang Yue and Lin Tian},
title ={Competitive Manufacturers Product-Sharing Strategies: P2P Sharing or B2C Sharing?},
journal = {Production and Operations Management},
volume = {34},
number = {12},
pages = {4060-4078},
year = {2025},
doi = {10.1177/10591478251331407},
URL = { https://doi.org/10.1177/10591478251331407},
eprint = { https://doi.org/10.1177/10591478251331407}
}

@article{doi:10.1177/10591478251349724,
author = {Yihang Yang and Yimin Yu and Qian Wang and Junming Liu},
title ={Fleet Repositioning for Vehicle Sharing Systems: Asymptotic Optimality of the Balanced Myopic Policy},
journal = {Production and Operations Management},
volume = {35},
number = {2},
pages = {566-585},
year = {2026},
doi = {10.1177/10591478251349724},
URL = { https://doi.org/10.1177/10591478251349724},
eprint = { https://doi.org/10.1177/10591478251349724}
}

@article{doi:10.1177/10591478251404236,
author = {Xiaokun Wu and Shinyi Wu and Zhongju Zhang},
title ={An Economic Analysis of Subscription Sharing of Digital Services},
journal = {Production and Operations Management},
volume = {0},
number = {0},
pages = {10591478251404236},
year = {2025},
doi = {10.1177/10591478251404236},
URL = {  https://doi.org/10.1177/10591478251404236},
eprint = { https://doi.org/10.1177/10591478251404236}
}

@article{doi:10.1177/10591478251326390,
author = {Wenhui Zhou and Yanhong Gan and Weixiang Huang and Pengfei Guo},
title ={Group Service or Individual Service? Differentiated Pricing for a Private or Public Service Provider},
journal = {Production and Operations Management},
volume = {34},
number = {9},
pages = {2775-2792},
year = {2025},
doi = {10.1177/10591478251326390},
URL = {  https://doi.org/10.1177/10591478251326390},
eprint = {   https://doi.org/10.1177/10591478251326390}
}

@article{doi:10.1287/mnsc.1100.1203,
author = {Bassamboo, Achal and Randhawa, Ramandeep S. and Zeevi, Assaf},
title = {Capacity Sizing Under Parameter Uncertainty: Safety Staffing Principles Revisited},
journal = {Management Science},
volume = {56},
number = {10},
pages = {1668-1686},
year = {2010},
doi = {10.1287/mnsc.1100.1203},
URL = {  https://doi.org/10.1287/mnsc.1100.1203},
eprint = { https://doi.org/10.1287/mnsc.1100.1203}
}

@article{doi:10.1287/opre.2019.1916,
author = {Dong, Jing and Ibrahim, Rouba},
title = {Managing Supply in the On-Demand Economy: Flexible Workers, Full-Time Employees, or Both?},
journal = {Operations Research},
volume = {68},
number = {4},
pages = {1238-1264},
year = {2020},
doi = {10.1287/opre.2019.1916},
URL = {  https://doi.org/10.1287/opre.2019.1916},
eprint = {  https://doi.org/10.1287/opre.2019.1916}
}

@article{doi:10.1177/10591478261444827,
author = {Jianyue Wang and Ki Ling Cheung and Albert Y. Ha},
title ={Apparel retail and rental business models and their sustainability implications},
journal = {Production and Operations Management},
volume = {0},
number = {0},
pages = {10591478261444827},
year = {2026},
doi = {10.1177/10591478261444827},
URL = {  https://doi.org/10.1177/10591478261444827},
eprint = {  https://doi.org/10.1177/10591478261444827  }
}

@article{bogenberger2026role,
  title={The role of humanoid robots in future public transport systems},
  author={Bogenberger, Klaus and Niels, Tanja},
  journal={npj Sustainable Mobility and Transport},
  volume={3},
  number={1},
  pages={21},
  year={2026},
  publisher={Nature Publishing Group UK London}
}

@techreport{Stocker2017Shared,
address = {Paris},
author = {Adam Stocker and Susan Shaheen},
copyright = {http://www.econstor.eu/dspace/Nutzungsbedingungen},
language = {eng},
number = {2017-09},
publisher = {Organisation for Economic Co-operation and Development (OECD), International Transport Forum},
title = {Shared automated vehicles: Review of business models},
type = {International Transport Forum Discussion Paper},
url = {https://hdl.handle.net/10419/194044},
year = {2017}
}

@article{Altug03062022,
author = {Mehmet Sekip Altug and Oben Ceryan},
title = {Optimal dynamic allocation of rental and sales inventory for fashion apparel products},
journal = {IISE Transactions},
volume = {54},
number = {6},
pages = {603--617},
year = {2022},
publisher = {Taylor \& Francis},
doi = {10.1080/24725854.2021.1982157},
URL = {  https://www.tandfonline.com/doi/abs/10.1080/24725854.2021.1982157},
eprint = {    https://www.tandfonline.com/doi/pdf/10.1080/24725854.2021.1982157}
}

@article{VOGL2025858,
title = {Examining location factors of coworking spaces in peripheral areas: an empirical study of rural coworking in Germany},
journal = {Facilities},
volume = {43},
number = {11},
pages = {858-880},
year = {2025},
issn = {0263-2772},
doi = {https://doi.org/10.1108/F-09-2024-0125},
url = {https://www.sciencedirect.com/science/article/pii/S0263277225000157},
author = {Thomas Vogl and Grzegorz Micek}
}

@article{chen2024real,
  title={Real-time spatial--intertemporal pricing and relocation in a ride-hailing network: Near-optimal policies and the value of dynamic pricing},
  author={Chen, Qi and Lei, Yanzhe and Jasin, Stefanus},
  journal={Operations Research},
  volume={72},
  number={5},
  pages={2097--2118},
  year={2024},
  publisher={INFORMS}
}

@ARTICLE{9907878,
  author={Wang, Xing and Wang, Ling and Wang, Shengyao and Pan, Jize and Ren, Hao and Zheng, Jie},
  journal={IEEE Transactions on Intelligent Transportation Systems}, 
  title={Recommending-and-Grabbing: A Crowdsourcing-Based Order Allocation Pattern for On-Demand Food Delivery}, 
  year={2023},
  volume={24},
  number={1},
  pages={838-853},
  doi={10.1109/TITS.2022.3209722}
}

@ARTICLE{10609802,
  author={Paparella, Fabio and Hofman, Theo and Salazar, Mauro},
  journal={IEEE Transactions on Intelligent Transportation Systems}, 
  title={Electric Autonomous Mobility-on-Demand: Jointly Optimal Vehicle Design and Fleet Operation}, 
  year={2024},
  volume={25},
  number={11},
  pages={17054-17065},
  doi={10.1109/TITS.2024.3428569}
}

@article{doi:10.1287/trsc.2022.1188,
author = {Xie, Jiaohong and Liu, Yang and Chen, Nan},
title = {Two-Sided Deep Reinforcement Learning for Dynamic Mobility-on-Demand Management with Mixed Autonomy},
journal = {Transportation Science},
volume = {57},
number = {4},
pages = {1019-1046},
year = {2023},
doi = {10.1287/trsc.2022.1188},
URL = {     https://doi.org/10.1287/trsc.2022.1188},
eprint = {   https://doi.org/10.1287/trsc.2022.1188}
}

@article{HALL2026104221,
title = {Ride-hailing and urban transportation: Evidence and policy},
journal = {Regional Science and Urban Economics},
pages = {104221},
year = {2026},
issn = {0166-0462},
doi = {https://doi.org/10.1016/j.regsciurbeco.2026.104221},
url = {https://www.sciencedirect.com/science/article/pii/S0166046226000311},
author = {Jonathan D. Hall}
}

@report{Uber2024AnnualReport,
  author = {{Uber Technologies, Inc.}},
  title = {Annual Report 2024 (Form 10-K)},
  year = {2025},
  institution = {Uber Technologies, Inc.},
  url = {https://www.sec.gov/Archives/edgar/data/1543151/000155278125000101/e25100_uber-ars.pdf},
  note = {Accessed 2026-06-30}
}

@article{Yu2026Robotaxis,
  author = {Yu, Zhen and Wang, Jingyu and Zuo, Ting and Alm, James and Li, Xun and Li, Xue},
  title = {Robotaxis reduce taxi drivers’ income},
  journal = {Humanities and Social Sciences Communications},
  year = {2026},
  volume = {13},
  number = {1},
  pages = {629},
  doi = {10.1057/s41599-026-07366-x},
  url = {https://doi.org/10.1057/s41599-026-07366-x},
  issn = {2662-9992}
}

@article{Othman2022,
  author    = {Othman, Kareem},
  title     = {Exploring the implications of autonomous vehicles: a comprehensive review},
  journal   = {Innovative Infrastructure Solutions},
  year      = {2022},
  volume    = {7},
  number    = {2},
  pages     = {165},
  month     = {Mar},
  doi       = {10.1007/s41062-022-00763-6},
  url       = {https://doi.org/10.1007/s41062-022-00763-6},
  issn      = {2364-4184}
}

@article{doi:10.1111/poms.12672,
author = {Yiwei Chen and Retsef Levi and Cong Shi},
title ={Revenue Management of Reusable Resources with Advanced Reservations},

journal = {Production and Operations Management},
volume = {26},
number = {5},
pages = {836-859},
year = {2017},
doi = {10.1111/poms.12672},
URL = { 
        https://doi.org/10.1111/poms.12672
},
eprint = { 
        https://doi.org/10.1111/poms.12672
}
}

@article{doi:10.1177/00222437241255057,
author = {Ludovic Stourm and Paulo Albuquerque},
title ={Flowers and Bees: Spatial Network Effects in the Adoption of a Sharing-Economy Platform},
journal = {Journal of Marketing Research},
volume = {61},
number = {6},
pages = {1015-1040},
year = {2024},
doi = {10.1177/00222437241255057},
URL = { 
        https://doi.org/10.1177/00222437241255057},
eprint = {  https://doi.org/10.1177/00222437241255057}
}

@article{hawkins2026waymo170miles,
  author    = {Andrew J. Hawkins},
  title     = {Waymo hits 170 million miles while avoiding serious mayhem},
  journal   = {The Verge},
  year      = {2026},
  month     = {March},
  url       = {https://www.theverge.com/transportation/896837/waymo-170-million-miles-safety-crashes-injuries},
  note      = {Accessed: 2026-07-02}
}

\newpage

\appendix
\section{Proof of Proposition \ref{prop1}}\label{app:proof_prop1}

For the ease of interpretation, we keep the same notations of variables used above but redefine them with dimensions; specifically, all primary and auxiliary decision variables are reformulated within a vector space. First and foremost, the primary decision variables are the rental price $P\in \mathbb{R}^{HJ}$ and the vehicle flows $Y\in \mathbb{R}^{KTJ^2}$, where $K=|\mathcal{K}|$. We introduce a temporal expansion matrix $A^0 \in \{0, 1\}^{TJ \times HJ}$ to relate the hourly rental price $P$ to the trip-time dimension price.
Specifically, 
\begin{align*}
    A^0 = \begin{bmatrix}
    \mathbf{B} & & \\
    & \ddots & \\
    & & \mathbf{B}
    \end{bmatrix}
\end{align*}
where each block $\mathbf{B}$ is $\Delta_s / \Delta_t$ vertically stacked identity matrices $I_J$. Correspondingly, we define the private AV supply vector as $N\in \mathbb{R}^{(K-1)TJ}$ and 
rearrange Eq.~\eqref{eq:linear-private-av-supply} as 
\begin{align}
    A^1 A^0 P - N = b^1, \label{eq_A1}
\end{align}
where $A^1 \in \mathbb{R}^{(K-1)TJ \times TJ}$ is a diagonal matrix with elements $A^1_{\theta tj, \theta tj} = M^\theta \Omega^\theta_{hj} / (\overline{\varepsilon}-\underline{\varepsilon})$. The vector $b^1 \in \mathbb{R}^{(K-1)TJ}$ consists of elements $b^1_{\theta tj} = M^\theta \Omega^\theta_{hj} \underline{\varepsilon} / (\overline{\varepsilon}-\underline{\varepsilon})$.

The net supply of private AVs $Z\in \mathbb{R}^{(K-1)TJ}$ previously defined in Eqs.~\eqref{eq:net-private-av-supply-1}-\eqref{eq:net-private-av-supply-t} is then expressed as 
\begin{align}
    A^2  N - Z = 0,
\end{align}
where $A^2 \in \{-1, 0, 1\}^{(K-1)TJ \times (K-1)TJ}$ is a shift mapping with elements given by
\begin{align*}
    A^2_{\theta t j, \theta' t' j'} =
    \begin{cases}
        1, & \text{if } \theta'=\theta, t'=t, j'=j; \\
        -1, & \text{if } \theta'=\theta, t'=t-1, j'=j \text{ and } t > 1; \\
        0, & \text{otherwise.}
    \end{cases}
\end{align*}

Corresponding to the dispatching flows $Y\in\mathbb{R}^{KTJ^2}$, we redefine the arrival flows $V\in\mathbb{R}^{KTJ^2}$ and rewrite Eq.~\eqref{eq:arrival-flow} as
\begin{align}
    A^3Y - V = 0, 
\end{align}
where $A^3\in \{0,1\}^{KTJ^2 \times KTJ^2}$ is a binary mapping with each element $A^3_{k\tau(t)ij, ktij}=1$ and zero otherwise. 

Next, we define the vector form of empty vehicle flows $E\in\mathbb{R}^{KTJ}$ and reorganize their class-specific dynamics Eqs.~\eqref{eq:empty-flow-dynamics-private-av} and \eqref{eq:empty-flow-dynamics-platform-av} as 
\begin{align}
    \hat{A}^2 E - A^\text{4d}V + A^\text{4o}Y - A^5Z = 0, 
\end{align}
\noindent The shift mapping $\hat{A}^2 \in \{-1, 0, 1\}^{KTJ \times KTJ}$ is an expansion of $A^2$ to accommodate all $K=|\mathcal{K}|$ vehicle classes. Its elements $\hat{A}^2_{ktj, k't'j'}$ are $1$ if $ktj=k't'j'$, and $-1$ if $k=k', t'=t-1, j=j'$ for $t>1$. $A^\text{4o},A^\text{4d}\in \{0,1\}^{KTJ \times KTJ^2}$ aggregate flows by origin and destination, respectively. Specifically, $A^\text{4o}_{ktj, ktj'j''} = 1$ if $j=j'$ (summing flows leaving $j$), and $A^\text{4d}_{ktj, ktj'j''} = 1$ if $j=j''$ (summing flows arriving at $j$). $A^5 \in \{0, 1\}^{KTJ \times (K-1)TJ}$ is a block-structured matrix defined as:
\begin{align*}
    A^5 = 
    \begin{bmatrix} 
        I_{(K-1)TJ} \\ \mathbf{0}_{TJ \times (K-1)TJ} 
    \end{bmatrix},
\end{align*}
where $I$ is the identity matrix, effectively mapping the private AV net supply $Z$ to the first $K-1$ vehicle classes while assigning zero external supply to the platform-owned AV class (the $K$-th class).

The feasibility constraint Eq.~\eqref{eq:feasible-dispatch-flow} is rearranged as the following linear inequality constraint:
\begin{align}
    A^\text{4o}Y - E \leq 0.
\end{align}

The same-settlement for private AVs Eq.~\eqref{eq:private-av-settle} and the initial and terminal conservation constraints for private AVs Eqs.~\eqref{eq:private-av-conservation-1}-\eqref{eq:private-av-conservation-T} are written as follows:
\begin{align}
    & A^6 Y = 0, \\
    & A^7 E + A^\text{8d}V - A^\text{8o} Y - A^7 N = 0
\end{align}
where $A^6 \in \{0, 1\}^{KTJ^2 \times KTJ^2}$ is a diagonal selection matrix. Its diagonal element corresponding to the index $(k, t, i, j)$ is $1$ if $t > T - \pi_{ij}$ and $k \in \{1, \dots, K-1\}$, and $0$ otherwise.  $A^7 \in \{0, 1\}^{2(K-1)J \times (K-1)TJ}$ is a temporal boundary selector. Its elements are defined as follows: for a row indexed by a triplet $(\text{tag}, \theta, j)$ where $\text{tag} \in \{1, T\}$, the entry at column $(\theta', t', j')$ is $1$ iff $\theta' = \theta, \ j' = j, \  \text{and} \ t' = \text{tag}$; otherwise, the entry is $0$. $A^{8\text{o}}, A^{8\text{d}} \in \{0,1\}^{2(K-1)J \times (K-1)TJ^2}$ are the boundary flow mapping matrices. For any row indexed by $(\text{tag}, \theta, j)$ with $\text{tag} \in \{1, T\}$ and column indexed by $(\theta', t', j', j'')$, the element of $A^{8\text{o}}$ is $1$ iff $\text{tag} = T$, $\theta' = \theta$, $t' = T$, and $j' = j$, and the element of $A^{8\text{d}}$ is $1$ iff $\text{tag} = T$, $\theta' = \theta$, $t' = T$, and $j'' = j$. Otherwise, all elements are zero.

The fleet size constraint for platform-owned AVs Eq.~\eqref{eq:platform-av-fleet-size} are rewritten as follows:
\begin{align}
    A^{9\text{e}} E + A^{9\text{y}} Y = b^2,
\end{align}
where $A^{9\text{e}} \in \{0,1\}^{1 \times KTJ}$ and $A^{9\text{y}} \in \{0,1\}^{1 \times KTJ^2}$ extract the terminal states of platform-owned AVs ($k=K$), and $b^2 = N^\nu$. Specifically, $A^{9\text{e}}_{ktj} = 1$ iff $k=K$ and $t=T$, and $A^{9\text{y}}_{ktij} = 1$ iff $k=K$ and $T-\pi_{ij}\le t < T$, accounting for en-route platform-owned vehicles. 

With all above, we are able to express Constraints \eqref{eq:private-av-supply}-\eqref{eq:en-route-platform-AV} in a compact matrix form of linear constraints. Moreover, we can redefine the demand loss by introducing $W\in \mathbb{R}^{TJ^2}$ as an auxiliary variable. Accordingly, Eq.~\eqref{eq:demand-loss} is equivalent the following two inequality constraints:
\begin{align}
    &-W - A^{10} Y \leq - D,\\ 
    &-W \leq 0,\label{eq_W}
\end{align}
where $A^{10} \in \{0,1\}^{TJ^2 \times KTJ^2}$ is a class-aggregation matrix. Its element $(A^{10})_{tij, kt'i'j'}$ is $1$ iff $t=t', i=i', j=j'$ for all $k \in \mathcal{K}$, and $0$ otherwise; and $D\in \mathbb{R}^{TJ^2}$ is the rearranged demand vector.

We can now explicitly write out the equality constraints $A^\text{eq}x = b^\text{eq}$, which are structured as follows:
    \begin{align*}
    A^\text{eq} = 
        \begin{bmatrix}
            A^1 A^0 & 0 & -I & 0 & 0 & 0 & 0 \\ 
            0 & 0 & A^2 & -I & 0 & 0 & 0 \\ 
            0 & A^3 & 0 & 0 & -I & 0 & 0 \\ 
            0 & A^\text{4o} & 0 & -A^5 & -A^\text{4d} & \hat{A}^2 & 0 \\ 
            0 & A^6 & 0 & 0 & 0 & 0 & 0 \\ 
            0 & -A^{8\text{o}} & -A^7 & 0 & A^{8\text{d}} & A^7 & 0 \\ 
            0 & A^{9\text{y}} & 0 & 0 & 0 & A^{9\text{e}} & 0    
        \end{bmatrix}, \quad
    b^\text{eq} = \begin{bmatrix} b^1 \\ 0 \\ 0 \\ 0 \\ 0 \\ 0 \\ b^2 \end{bmatrix};
    \end{align*}
    and the inequality constraints $A^\text{neq}x \leq b^\text{neq}$ represent the capacity feasibility and demand loss:
    \begin{align*}
    A^\text{neq} = \begin{bmatrix}
    0 & A^\text{4o} & 0 & 0 & 0 & -I & 0  \\        
    0 & -A^{10} & 0 & 0 & 0 & 0 & -I \\ 
    -I & 0 & 0 & 0 & 0 & 0 & 0 \\
    0 & -I & 0 & 0 & 0 & 0 & 0 \\
    0 & 0 & -I & 0 & 0 & 0 & 0 \\
    0 & 0 & 0 & 0 & -I & 0 & 0 \\
    0 & 0 & 0 & 0 & 0 & -I & 0 \\
    0 & 0 & 0 & 0 & 0 & 0 & -I 
    \end{bmatrix}, \quad
    b^\text{neq} = \begin{bmatrix} 0 \\ -D \\ 0 \\ 0 \\ 0 \\ 0 \\ 0 \end{bmatrix}.
    \end{align*}
    where the matrices $A^1 \dots A^{10}$, vectors $b^1, b^2$ and demand $D$ are given in constraints~\eqref{eq_A1}--\eqref{eq_W}.

In the optimization problem (\ref{eq:planning-qp}), the matrix $Q$ and vector $c$ in the objective function are specified as 
\begin{align}
    Q &= \left[
    \begin{array}{llll}
         0_{d(N), d(P\wedge Y)} & I_{d(N)} & 0_{d(N), d(x) -d(P\wedge Y\wedge N)} \\
         0_{d(x)-d(N), d(P\wedge Y)} & 0_{d(x)-d(N), d(N)} & 0_{d(x)-d(N), d(x) -d(P\wedge Y\wedge N)}
    \end{array}\right],\\
    c &= \text{col}(
    \begin{array}{llll}
    0_{d(P)},& c_0\Pi,& 0_{d(x)-d(P\wedge Y\wedge W)},& c_\rho
    \end{array}),
\end{align}
where $d(\cdot)$ denotes the vector dimension, $\wedge$ refers to the concatenation of vectors, and the travel and penalty matrices are both rearranged into vectors $\Pi\in\mathbb{R}_+^{TJ^2}, c_\rho\in\mathbb{R}_+^{TJ^2}$.

\end{document}